\definecolor{darkblue}{rgb}{0, 0, 0.5}
\renewcommand*\backref[1]{\ifx#1\relax \else (Cited on #1) \fi}
\newtheorem{theorem}{Theorem}
\newtheorem*{theorem*}{Theorem}
\newtheorem{proposition}{Proposition}
\newtheorem{lemma}{Lemma}
\newtheorem{observation}{Observation}
\theoremstyle{definition}
\newtheorem{example}{Example}
\newtheorem{definition}{Definition}
\theoremstyle{definition}
\newenvironment{examplecon}[1]
  {\par\bigskip\noindent\textbf{Example #1 (continued). }}
  {\par}
\newcommand{\Comments}{0}
\definecolor{gray}{gray}{0.5}
\definecolor{darkgreen}{rgb}{0,0.5,0}
\definecolor{myorange}{rgb}{0.2, 0.3, 0.0}
\newcommand{\mynote}[2]{\ifnum\Comments=1\textcolor{#1}{#2}\fi}
\newcommand{\mytodo}[2]{\ifnum\Comments=1\todo[linecolor=#1!80!black,backgroundcolor=#1,bordercolor=#1!80!black]{#2}\fi}
\newcommand{\raf}[1]{\mynote{darkgreen}{{[RF: #1]}}}
\newcommand{\bo}[1]{{\mynote{blue}{[Bo: #1]}}}
\newcommand{\mary}[1]{\mynote{purple}{{[MM: #1]}}}
\newcommand{\maryt}[1]{\mytodo{purple!20!white}{MM: #1}}
\newcommand{\Var}{\mathop{\mathrm{Var}}}
\newcommand{\D}{\mathcal{D}}
\newcommand{\E}{\mathbb{E}}
\renewcommand{\P}{\mathcal{P}}
\newcommand{\Q}{\mathcal{Q}}
\newcommand{\R}{\mathcal{R}}
\newcommand{\T}{\mathcal{T}}
\newcommand{\X}{\mathcal{X}}
\newcommand{\ones}{\mathbbm{1}}
\def\reals{\mathbb{R}}
\newcommand{\cS}{\mathcal{S}}
\newcommand{\cM}{\mathcal{M}}
\newcommand{\Lorder}{\mathcal{A}_d^{\text{lin}}}
\newcommand{\Aquery}{\mathcal{A}_d}
\DeclareMathOperator{\arcosh}{arcosh}
\newcommand{\orderc}{\mathsf{order\text{-}c}}
\newcommand{\queryc}{\mathsf{query\text{-}c}}
\newcommand{\agentc}{\mathsf{agent\text{-}c}}
\newcommand{\agent}{\mathsf{agent}}
\newcommand{\gset}{G_{\text{DAG}}}
\newcommand{\error}{\mathsf{error}}
\newcommand{\inter}{\mathop{\mathrm{inter}}}
\newcommand{\iter}{\mathop{\mathrm{iter}}}
\newcommand{\diff}{\mathop{\mathrm{diff}}}
\newcommand{\supp}{\mathop{\mathrm{supp}}}
\newcommand{\DAGset}{\R_{\text{det}}}
\title{Robust forecast aggregation via additional queries}
\author{Rafael Frongillo}
\author[1]{Mary Monroe}
\author[2]{Eric Neyman}
\author[1]{Bo Waggoner}
\affil[1]{University of Colorado Boulder}
\affil[2]{Alignment Research Center}
\date{}
\begin{document}
\maketitle

\begin{abstract}
    We study the problem of robust forecast aggregation: combining expert forecasts with provable accuracy guarantees compared to the best possible aggregation of the underlying information.
    Prior work shows strong impossibility results, e.g. that even under natural assumptions, no aggregation of the experts' individual forecasts can outperform simply following a random expert~\citep{neyman2022smarter}.

    In this paper, we introduce a more general framework that allows the principal to elicit richer information from experts through structured queries.
    Our framework ensures that experts will truthfully report their underlying beliefs, and also enables us to define notions of complexity over the difficulty of asking these queries. 
    Under a general model of independent but overlapping expert signals, we show that optimal aggregation is achievable in the worst case with each complexity measure bounded above by the number of agents $n$.
    We further establish tight tradeoffs between accuracy and query complexity: aggregation error decreases linearly with the number of queries, and vanishes when the ``order of reasoning'' and number of agents relevant to a query is $\omega(\sqrt{n})$.
    These results demonstrate that modest extensions to the space of expert queries dramatically strengthen the power of robust forecast aggregation.
    We therefore expect that our new query framework will open up a fruitful line of research in this area.
\end{abstract}

\section{Introduction} \label{sec:intro}

Aggregating distinct sources of information is a fundamental problem in statistics, machine learning, forecasting, and decision making.
We consider the problem of a principal aggregating forecasts from a set of experts into a single, provably-accurate forecast.
Traditional theoretical approaches often adopt specific parametric models of how agents acquire information, such as Gaussian signals with known covariance structures~\citep{ugander2015wisdom,lichtendahl2013wisdom}.
For example, works have concluded that common aggregation techniques like linear or logarithmic pooling are optimal for certain information structures; intuitively, linear pooling is better for aggregating redundant information, whereas logarithmic pooling is better for independent sources~\citep{laisney1985statistical,genest1984aggregating,goman2018efficient,lochert2010probabilistic,ranjan2010combining}.
Unfortunately, these works all depend on narrow assumptions about the information structure. These assumptions may fail in practice, or it may not be possible to know which of these assumptions holds.


In the theoretical computer science literature, recent works of \cite{arieli2018robust} and \cite{neyman2022smarter} offer an alternative: \emph{robust aggregation}.
They instead develop worst-case bounds on the performance of a given aggregation rule, under some weak conditions on the information structure, such as informational substitutes.
These results are exciting as they do not rely on narrow assumptions about how information is obtained.
Unfortunately, the results are also quite pessimistic: in the worst case, aggregating forecasts from $n$ agents is only a constant factor better than randomly picking one of their forecasts.
Moreover, these lower bounds are not driven by pathological examples, but appear to be an innate limitation of aggregation itself.
It may therefore seem impossible to design a general-purpose aggregation rule with favorable guarantees.

The key idea of this paper is a way around this impossibility: \emph{change the types of forecasts} we collect from agents.
Instead of simply asking agents to predict the (expected value of the) quantity of interest, what if we ask them to predict its higher moments, or even predict what other agents will say?
We illustrate with the following striking example.

\begin{example} \label{ex:common}
There are $n$ agents and $n+1$ signals $X_1,\dots,X_{n+1}$.
Each $X_j \sim \text{Normal}(0,1)$ independently, and the principal aims to estimate $Y = \sum_{j=1}^{n+1} X_j$.
Signal $X_{n+1}$ is a ``common signal'' observed by all agents.
Formally, each agent $i$ observes the pair $\cS_i = (X_i, X_{n+1})$, and submits their posterior expectation $Y_i = X_i + X_{n+1}$.
According to a lower bound by \cite{neymanthesis}, there is no aggregation function of the forecasts $(Y_1,\dots,Y_n)$ guaranteeing a nonvanishing improvement over the prior estimate of $0$.
In summary, this ``shared common signal'' setting is quite difficult under the standard approach.

But it turns out that by asking one additional question, we can optimally aggregate all information.
In addition to collecting $(Y_1,\dots,Y_n)$, we can simply ask any agent, say agent $1$, to predict the report of another agent, say agent $2$.
Then agent $1$'s answer is
\begin{align*}
    Q \coloneqq \E\left[Y_2 \mid S_1 \right]  
	=	\E\left[X_2 + X_{n+1} \mid X_1, X_{n+1} \right] 
	=	X_{n+1};
\end{align*}
that is, agent $1$'s answer to this question reveals the common signal.
So given one extra higher-order question, we can calculate all of the agent's private signals.
Namely, $X_{n+1} = Q$ and $X_i = Y_i - X_{n+1} = Y_i - Q$.
Then we can compute the optimal aggregation, $A^* = \sum_{j=1}^{n+1} X_j = Y$, with zero error.
\end{example}

The example above illustrates that, even with a single additional question, optimal aggregation is possible.
As we demonstrate in our results more generally, these additional forecasts, or ``queries'', can drastically improve the quality of the aggregation.
Motivated by this example, our broad goal is to study the following question.
\begin{quote}
  \textbf{Central question:}
    \emph{What is the optimal worst-case aggregation error given complexity constraints on the queries?}
\end{quote}

To define complexity constraints, we introduce a framework of \emph{DAG-elicitable query sets}.
Aside from constraining queries to ``reasonable'' ones, this framework has the nice benefit of guaranteeing that even strategic agents will answer their queries truthfully in equilibrium.
We introduce three notions of complexity for a DAG-elicitable query set to measure its usefulness in practice:
\begin{enumerate}\setlength{\itemsep}{0pt}
    \item \emph{query complexity}, the size of the query set;
    \item \emph{order complexity}, the ``longest chain of reasoning" an agent must follow in order to answer the query; and
    \item \emph{agent complexity}, the maximum number of agents whose beliefs are relevant to any query.
\end{enumerate}
Intuitively, while higher query complexity places more of a burden on the principal in collecting information from a large set of agents, higher order or agent complexity places a higher cognitive burden on experts in calculating their responses.

\subsection{Results}

In this paper, we study the central question above in the highly flexible yet simple ``partial information'' model due to \cite{satopaa2016modeling} where agents each observe a large subset of independent signals.
Specifically, each nonempty subset of agents $T \subseteq [n]$ observes a distinct random variable $X_T$, and the principal aims to estimate the variable $Y = \sum_T X_T$.

Our first result in \S~\ref{sec:optimal} is that one can achieve optimal error with DAG-elicitable queries.
Perhaps surprisingly, we find that optimal error is upper bounded by query, agent, and order complexity $n$; that is, we only need one query per expert to fully aggregate information.
A natural next question we address in \S~\ref{sec:approximate} is whether good performance is achievable under further restrictions on complexity measures.
First we study settings where query complexity is capped by $d \leq n$. 
We completely characterize achievable error in this class, where the error is a worst-case ratio that measures the amount of randomness the aggregation scheme fails to account for in $Y$.

\begin{theorem*}[Informal]
	If the principal is restricted to query complexity $d \leq n$, the best achievable error is $1 - \frac{d}{n}$. 
\end{theorem*}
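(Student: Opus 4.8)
The plan is to establish the value $1-d/n$ from both sides: a converse showing that no $d$-query scheme can beat $1-d/n$ on every information structure, and a matching scheme.

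\emph{Converse.} I will use a single hard information structure: take the singleton signals $X_{\{1\}},\dots,X_{\{n\}}$ to be i.i.d.\ $\text{Normal}(0,1)$ and all other $X_T$ identically zero, so $Y=\sum_{i=1}^n X_{\{i\}}$ and agent $i$'s only informative observation is $X_{\{i\}}$. A query set of query complexity $d$ has $d$ queries, each answered by a single agent, so in the truthful equilibrium the framework guarantees, all of the answers are jointly a function of $(\cS_i : i\in R)$ for the set $R$ of responding agents, with $|R|\le d$; in this structure that is a function of $(X_{\{i\}} : i\in R)$. Hence the principal's output has mean-squared error at least $\E[(Y-\E[Y\mid X_{\{i\}}: i\in R])^2]=\sum_{i\notin R}\Var(X_{\{i\}})\ge n-d$, by orthogonality of conditional expectation; and since pooling all agents' information recovers every $X_{\{i\}}$ we have $A^*=Y$, so the reducible variance equals $\Var(Y)=n$ and the error is at least $1-d/n$. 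The argument is unaffected by adaptive choice of respondents or by randomization, since $|R|\le d$ always.

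\emph{Achievability.} I will use the natural degree-$d$ truncation of the optimal scheme of \S~\ref{sec:optimal}, made symmetric by randomization. Draw a uniformly random permutation $\pi$ of $[n]$; for $k=1,\dots,d$, ask agent $\pi(k)$ to report $\E\left[\,Y-(r_1+\dots+r_{k-1})\mid \cS_{\pi(k)}\right]$, where $r_1,\dots,r_{k-1}$ denote the (treated-as-random) reports of the earlier queries; this is DAG-elicitable for exactly the reason given in \S~\ref{sec:optimal} (the DAG has edges $j\to k$ for $j<k$), and the query complexity is $d$. A short induction, using that agent $i$'s information is exactly $\{X_T : i\in T\}$, shows $r_1+\dots+r_k=\E[Y\mid \cS_{\pi(1)},\dots,\cS_{\pi(k)}]=\sum_{T:\,T\cap\{\pi(1),\dots,\pi(k)\}\neq\emptyset}X_T$, so the principal's estimate $\hat Y=r_1+\dots+r_d$ equals $\E[Y\mid \cS_i:i\in S]$ for the uniformly random $d$-subset $S=\{\pi(1),\dots,\pi(d)\}$, and the residual is $Y-\hat Y=\sum_{\emptyset\neq T:\,T\cap S=\emptyset}X_T$.

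\emph{Computing the error, and the main obstacle.} It then remains to bound the expected squared residual over $\pi$ and over the (independent, mean-zero) $X_T$. Conditioning on $S$ kills cross terms, and averaging over $S$ gives $\E[(Y-\hat Y)^2]=\sum_{\emptyset\neq T}\Var(X_T)\,\Pr[S\cap T=\emptyset]$, where $\Pr[S\cap T=\emptyset]=\binom{n-|T|}{d}\big/\binom{n}{d}=\prod_{j=0}^{|T|-1}\frac{n-d-j}{n-j}\le\frac{n-d}{n}$, with equality exactly when $|T|=1$. Hence $\E[(Y-\hat Y)^2]\le(1-d/n)\sum_{\emptyset\neq T}\Var(X_T)=(1-d/n)\Var(Y)$, i.e.\ worst-case error at most $1-d/n$; and the singleton structure from the converse attains this (its residual has expected square exactly $(1-d/n)\Var(Y)$), so the bound is tight. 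I expect the main obstacle to lie in this achievability half: a fixed, non-randomized truncation is defeated by an adversary who concentrates all variance on the un-peeled agents, so the averaging over $\pi$ is essential, and making this rigorous requires care about how the error notion treats randomized schemes (or a minimax/de-randomization argument to produce a deterministic rule), together with confirming that truncating and relabeling the \S~\ref{sec:optimal} construction keeps every issued query DAG-elicitable.
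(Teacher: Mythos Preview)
Your proposal is correct and follows essentially the same approach as the paper: the converse uses the singleton hard instance (observing that with $d$ queries at most $d$ agents are touched, so the unqueried singletons contribute $n-d$ variance), and the achievability uses the randomized difference-query scheme on a uniformly random $d$-subset of agents, with the same $\Pr[S\cap T=\emptyset]=\binom{n-|T|}{d}/\binom{n}{d}\le 1-d/n$ computation. The only cosmetic differences are that the paper invokes the minimax inequality to reduce to deterministic rules before applying the singleton instance, and samples a $d$-subset directly rather than the first $d$ entries of a random permutation.
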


If we instead restrict agent and order complexity by $d < n$ (e.g. in settings where agents are not able to conduct high-order reasoning about many others), the same linear upper bound is inherited. 
However, with additional queries, we can characterize even tighter bounds on error.
Here we restrict to a class of ``linear'' aggregation rules, which are natural for our setting given the principal's goal is to estimate the sum $\sum_T X_T$.

\begin{theorem*}[Informal]
    If the principal is restricted to agent and order complexity $d < n$, the best achievable error under linear aggregation rules is:
    \begin{enumerate}
        \item $1 - \Theta\left(\frac{d^2}{n}\right)$ if $d = o(\sqrt{n})$;
        \item $O\left(e^{\frac{-4d}{\sqrt{n}}}\right)$ if $d = \omega(\sqrt{n})$.
    \end{enumerate}
\end{theorem*}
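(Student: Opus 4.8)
The plan is to reduce the statement to a classical polynomial approximation problem and then read off the answer from the behavior of Chebyshev polynomials near the point $1$.

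\textbf{Step 1 (reduction to polynomial approximation).} In the partial‑information model agent $i$ observes $\{X_T : i\in T\}$, so conditioning on $\cS_i$ kills every $X_T$ with $i\notin T$; iterating this along a chain of distinct agents $i_1,\dots,i_k$ gives $\E[\cdots\E[Y\mid\cS_{i_k}]\cdots\mid\cS_{i_1}]=\sum_{T\supseteq S}X_T$ with $S=\{i_1,\dots,i_k\}$; call this quantity $f_S$. Such a query has order and agent complexity $|S|$, so under a bound $d$ on both (and with query complexity unbounded) a linear aggregator is exactly $A=\sum_{1\le|S|\le d}c_S f_S=\sum_T w_T X_T$ with $w_T=\sum_{\emptyset\ne S\subseteq T,\,|S|\le d}c_S$. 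Since $\Var(Y-A)=\sum_T(1-w_T)^2\Var(X_T)$ and $\Var(Y)=\sum_T\Var(X_T)$, the adversary puts all variance on one subset, so the worst‑case error of $A$ is $\max_{\emptyset\ne T}(1-w_T)^2$. Averaging the coefficients over permutations of $[n]$ only decreases this (convexity of $x\mapsto(1-x)^2$), so we may assume $c_S$ depends only on $|S|$; then $w_T=R(|T|)$ for $R(t)=\sum_{j=1}^d\binom{t}{j}a_j$, and these $R$ range over all degree‑$\le d$ polynomials with $R(0)=0$. Writing $Q=1-R$, the optimal error equals $\varepsilon_{n,d}^2$ where $\varepsilon_{n,d}\coloneqq\min\{\max_{1\le t\le n}|Q(t)| : \deg Q\le d,\ Q(0)=1\}$.

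\textbf{Step 2 (upper bounds).} Take the rescaled, renormalized Chebyshev polynomial $Q(t)=(-1)^d\,T_d\!\big(\tfrac{2t-n-1}{n-1}\big)\big/T_d\!\big(\tfrac{n+1}{n-1}\big)$, which satisfies $Q(0)=1$ and $|Q(t)|\le 1/T_d\!\big(\tfrac{n+1}{n-1}\big)$ on $[1,n]$, hence on $\{1,\dots,n\}$. Thus $\varepsilon_{n,d}\le 1/T_d\!\big(\tfrac{n+1}{n-1}\big)$. Using $T_d(1+\delta)=\cosh\!\big(d\,\arcosh(1+\delta)\big)$ and $\arcosh(1+\delta)=\sqrt{2\delta}\,(1+O(\delta))$ with $\delta=\tfrac{2}{n-1}$ gives $T_d\!\big(\tfrac{n+1}{n-1}\big)=\cosh\!\big(\tfrac{2d}{\sqrt n}(1+o(1))\big)$. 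For $d=\omega(\sqrt n)$ this is $(1+o(1))\tfrac12 e^{2d/\sqrt n}$, so $\varepsilon_{n,d}^2=O(e^{-4d/\sqrt n})$, which is Part 2. For $d=o(\sqrt n)$ it is $1+(1+o(1))\tfrac{2d^2}{n}$, so $\varepsilon_{n,d}^2\le 1-\Omega(d^2/n)$, the upper half of Part 1.

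\textbf{Step 3 (lower bound for Part 1, and the main obstacle).} I would prove $\varepsilon_{n,d}\ge(1-o(1))\,T_d\!\big(\tfrac{n+1}{n-1}\big)^{-1}$, giving $\varepsilon_{n,d}^2\ge 1-O(d^2/n)$, by discretizing the dual certificate of the continuous problem. For any admissible $Q$ and any $d+1$ nodes $\eta_0,\dots,\eta_d\in\{1,\dots,n\}$, Lagrange interpolation gives $1=Q(0)=\sum_j Q(\eta_j)\ell_j(0)$, hence $\varepsilon_{n,d}\ge\big(\min_{\text{nodes}}\sum_j|\ell_j(0)|\big)^{-1}$. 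For the exact Chebyshev extrema of $[1,n]$ one has $\sum_j|\ell_j(0)|=T_d\!\big(\tfrac{n+1}{n-1}\big)$; since $d=o(\sqrt n)$ these extrema are pairwise $\Omega(n/d^2)=\omega(1)$ apart, so they can be snapped to integer nodes. The hard part is showing this snapping changes $\sum_j|\ell_j(0)|$ by only a factor $1+o(1)$: the extrema cluster quadratically near the endpoints of $[1,n]$ and $\ell_j(0)$ is a product of $d$ ratios, so a naive per‑factor perturbation bound only reaches $d=o(n^{1/3})$, and covering the full range $d=o(\sqrt n)$ requires exploiting cancellation among the rounding errors — equivalently, a sharp discrete analogue of the Chebyshev extremal property (in the spirit of Ehlich–Zeller / Coppersmith–Rivlin inequalities for polynomials sampled at integers). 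That discretization is where I expect the real work to lie; Steps 1–2 are essentially bookkeeping plus standard Chebyshev estimates.
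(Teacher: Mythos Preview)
Your Steps 1 and 2 match the paper almost verbatim: the same reduction to $\varepsilon_{n,d}=\min\{\max_{t\in[n]}|Q(t)|:\deg Q\le d,\ Q(0)=1\}$ via intersection queries plus symmetrization, and the same rescaled Chebyshev polynomial for the upper bound.

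For Step 3 you are making the lower bound much harder than necessary, and the gap you identify is not where the difficulty actually lies. The paper bypasses node-snapping and Lagrange weights entirely by a direct application of the Markov brothers' inequality. After the affine map to $[-1,1]$, take any degree-$d$ polynomial $p$; for $x\in[-1,1]$ and the nearest grid point $t_j$ (distance $\le\delta/2$, $\delta=2/(n-1)$), the mean value theorem gives
\[
|p(x)|-|p(t_j)|\ \le\ \tfrac{\delta}{2}\max_{[-1,1]}|p'|\ \le\ \tfrac{\delta}{2}\,d^2\max_{[-1,1]}|p|\ =\ \tfrac{d^2}{n-1}\max_{[-1,1]}|p|,
\]
the second inequality being Markov brothers. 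Rearranging yields $\max_{\text{grid}}|p|\ge\bigl(1-\tfrac{d^2}{n-1}\bigr)\max_{[-1,1]}|p|$. Applied to the discrete minimizer, this gives $\varepsilon_{n,d}\ge\bigl(1-\tfrac{d^2}{n-1}\bigr)\big/T_d\bigl(\tfrac{n+1}{n-1}\bigr)$ immediately, with no rounding of extrema and no Ehlich--Zeller or Coppersmith--Rivlin input. For $d=o(\sqrt n)$ the prefactor is $1-o(1)$, which combined with your Step~2 asymptotics gives $\varepsilon_{n,d}^2=1-\Theta(d^2/n)$.

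So the ``real work'' you anticipate does not exist: once you replace your interpolation-duality plan with the one-line Markov brothers argument, Step~3 becomes as routine as Steps~1--2. Your dual-certificate route could presumably be pushed through, but it is strictly more laborious and buys nothing extra here.
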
 
\noindent
That is, as long as agents are able to reason about $\omega(\sqrt{n})$ other agents and conduct $\omega(\sqrt{n})$-order reasoning when answering queries, high-quality aggregation is possible in the worst case. 

Our complexity bounds paint a more optimistic picture of worst-case aggregation capabilities than previous work. 
Armed with a framework of additional queries, we find that low error rates can be achieved even when the principal is constrained to asking relatively simple queries. 
Our results can moreover be interpreted from the perspectives of both elicitation and information theory.
We define a reasonable query scheme that is both (1) incentive compatible, meaning the principal is able to capture the correct information from experts; and
(2) expressive enough for the principal to recover most of the experts’ information within a single aggregated value.
By introducing a more general model of the questions that the principal may ask experts, our paper opens the floor for exciting future research to better understand the power and limitations of robust aggregation.

\subsection{Related work}
\cite{arieli2018robust} first introduced the problem of robust forecast aggregation, where two conditionally independent (or Blackwell-ordered) experts submit their posterior beliefs about a binary state space. 
The authors characterize upper and lower bounds on an additive measure of error. 
\cite{levy2022combining} study a similar model when the principal has perfect knowledge of the marginal distribution over expert information, under restrictions on the correlation structure.
\cite{deoliviera21} also study robust aggregation when the output space of aggregation is finite.
We note that each of these papers consider \emph{probabilistic} forecasts over an event space, while we study aggregation of forecasts over real-valued quantities.

\cite{neyman2022smarter} study robust forecast aggregation with multiple experts, and give tight bounds on a multiplicative measure of performance when expert signals satisfy a condition called \emph{projective substitutes.}
Their model restricts forecasts to posterior expectations.
\cite{guo2025algorithmic} construct an FPTAS for optimal robust aggregation (as measured by additive error) when the class of information structures is finite.
When the information structure is continuous, the authors also derive an approximation algorithm for the two-agent, binary state setting of~\cite{arieli2018robust}, where agents report their Bayesian posteriors. 

The areas of peer prediction~\cite{faltingssurvey23} and elicitation complexity~\cite{casalaina-martin2017multi-observation,frongillo2021elicitation} also explore how to elicit information by asking agents additional types of questions.
Other aggregation works have proposed asking additional/different questions, such as multiple guesses~\cite{ugander2015wisdom}, an implicit game~\cite{lichtendahl2013wisdom}, or ``confidence''~\cite{frongillo2015elicitation}, but these works do not give robust aggregation guarantees and instead require narrow assumptions on the information structure.
The one exception to our knowledge is \cite{pan2024robust}, which applies peer prediction to robust aggregation, but in the simple setting of two experts and a binary outcome.

\section{Preliminaries} \label{sec:prelim}
\subsection{Problem setting} 
There are $n$ agents, called experts, with information relevant to a random variable $Y$ of interest.
Throughout, we let $[n] \coloneqq \{1,2,\dots,n\}$.
There is a set $\X \coloneqq (X_1, X_2, \ldots, X_m)$ of $m$ available signals, with the tuple $(Y, X_1, \ldots, X_m)$ jointly distributed from a prior distribution $\D$ that is known to all the experts. 
Throughout this paper, we do not assume that the principal has any knowledge of $\D$.
Each expert $i$ observes a subset $\cS_i \subseteq \X$.
For example, if each expert observes a distinct, single signal, $n = m$ and $\cS_i = (X_i)$; if all information is shared, $\cS_i = \X$. 
The posterior expectation of expert $i$, conditioned on their information, is $Y_i = \E [Y \mid \cS_i]$.

A principal aims to output an \emph{approximator} $A \coloneqq A(\D, X_1, \dots, X_m)$ of $Y$. 
Traditionally, the literature focuses on how a principal is able to aggregate the set of reports $(Y_1,\dots,Y_n)$.
For instance, $A = \frac{1}{n} \sum_{i=1}^n Y_i$ is an example of an approximator.
We consider a more general framework of queries.
Let $\Q = \bigcup_{i=1}^n \Q_i$ denote a finite list of queries, where each $Q \in \Q_i$ is a function $Q: (\cS_i, \D) \to \reals$ that maps expert $i$'s information to some real output.
\bo{Suggest making $\Q = (\Q_1,\dots,\Q_n)$ a finite list of queries}
We overload notation and let $Q = Q(\cS_i, \D)$ also denote the random variable output of a query when the inputs are clear.
We write $\agent(Q) = i$ to denote that the query $Q$ was answered by expert $i$. 
\bo{Would like to improve on the last two sentences, not sure how}

\begin{definition}
    A (deterministic) aggregation rule $R = (\Q, f)$ is a tuple with $\Q$ a list of queries and $f: \reals^{\Q} \to \reals$ a function from the query outputs to a real number.
    A randomized aggregation rule is a distribution $P$, where a random context $Z \sim P$ is drawn first from some sample space $\Omega$, determining the resulting aggregation rule $(\Q_Z, f_Z)$. 
    We denote $P_{\Q}$ as the marginal distribution over query sets, and $P_{\Q,f}$ as the marginal distribution over aggregation rules. 
    An approximator $A$ is \emph{implemented} by a randomized aggregation rule $P$ if $A = f_Z(\Q_Z)$ almost surely.
    \bo{is there anything wrong with just saying distribution over aggregation rules? I'll have to see once I see how we use this later, but it would be nice to simplify here if we can}
\end{definition}

\begin{example}[Random expert] \label{ex:random-experts}
    Let $\Q = \{Y_1, Y_2, \dots, Y_n\}$, i.e. the query set corresponds to each expert's expectation of $Y$; and let $P$ draw an expert uniformly at random, i.e. $P = U([n])$ where $U(S)$ denotes the uniform distribution over set $S$.
    Then for $i \sim P$, let $\Q_i = \{Y_i\}$, and $f_i(\Q_i) = \Q_i$.
    That is, $P$ is a randomized aggregation rule which outputs a uniformly random expert's expectation of $Y$.
    We call $P$ the ``random expert'' aggregation rule.
\end{example}

Note that we only allow ex-ante randomization over aggregation rules, so that e.g. the principal cannot condition on the realization of a subset of queries. 

\paragraph{Error.}
Given an approximator $A$, let $L(A) \coloneqq \E \left[ (A - Y)^2 \right]$ be the expected squared error of $A$.
The optimal aggregation of all information is denoted $A^* = \E[Y \mid X_1,\dots,X_m]$.
Because $A^*$ is the Bayesian posterior expectation conditioned on all available information, the lowest possible expected error is $L(A^*)$.%
\footnote{We assume without loss of generality that each $X_j$ is observed by at least one agent, so that the agents' pooled information determines $A^*$. Otherwise, $L(A^*)$ is not achievable.}
A baseline to compare algorithms against is $L(\mu)$ where $\mu = \E[Y]$ is the prior expectation of $Y$; that is, the variance of $Y$.
Following~\cite{neyman2022smarter}, we define the worst-case error
\mary{insert footnote about additive vs. multiplicative measure of performance}
of an approximator $A$ to be
\begin{equation} \label{eqn:performance}
  \error(A) \coloneqq \max_{\D} \frac{L(A) - L(A^*)}{L(\mu) - L(A^*)} .
\end{equation}
That is, the error expresses the worst-case difference in information expressed by $A$ versus $A^*$, normalized by the relative information in $A^*$.
We have $\error(\mu) = 1$ and $\error(A^*) = 0$.

\begin{examplecon}{\ref{ex:random-experts}}
    Consider the approximator $A_P$ implemented by the random experts aggregation rule $P$.
    Then $L(A_P) = \E_{\D, i \sim P} [(f_i(\Q_i) - Y)^2] = \E_{\D, i \sim U([n])} [(Y_i - Y)^2]$.
    \bo{would be nice to have the example also illustrate $\error$, but then we need a whole information structure.}
\end{examplecon}

\subsection{DAG-elicitable queries}
As defined, an aggregation rule allows the principal to ask any set of queries.
However, in reality experts may not be incentivized to answer these queries truthfully according to their private information.
We are thus motivated to restrict admissable query sets in a way that guarantees agents will answer queries truthfully and accurately.
Moreover, we aim to define a framework that allows us to articulate the \emph{complexity} of an approximator, given there may be many different aggregation rules that implement it.

Following the elicitation literature, we define a query function as elicitable if there is a payment scheme that incentivizes the expert to report the output truthfully.
\begin{definition}
    A query $Q$ is \emph{elicitable} with respect to a set $\R$ of real-valued random variables if there exists a payment function $\pi(r, \R)$ such that $Q(\cS_i, \D) = \arg\max_{r \in \reals} \E_{\D} [\pi(r, \R) \mid \cS_i]$.
\end{definition}

For example, it is well-known that the quadratic score $\pi(r, Y) \coloneqq 1 - (r - Y)^2$ elicits the mean, i.e. $\E[Y \mid \cS_i] = \arg\max_r \E_{\D} [1 - (r - Y)^2 \mid \cS_i]$, so that $Y_i$ is elicitable with respect to $Y$.
But one could take this scheme a step further and elicit a query such as $\E[Y_i \mid \cS_j]$, or expert $j$'s expectation of $i$'s expectation of $Y$, with the payment $\pi(r, Y_i) \coloneqq 1 - (r - Y_i)^2$. 
Then, since expert $i$ is strictly incentivized to report truthfully, so is expert $j$. 
We can thus generalize elicitability of a query set beyond simply scoring reports against $Y$ as follows. 

\begin{definition}{(DAG-elicitable)}
    \maryt{write as ``with respect to a class of queries Q,'' then only focus on expectation query class}
    A set of queries $\Q$ is \emph{DAG-elicitable} if we can arrange the queries into a directed acyclic graph $G = (V, E)$ with $V \coloneqq \Q \cup \{ Y \}$ such that 
    \begin{enumerate}
        \item each $Q \in \Q$ is elicitable with respect to the set $N_G(Q)$, where $N_G(Q)$ is the set of out-neighbors of $Q$ (which may include $Y$), and
        \item $Y$ is the unique sink. 
    \end{enumerate}
    If such a graph $G$ exists, we say it DAG-elicits the query set $\Q$.
\end{definition}

We will refer to an aggregation rule $R = (\Q,f)$ as DAG-elicitable if $\Q$ is DAG-elicitable, and a randomized aggregation rule $P$ as DAG-elicitable if each $\Q \in \supp(P_{\Q})$ is DAG-elicitable for $P_{\Q}$ the marginal over query sets. 
While the structure of a DAG-elicitable aggregation rule implies each expert's payment depends on the reports of others, we can treat the rule as a one-shot mechanism and calculate all payments at once. 
The following is then a straightforward observation. 

\begin{observation}
    Consider the Bayesian game induced by a DAG-elicitable aggregation rule and any graph $G$ that DAG-elicits it: each agent has private information $\cS_i$, submits a report $r_i: (\cS_i,\D) \to \reals$, and receives payment $\pi(r_i, N_G(Q))$.
    Then it is a unique Bayes-Nash equilibrium for experts to report answers to each query truthfully, i.e. $r_i = Q(\cS_i, \D)$. 
\end{observation}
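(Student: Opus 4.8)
\emph{Proof plan.}
I would first fix the graph $G$ that DAG-elicits $\Q$ and, for each $Q \in \Q$, a payment function $\pi_Q$ witnessing that $Q$ is elicitable with respect to its out-neighbors $N_G(Q)$; these, $G$, and the prior $\D$ are common knowledge. I would model the game so that each query $Q$ is owned by the agent $\agent(Q)$, who submits a report $r^Q$ depending only on $\cS_{\agent(Q)}$; an agent owning several queries chooses all of their reports, and their payoff is the sum of the $\pi_Q$ over the queries they own, each $\pi_Q$ evaluated with every out-neighbor query replaced by its report and the node $Y$ (if present) by the realized value of $Y$. The fact the argument leans on is that elicitability, as defined, supplies a \emph{unique} maximizer, so ``best response'' pins down a report exactly. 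The plan is then: (a) check that truthful reporting is a Bayes--Nash equilibrium; (b) show, by an induction running from the sink $Y$ outward, that it is the only one (up to almost-sure equality, which is the relevant notion of uniqueness here).

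\emph{Existence.} Suppose all reports are truthful, $r^Q = Q(\cS_{\agent(Q)},\D)$. Then for any $Q$ the arguments of $\pi_Q$ coming from $N_G(Q)$ are exactly the true values of the corresponding random variables --- the truthful reports of the out-neighbor queries, and/or $Y$. By elicitability of $Q$ with respect to $N_G(Q)$, the conditional expectation $\E_\D[\pi_Q(r, N_G(Q)) \mid \cS_{\agent(Q)}]$ is maximized over $r$ precisely at $r = Q(\cS_{\agent(Q)},\D)$. Summing over the queries an agent owns, and using that distinct reports control distinct summands, truthful reporting maximizes every agent's payoff, so it is a Bayes--Nash equilibrium.

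\emph{Uniqueness.} Order $\Q$ by \emph{depth}, where $d(Q)$ is the length of the longest directed path in $G$ from $Q$ to the sink $Y$; this is finite and at least $1$, since in a finite DAG every node reaches a sink and $Y$ is the unique sink. Fix an arbitrary Bayes--Nash equilibrium $\sigma$, and prove by induction on $k \ge 1$ that, under $\sigma$, every query of depth $k$ is reported truthfully almost surely. For $k = 1$ we have $N_G(Q) = \{Y\}$, whose value is supplied by nature, so the only summand of the owner's payoff depending on $r^Q$ is $\pi_Q(r^Q, Y)$, which by elicitability is uniquely maximized in conditional expectation given $\cS_{\agent(Q)}$ at the truthful report; best response therefore forces $r^Q$ to equal it almost surely. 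For the inductive step, every out-neighbor of a depth-$k$ query has depth at most $k-1$ or equals $Y$, hence is reported at its true value almost surely under $\sigma$ by the inductive hypothesis (respectively by nature), and the same elicitability argument forces the depth-$k$ report to be truthful almost surely. Thus $\sigma$ agrees almost surely with the truthful profile.

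\emph{Main obstacle.} The step I expect to require the most care is the claim that each report $r^Q$ influences only the single summand $\pi_Q$ in its owner's payoff; this is what makes the coordinate-wise existence argument and the depth induction legitimate. It can fail if some other query $Q'$ owned by the same agent has $Q \in N_G(Q')$, coupling that agent's optimization across summands. For the (iterated) expectation queries of this paper the coupling does not arise: a query $\E[W \mid \cS_i]$ is scored by the quadratic rule against the single variable $W$, and if $W$ were another query $Q$ owned by $i$ then, by $\cS_i$-measurability, $\E[Q \mid \cS_i] = Q$, so the two queries coincide and the edge is vacuous. In general I would impose, as a harmless normalization, that $G$ carry no edge between two queries of the same agent --- or else run the depth induction while re-optimizing all of a given agent's reports simultaneously --- after which the existence and uniqueness arguments above apply verbatim.
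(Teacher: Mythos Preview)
The paper does not actually supply a proof of this observation: it is stated immediately after the sentence ``The following is then a straightforward observation'' and left at that. So there is no ``paper's own proof'' to compare against, and your proposal is filling in what the authors omit.

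Your argument is correct and, if anything, more careful than the paper requires. The existence half is exactly the intended one-liner: with all other reports truthful, each $\pi_Q$ is fed the true out-neighbor values, so elicitability (with its built-in uniqueness of the argmax) forces the truthful report. The uniqueness half via induction on longest-path depth from the sink $Y$ is the natural way to make ``unique Bayes--Nash equilibrium'' precise, and you correctly exploit that depth-$1$ queries are scored only against $Y$, anchoring the induction.

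The obstacle you flag --- that an agent owning two queries joined by an edge of $G$ could have coupled summands --- is a genuine subtlety the paper glosses over. Your resolution is the right one for this paper: for the iterated-expectation queries actually used (intersection and difference queries), an edge $Q'' \to Q$ with $\agent(Q'') = \agent(Q) = i$ would mean $Q''$ conditions on $\cS_i$ something already $\cS_i$-measurable, so the edge is redundant and can be dropped from $G$ without changing anything. For the fully general definition of DAG-elicitable one would indeed need either the normalization you propose (no intra-agent edges in $G$) or a joint optimization across an agent's queries at each inductive step; either patch suffices, and the paper's silence here is a mild imprecision rather than an error in your proof.
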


\paragraph{Complexity measures.}
Consider a DAG-elicitable query set $\Q$. 
One can then ask how efficient it is to elicit each query, for both the principal and the experts. 
Motivated by this question, we define several notions of complexity over $\Q$ relative to the graphs $G$ that DAG-elicit the query set. 
Specifically, we identify three independent axes along which to measure complexity.
\begin{enumerate}
    \item \emph{Query size}: The total number of questions asked, $|\Q|$.
	\item \emph{Order complexity}: the ``longest chain of reasoning'' involved in any query, i.e., the highest order of any query.
	A query is order $1$ if it is scored against $Y$, is order $2$ if it is scored against order $1$ queries and $Y$, and so on (see Figure~\ref{fig:complexity}).
    Intuitively, the order complexity of an aggregation rule is the maximum number of iterated beliefs involved.
    A higher order complexity implies a larger cognitive burden on experts in answering their queries. 
	\item \emph{Agent complexity}: The maximum number of agents whose beliefs are relevant to any query.
	Suppose we can write a query as $Q = Q(S_i, \D')$ where $\D'$ is the marginal distribution on $Y$ and the set of signals observed by some group of $k$ agents.
    If we can do so for $k$ agents but not $k-1$, then the agent complexity of $Q$ is $k$.
	The agent complexity of $\Q$ is the maximum over any of the queries.
	Similar to order complexity, large agent complexity can be impractical, as it requires agents to form precise beliefs about the beliefs and behavior of many other agents.
\end{enumerate}

Formally, let $\gset(\Q)$ be the set of graphs that DAG-elicit a set of queries $\Q$.
For each $Q, Q' \in \Q$, let $Q \to_G Q' \coloneqq \ones \{\text{$Q'$ reachable from $Q$ in $G$}\}$, and let $N^*_G(Q) \coloneqq \{Q' \in \Q: Q \to_G Q'\}$ denote the set of reachable nodes from $Q$ in $G$, including $Q$ itself.
Then let $I_G^*(Q) \coloneqq \{\agent(Q'): Q' \in N_G^*(Q)\}$ represent the set of experts whose beliefs are relevant to $Q$; we call $I_G^*(Q)$ the \emph{agent complexity} of query $Q$ under graph $G$.
Let $\ell_G(Q)$ denote the \emph{longest} path distance from query $Q$ to $Y$ in graph $G$. 
We will sometimes refer to $\ell_G(Q)$ as the \emph{order} of query $Q$ under graph $G$.
Using this notation, we can formally define the aforementioned complexity measures of DAG-elicitable query sets.

\begin{definition}{(Complexity notions)} 
    Take some DAG-elicitable query set $\Q$.
    We define the query complexity $\queryc(\Q)$, order complexity $\orderc(\Q)$, and agent complexity $\agentc(\Q)$ as follows:
    \[
    \queryc(\Q) \coloneqq |\Q|, \quad \orderc(\Q) \coloneqq \min_{G \in \gset(\Q)} \max_{Q \in \Q} \ell_G(Q), \quad \agentc(\Q) \coloneqq \min_{G \in \gset(\Q)} \max_{Q \in \Q} |I_G^*(Q)|.
    \]
    In an an overload of notation, for a random aggregation rule $P$, we let each of its complexities be the the maximum over all query sets in $\Q \in \supp(P_{\Q})$. That is, 
$\queryc(P) \coloneqq \max_{\Q \in \supp(P_{\Q})} \queryc(\Q)$, $\orderc(P) \coloneqq \max_{\Q \in \supp(P_{\Q})} \orderc(\Q)$, and $\agentc(P) \coloneqq \max_{\Q \in \supp(P_{\Q})} \agentc(\Q)$.
\end{definition}

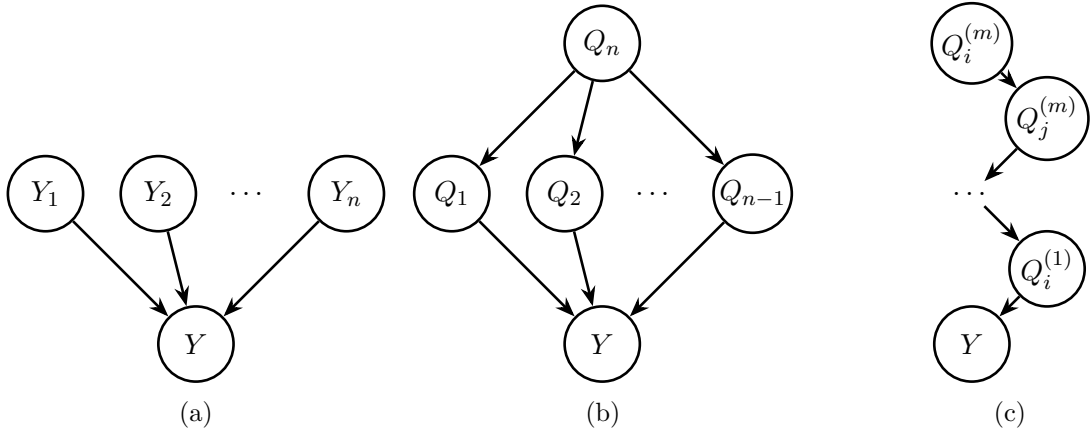
\begin{figure}[t!]
\centering
\begin{subfigure}{0.32\linewidth}
    \centering
    \begin{tikzpicture}[
        node distance=1.5cm,
        querynode/.style={draw, circle, minimum size=10mm,inner sep=1pt},
        line width=1pt,
        >=Stealth
    ]
    \node[querynode] (Y1) at (-2,0) {$Y_1$};
    \node[querynode] (Y2) at (-0.5,0) {$Y_2$};
    \node (D) at (0.7, 0) {$\dots$};
    \node[querynode] (Y3) at (2,0) {$Y_n$};
    \node[querynode] (Y)  at (0,-2) {$Y$};

    \draw[->] (Y1) -- (Y);
    \draw[->] (Y2) -- (Y);
    \draw[->] (Y3) -- (Y);
    \end{tikzpicture}
\caption{}
\label{subfig:standard-complexity}
\end{subfigure}
\begin{subfigure}{0.32\linewidth}
    \centering
    \begin{tikzpicture}[
        node distance=1.5cm,
        querynode/.style={draw, circle, minimum size=10mm,inner sep=1pt},
        line width=1pt,
        >=Stealth
    ]
    \node[querynode] (Q4) at (0,2) {$Q_n$};
    \node[querynode] (Q1) at (-2,0) {$Q_1$};
    \node[querynode] (Q2) at (-0.5,0) {$Q_2$};
    \node (D) at (0.7, 0) {$\dots$};
    \node[querynode] (Q3) at (2,0) {$Q_{n-1}$};
    \node[querynode] (Y)  at (0,-2) {$Y$};

    \draw[->] (Q4) -- (Q1);
    \draw[->] (Q4) -- (Q2);
    \draw[->] (Q4) -- (Q3);

    \draw[->] (Q1) -- (Y);
    \draw[->] (Q2) -- (Y);
    \draw[->] (Q3) -- (Y);

    \end{tikzpicture}
\caption{}
\label{subfig:high-agent-complexity}
\end{subfigure}
\begin{subfigure}{0.32\linewidth}
    \centering
    \begin{tikzpicture}[
        node distance=1.5cm,
        querynode/.style={draw, circle, minimum size=10mm, inner sep=1pt},
        line width=1pt,
        >=Stealth
    ]
    \node[querynode] (Q4) at (0,2) {$Q_i^{(m)}$};
    \node (Q3)[querynode] at (1, 1) {$Q_j^{(m)}$};
    \node (Q2) at (0,0) {$\dots$};
    \node[querynode] (Q1) at (1,-1) {$Q_i^{(1)}$};
    \node[querynode] (Y)  at (0,-2) {$Y$};

    \draw[->] (Q4) -- (Q3);
    \draw[->] (Q3) -- (Q2);
    \draw[->] (Q2) -- (Q1);
    \draw[->] (Q1) -- (Y);

    \end{tikzpicture}
\caption{}
\label{subfig:high-order-complexity}
\end{subfigure}
\caption{Three graphs that DAG-elicit different query sets and upper bound their complexity measures; each query $Q$ is indexed in subscript by $\agent(Q)$. 
Figure~\ref{subfig:standard-complexity} depicts a graph implementing the standard query set $\Q = (Y_1,\dots,Y_n)$; $\Q$ has query size $n$, elicitation order 1, and agent complexity 1.
Figure~\ref{subfig:high-agent-complexity} depicts an example of high agent complexity ($n$) and low order complexity ($2$), with the query $Q_n = \E[\sum_{j =1}^{n-1} Y_j \mid \cS_n]$.
In other words, we ask expert $n$ to predict the sum of all other experts' posterior expectations.
Figure~\ref{subfig:high-order-complexity} depicts an example with high order complexity ($2m$) but low agent complexity ($2$).
$Q_i^{(m)}$ represents following iterated-expectation query: agent $i$'s expectation of agent $j$'s expectation of agent $i$'s expectation of \dots of agent $j$'s expectation of $Y$.
}
\label{fig:complexity}
\end{figure}

\subsection{Partial information model}
The \emph{partial information model} is the special case of the general information model, where the signals are independent mean-zero variables and $Y$ is their sum.
Without loss of generality, we suppose there are $2^n$ independent signals, each one corresponding to a particular subset of the experts. 
Specifically, let $X_T \in \reals$ be an independent random variable drawn from a distribution $\D_T$ for each subset $T \subseteq [n]$, where $\mu_T \coloneqq \E_{\D}[X_T] = 0$ and $\Var_{\D}(X_T) \coloneqq \E_{\D}[X_T^2] < \infty$.
For mathematical convenience we define $X_{\emptyset} = 0$ with probability one.
We assume there exists some $T \subseteq [n]$ such that $\Var(X_T) > 0$; otherwise the aggregation problem is trivial. 
The prior distribution $\D = ( \D_T : T \subseteq [n])$ is known to the experts. 

Each random variable $X_T$ is revealed only to the set of experts $T$. 
That is, each expert $i$ sees signal $\cS_i = \{ X_T : i \in T \}$. 
Here, $X_{\{i\}}$ represents information private only to expert $i$, and $X_{[n]}$ represents information that is shared across all experts. 
Then the aggregator aims to estimate $Y = \sum_{T \subseteq [n]} X_T$. 
(Example~\ref{ex:common} is thus a special case of the partial information model.) 

Note a forecaster $i$'s posterior expectation of $Y$, $Y_i \coloneqq \E[ Y \mid \cS_i]$,  is simply the sum of their observed signals:
\[ \E[Y \mid \cS_i] = \E [ \sum_{T \subseteq [n]} X_T \mid \{X_T: i \in T\}] = \sum_{T \subseteq [n]: i \in T} X_T.\]

Our results in this paper focus solely on the partial information model, which captures a multitude of settings where experts have overlapping information about the variable of interest. 
We leave extensions to correlated signal structures as an exciting avenue for future work. 
\mary{insert discussion of nonzero mean variables here}

\paragraph{Error.}
In the partial information model, the optimal approximator $A^*$ satisfies $A^* = \sum_{T \subseteq [n]} X_T$, so that $L(A^*) = 0$. 
It follows that we can write the error of an approximator $A$ under the partial information model as
\begin{equation} \label{eqn:partial-performance}
  \error(A) = \max_{\D} \frac{L(A)}{\Var(Y)} = \max_{\D} \frac{\E\left[ (A - \sum_T X_T)^2 \right]}{\sum_T \Var(X_T)}. 
\end{equation}
That is, $\error(A)$ measures the amount of randomness that $A$ fails to account for in $Y$ (as measured by variance), in the worst case over the distribution $\D$.

\section{Optimal Aggregation} \label{sec:optimal}

We begin by asking if optimal aggregation is even possible in the DAG-elicitable framework, and if so, whether there are reasonable upper bounds on the complexity of aggregation rules that implement $A^*$.
It will be useful in this quest to start by defining a specific class of aggregation rules. 

\subsection{Linear aggregation rules}
Given the goal of the principal is to estimate $Y = \sum_T X_T$, a natural class of approximators to consider are of the linear form $\sum_T \beta_T X_T$. 
It turns out one can implement such functions using an DAG-elicitable class of rules we call the \emph{linear aggregation rules}.

\begin{definition}
    A \emph{linear aggregation rule} $R = (\Q, f)$ is an DAG-elicitable aggregation rule where, for some graph $G$ that DAG-elicits $\Q$, each query $Q \in \Q$ can be written as 
    \[Q = \E [ \textstyle\sum_{Q' \in N_G(Q)} \alpha_{Q,Q'} Q' \mid \cS_i ] \]
    for some set of constants $\{\alpha_{Q,Q'}\}_{Q' \in N_G(Q)}$;
    and $f(\Q) = \sum_{Q \in \Q} \beta_Q Q$ for some set of real-valued constants $\{\beta_Q\}_{Q \in \Q}$.

    We define $P$ as a \emph{randomized linear aggregation rule} if $(\Q, f)$ is a linear aggregation rule for each $(\Q, f) \in \supp(P_{Q,f})$.
\end{definition}

\mary{note here that adding a constant $c$ will only decrease performance for a linear aggregation function.}

We note that implicit in the definition, each query $Q \in \Q$ is elicitable using the payment rule $\pi(Q, N_G(Q)) = 1 - (Q - \sum_{Q' \in N_G(Q)} \alpha_{Q,Q'} Q')^2$.
Moreover, using induction, we can show that any linear aggregation rule outputs a weighted linear combination of the $X_T$'s.

\begin{proposition} \label{prop:linear-approximators}
    Any linear aggregation rule outputs an approximator $A = \sum_T c_T X_T$ for some set of constants $\{c_T\}_{T \subseteq [n]}$. 
\end{proposition}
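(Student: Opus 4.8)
The plan is to prove the statement by induction on the structure of a graph $G \in \gset(\Q)$ that DAG-elicits $\Q$ and witnesses the linear form of each query, showing that \emph{every} query $Q \in \Q$, viewed as a random variable, can be written as $Q = \sum_{T \subseteq [n]} c^Q_T X_T$ for suitable constants $c^Q_T$. Since $f(\Q) = \sum_{Q \in \Q} \beta_Q Q$ is by definition a fixed linear combination of the queries, substituting these expressions and collecting terms immediately yields $A = f(\Q) = \sum_T \big( \sum_{Q \in \Q} \beta_Q c^Q_T \big) X_T$, which is of the claimed form with $c_T \coloneqq \sum_Q \beta_Q c^Q_T$.

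For the induction I would order the queries by $\ell_G(Q)$, the longest path distance from $Q$ to $Y$ in $G$. The key structural fact is that if $Q \to Q'$ is an edge of $G$ then $\ell_G(Q) \ge \ell_G(Q') + 1$, so every out-neighbor of $Q$ other than $Y$ has strictly smaller $\ell_G$-value; this is what makes the induction well-founded on the DAG (rather than inducting on $|\Q|$). As the base fact, note $Y = \sum_T X_T$ is trivially a linear combination of the signals. For the inductive step, take $Q \in \Q$ with $\agent(Q) = i$ and suppose the claim holds for every query $Q' \in N_G(Q) \setminus \{Y\}$. By the definition of a linear aggregation rule, $Q = \E[\sum_{Q' \in N_G(Q)} \alpha_{Q,Q'} Q' \mid \cS_i]$; substituting the inductive expressions for the $Q'$ and $Y = \sum_T X_T$ for the sink gives $\sum_{Q' \in N_G(Q)} \alpha_{Q,Q'} Q' = \sum_T d_T X_T$ for constants $d_T$ obtained by collecting coefficients, and hence, by linearity of conditional expectation, $Q = \sum_T d_T \, \E[X_T \mid \cS_i]$.

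It remains to evaluate $\E[X_T \mid \cS_i]$ in the partial information model, which is where I would use the model's independence structure: because the $X_T$ are mutually independent and mean zero, $\E[X_T \mid \cS_i] = X_T$ when $i \in T$ (so that $X_T \in \cS_i$) and $\E[X_T \mid \cS_i] = \E[X_T] = 0$ otherwise. Thus $Q = \sum_{T : i \in T} d_T X_T$, a linear combination of the $X_T$'s, completing the induction, and then the assembly of $A$ described above finishes the proof. I do not anticipate a genuine obstacle here; the only points requiring care are (i) setting up the induction along $\ell_G$ so that out-neighbors are genuinely "earlier" in the DAG, and (ii) being explicit that $Y$ itself is a linear combination of signals, so that out-neighbor sets $N_G(Q)$ mixing $Y$ with other queries are handled uniformly. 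Everything else is bookkeeping of constants.
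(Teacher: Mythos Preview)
Your proposal is correct and follows essentially the same approach as the paper: induction on the order $\ell_G(Q)$ of each query to show every $Q$ is a linear combination of the $X_T$'s, using independence and mean-zero to evaluate $\E[X_T \mid \cS_i]$, then linearly combining via $f$. Your treatment is slightly more careful in handling $Y \in N_G(Q)$ uniformly and in justifying well-foundedness of the induction, but the argument is the same.
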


\begin{proof}
We start by proving by induction that any linear query $Q$ of order $d$ in a linear aggregation rule can be expressed as $\sum_T c_{Q,T} X_T$ for some constants $\{ c_{Q,T} \}_{T \subseteq [n]}$.
Let $d = 1$. 
Then a query of order $d$ can only be scored against the sink, $Y$. 
That is, if $\agent(Q) = i$, $Q = \E\left[ \alpha_{Q, Y} Y \mid \cS_i\right] = \sum_{T \ni i} \alpha_{Q,Y} X_T = \sum_{T} c_{Q,T} X_T$ for $c_{Q,T} \coloneqq \ones_{i \in T} \alpha_{Q,Y}$.
Now, for $1 \leq l \leq d$, assume any query $Q$ of order $d$ in the linear aggregation rule can be expressed as $Q = \sum_T c_{Q,T} X_T$. 
Take any query $Q \in \Q$ of order $d+1$ with $\agent(Q) = i$, and consider a DAG $G$ that DAG-elicits the rule.
Then 
\begin{align*}
    Q = \E\left[ \textstyle \sum_{Q' \in N_G(Q)} \alpha_{Q,Q'} Q' \mid \cS_i\right] &= \E\left[ \textstyle \sum_{Q' \in N_G(Q)} \alpha_{Q,Q'} \left( \sum_T c_{Q',T} X_T \right) \mid \cS_i\right] \\
    &= \textstyle \sum_{Q' \in N_G(Q)} \alpha_{Q,Q'} \left( \sum_{T \ni i} c_{Q',T} X_T \right) \\
    &= \textstyle \sum_T c_{Q,T} X_T,
\end{align*} 
where the first equality follows by the inductive hypothesis; and $c_{Q,T} \coloneqq \ones_{i \in T}  \sum_{Q' \in N_G(Q)} \alpha_{Q, Q'} c_{Q',T}$.

It follows that each query can be expressed as $Q = \sum_T c_{Q,T} X_T$, and therefore $f(\Q) = \sum_Q \beta_Q Q = \sum_Q \beta_Q \sum_T c_{Q,T} X_T = \sum_T c_T X_T$ for $c_T \coloneqq \sum_Q \beta_Q c_{Q,T}$.
\end{proof}

\begin{example} \label{ex:linear}

    Consider the setting with three experts and three queries, indexed by each expert: $Q_1(\cS_1, \D) = \E[ Y \mid \cS_1]$, $Q_2(\cS_2, \D) = \E[ Y \mid \cS_2]$, and $Q_3(\cS_3, \D) = \E[ \alpha_1 Q_1 + \alpha_2 Q_2 \mid \cS_3]$ for $\alpha_1 = -1, \alpha_2 = 2$; and $f(\Q) = \sum_{Q \in \Q} Q$.
    Then as shown in Appendix~\ref{app:optimal}, $ f(\Q) = \sum_{T \subseteq [n]} \beta_T X_T$ with
    \begin{align*}
    \beta_T =
    \begin{cases}
        0, & T \cap \{1,2,3\} = \emptyset, \{3\}, \{1,3\} \\
        1, & T \cap \{1,2,3\} = \{1\}, \{2\} \\
        2, & T \cap \{1,2,3\} = \{1, 2\} \\
        3, & T \cap \{1,2,3\} = \{2, 3\}, \{1,2,3\}
    \end{cases}~.
    \end{align*}
\end{example}

\subsection{Intersection queries}

We now consider a specific class of linear aggregation rules which will be useful for achieving low error. 
Generalizing the first-order queries $\E[ Y \mid \cS_i]$, we can consider queries asked to expert $i$ about their expectation of another expert $j$'s report.
Formally, we can ask for
\begin{align*}
    \E[Y_j \mid \cS_i]
	= \E[\textstyle\sum_{T\subseteq[n] : j \in T} X_T \mid \{X_T : i \in T\}] 
	= \displaystyle\sum_{T\subseteq[n]: i,j \in T} X_T .
\end{align*}
Intuitively, $i$ knows that $j$'s expectation is the sum of the signals that $j$ observes.
Of those, $i$ also observes some, but the rest are independent and mean-zero and drop out.

We can then generalize this approach to higher orders. 
Inductively, we define \emph{iterated expectation queries} as follows: $\iter(i) \coloneqq Y_i$ and $\iter(i_1,\dots,i_k) \coloneqq \E\left[ \iter(i_1,\dots,i_{k-1}) \mid \cS_{i_k} \right]$ for ordered list $(i_1,\dots,i_k)$.
In other words, $\iter(i_1,\dots,i_k)$ is agent $i_k$'s expectation of agent $i_{k-1}$'s expectation of \dots of agent $i_1$'s posterior expectation of $Y$.
We can immediately observe that in the partial information model, only the unordered set of indices $S = \{i_1,\dots,i_k\}$ matters.

\begin{proposition} \label{prop:inter-queries}
    Let $L \coloneqq (i_1,\dots,i_k)$.
    Then
    $\iter(L) = \sum_{T: S \subseteq T} X_T$ for $S = \{i_1, \dots, i_k\}$.
\end{proposition}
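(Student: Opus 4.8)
The plan is to prove the claim by induction on $k$, the length of the list $L = (i_1,\dots,i_k)$, mirroring the inductive computation already used in Proposition~\ref{prop:linear-approximators}. The base case $k=1$ is just the observation recorded in the preliminaries: $\iter(i_1) = Y_{i_1} = \E[Y \mid \cS_{i_1}] = \sum_{T \ni i_1} X_T$, which is exactly $\sum_{T : \{i_1\} \subseteq T} X_T$.

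For the inductive step, I would assume $\iter(i_1,\dots,i_{k-1}) = \sum_{T : S' \subseteq T} X_T$ where $S' = \{i_1,\dots,i_{k-1}\}$, and then compute
\[
\iter(i_1,\dots,i_k) = \E\!\left[\iter(i_1,\dots,i_{k-1}) \mid \cS_{i_k}\right] = \E\!\left[\textstyle\sum_{T : S' \subseteq T} X_T \;\middle|\; \{X_T : i_k \in T\}\right].
\]
Since the $X_T$ are mutually independent and mean zero, conditioning on $\cS_{i_k}$ leaves $X_T$ unchanged when $i_k \in T$ and replaces it with $\E[X_T] = 0$ when $i_k \notin T$. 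Hence the sum collapses to $\sum_{T : S' \subseteq T,\ i_k \in T} X_T = \sum_{T : S' \cup \{i_k\} \subseteq T} X_T = \sum_{T : S \subseteq T} X_T$, completing the induction.

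The only real subtlety — and the step I would state carefully rather than gloss over — is the justification that $\E[X_T \mid \cS_{i_k}] = X_T \cdot \ones_{i_k \in T}$. This uses that $\cS_{i_k} = \{X_{T'} : i_k \in T'\}$ is a collection of variables independent of $X_T$ whenever $i_k \notin T$, so $X_T$ is independent of $\cS_{i_k}$ in that case and its conditional expectation is its mean $0$; and when $i_k \in T$, $X_T$ is itself one of the observed signals. Linearity of conditional expectation then lets us pass the expectation through the (finite) sum. Everything else is bookkeeping: noting that $S' \cup \{i_k\} = S$ as sets, and that the value genuinely depends only on the unordered set $S$, not the order of the list $L$, which is the ``immediate observation'' already flagged before the proposition statement.

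I do not anticipate a genuine obstacle here; the proposition is essentially a restatement of the partial-information identity $\E[Y_j \mid \cS_i] = \sum_{T : i,j \in T} X_T$ iterated $k$ times, and the same argument structure has already appeared in Proposition~\ref{prop:linear-approximators}. If I wanted to avoid induction entirely, an alternative would be to unfold all $k$ conditional expectations at once and argue that a term $X_T$ survives the nested conditioning if and only if it survives every individual conditioning, i.e. iff $i_j \in T$ for all $j \in [k]$, iff $S \subseteq T$; but the inductive phrasing is cleaner and more in keeping with the paper's style.
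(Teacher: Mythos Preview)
Your proposal is correct and follows essentially the same inductive argument as the paper's proof: base case $k=1$ from the definition of $Y_{i_1}$, and inductive step by applying the definition $\iter(i_1,\dots,i_k) = \E[\iter(i_1,\dots,i_{k-1}) \mid \cS_{i_k}]$ and using independence plus mean-zero to drop every $X_T$ with $i_k \notin T$. If anything, your write-up is more careful than the paper's, which simply asserts the final equality in the inductive step without spelling out the $\E[X_T \mid \cS_{i_k}] = X_T \cdot \ones_{i_k \in T}$ justification.
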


\begin{proof}
    We proceed by induction over the length $k$ of $L$.
    Let $k = 1$.
    Then, by definition, $\iter(i)  = \sum_{T\subseteq[n]: \{i\} \subseteq T} X_T$.
    Now assume the statement holds for all ordered lists $L$ with $1 \leq |L| \leq k$. 
    Take an ordered list $L' = (i_1, \dots,i_{k+1})$. 
    Then $\iter(i_1,\dots,i_{k+1}) = \E\left[ \iter(i_1,\dots,i_k) \mid \cS_{i_{k+1}} \right] = \E\left[ \sum_{T: \{i_1, \dots, i_k\} \subseteq T} X_T \mid \cS_{i_{k+1}} \right] = \sum_{T: \{i_1, i_2, \dots, i_{k+1}\} \subseteq T} X_T$.

\end{proof}

In the context of the partial information model, then, we can define \emph{intersection queries} as $\inter(\{i\}) \coloneqq Y_i$ and $\inter(S) \coloneqq \sum_{T: S \subseteq T} X_T$.
We define an \emph{intersection query set} of order $d$ as $\Q_{\inter}(d) \coloneqq \{ \inter(S): |S| \leq d\}$.

\begin{proposition} \label{prop:inter-query-order}
    $\Q_{\inter}(d)$ is DAG-elicitable, with $\queryc(\Q_{\inter}(d)) = 2^d$, $\orderc(\Q_{\inter}(d)) \leq d$, and $\agentc(\Q_{\inter}(d)) \leq d$.
\end{proposition}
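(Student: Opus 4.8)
The plan is to exhibit a single DAG $G$ that certifies all three bounds at once, obtained by routing each intersection query down through its sub‑intersections to $Y$. All the nontrivial work about the \emph{values} of these random variables is already contained in Proposition~\ref{prop:inter-queries}; what remains is to choose edges and answering agents, check the two conditions in the definition of DAG‑elicitable, and then read off $\queryc$, $\orderc$, and $\agentc$ from the combinatorics of the construction.

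Concretely, set $V \coloneqq \Q_{\inter}(d) \cup \{Y\}$ and identify $Y$ with $\inter(\emptyset)$. For each nonempty $S$ with $|S| \le d$, put a single outgoing edge $\inter(S) \to \inter(S \setminus \{\min S\})$, and declare $\agent(\inter(S)) \coloneqq \min S$. The graph is acyclic since every edge strictly decreases the size of the indexing set, and $Y = \inter(\emptyset)$ is a sink (it has no outgoing edge) and the unique one, because every other node has an outgoing edge and repeatedly deleting the minimum element drives any $\inter(S)$ down to $\inter(\emptyset) = Y$. For elicitability, Proposition~\ref{prop:inter-queries} gives $\E[\inter(S \setminus \{\min S\}) \mid \cS_{\min S}] = \inter(S)$ for every nonempty $S$ (this is just one further application of iterated expectation), so the quadratic score $\pi(r, N_G(\inter(S))) \coloneqq 1 - (r - \inter(S \setminus \{\min S\}))^2$ — a legitimate function of the out‑neighbor set, even though it uses only the one coordinate $\inter(S \setminus \{\min S\})$ — is maximized in conditional expectation precisely at $r = \inter(S)$. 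Hence $G$ DAG‑elicits $\Q_{\inter}(d)$, so $\Q_{\inter}(d)$ is DAG‑elicitable.

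Now read off the complexities. Query complexity is $\queryc(\Q_{\inter}(d)) = |\Q_{\inter}(d)| = 2^d$, immediate by counting the indexing subsets $S$. For order complexity, a path from $\inter(S)$ to $Y$ in $G$ deletes exactly one index per step, so $\ell_G(\inter(S)) = |S| \le d$; hence $\max_{Q} \ell_G(Q) \le d$, and therefore $\orderc(\Q_{\inter}(d)) = \min_{G' \in \gset(\Q_{\inter}(d))} \max_Q \ell_{G'}(Q) \le d$. For agent complexity, the queries reachable from $\inter(S)$ in $G$ form the chain obtained by repeatedly stripping the minimum, namely $\inter(S), \inter(S \setminus \{\min S\}), \dots$; each has answering agent equal to the minimum of its (ever‑shrinking) index set, so $I_G^*(\inter(S)) \subseteq S$, giving $|I_G^*(\inter(S))| \le |S| \le d$ and hence $\agentc(\Q_{\inter}(d)) \le d$.

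The proposition has no genuinely hard step: it packages Proposition~\ref{prop:inter-queries} together with the fact that the quadratic score elicits a conditional expectation. The only points that need care are definitional — (i) noting that the DAG‑elicitability definition allows the payment scheme for $\inter(S)$ to depend on the full out‑neighbor set even when we score against only the single parent $\inter(S \setminus \{\min S\})$, and (ii) keeping $\agent(\inter(S)) \in S$, which is exactly what makes both the ``longest chain of reasoning'' and the ``set of relevant agents'' for $\inter(S)$ collapse to subsets of $S$ and thus have size at most $d$.
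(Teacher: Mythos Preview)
Your proof is correct and follows essentially the same approach as the paper: both build the DAG by sending each $\inter(S)$ to $\inter(S\setminus\{i\})$ for some $i\in S$, elicit with the quadratic score, and read off the three complexity bounds from the resulting chain of length $|S|$. The only cosmetic difference is that you fix $i=\min S$ and $\agent(\inter(S))=\min S$ explicitly, whereas the paper leaves the choice of $i$ implicit; this makes your verification of $I_G^*(\inter(S))\subseteq S$ slightly cleaner but is otherwise the same argument.
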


\begin{proof}
    By construction, there is a graph $G$ that DAG-elicits the query set, where each query $\inter(S)$ for $|S| = k$ shares an edge with a query $\inter(S')$ for $|S| = k - 1$. 
    Formally, if $k = 1$ and $\agent(Q) = i$, then $\inter(S) = Y_i$, so that we can elicit $Q$ with payment rule $\pi(\inter(\{i\}), \Q \cup \{Y\}) = 1 - (Q - Y)^2$.
    Then we add an edge $(Q, Y)$ in $G$.
    Now, for $k > 1$, consider query $Q = \inter(S)$ with $|S| = k$ and $\agent(Q) = i$.
    By definition, $\inter(S) = \E\left[ \inter(S \backslash \{ i \}) \mid \cS_i \right]$, so we can elicit $Q$ with payment rule
    $\pi(\inter(S), \Q \cup \{Y\}) = 1 - (Q - \inter(S \backslash \{ i \}))^2$. 
    Then we add an edge $(Q, Q')$ in $G$, where $Q' = \inter(S \backslash \{ i \})$.
    It follows by induction that a query $Q = \inter(S)$ with $|S| = k$ has order and agent complexity $k$ in $G$, so that $\max_{Q \in \Q} \ell_G(Q) = d$ and $\max_{Q \in \Q} |I_G^*(Q)| = d$.
    Moreover, $\queryc(\Q_{\inter}(d)) = |\queryc(\Q_{\inter}(d))| = 2^d$. 
\end{proof}

We define an \emph{intersection aggregation rule} as an aggregation rule with an intersection query set and a linear aggregation function.

\begin{definition}
    A (deterministic) intersection aggregation rule $R = (\Q, f)$ is an DAG-elicitable rule with $\Q \coloneqq \Q_{\inter}(d)$, and $f(\Q) = \sum_{Q \in \Q} \beta_Q Q$ for some set of real-valued constants $\{\beta_Q\}_{Q \in \Q}$.
    We define $P$ as a \emph{randomized linear aggregation rule} if $(\Q, f)$ is an intersection aggregation rule for each $(\Q, f) \in \supp(P_{Q,f})$.
\end{definition}

\subsection{Optimal aggregation with intersection queries}
Armed with intersection queries, we can now show that optimal aggregation is possible under the DAG-elicitable framework. 
We leave the proof to Appendix~\ref{app:optimal}: at a high level, we are able to completely reconstruct each random variable $X_T$ using the inclusion-exclusion principle by summing up intersection queries for each $S \supseteq T$.
It follows that the query complexity of this aggregation rule is $2^n$.

\begin{theorem} \label{thm:opt-intersection}
    There exists a deterministic intersection aggregation rule $R$ that implements $A^*$ with $\agentc(R) \leq n$, $\orderc(R) \leq n$, and $\queryc(R) = 2^n$.
\end{theorem}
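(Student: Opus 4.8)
The plan is to exhibit an explicit intersection aggregation rule with $d = n$ and show that its linear combination of intersection queries reconstructs $Y = \sum_T X_T$ exactly. By Proposition~\ref{prop:inter-query-order}, the query set $\Q_{\inter}(n)$ is DAG-elicitable with $\queryc = 2^n$, $\orderc \leq n$, and $\agentc \leq n$, so the complexity bounds are immediate once we know the rule uses $\Q_{\inter}(n)$; all that remains is to choose the coefficients $\{\beta_S\}_{|S|\le n}$ so that $f(\Q) = \sum_{|S| \le n} \beta_S \inter(S) = \sum_T X_T = A^*$.

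First I would recall from Proposition~\ref{prop:inter-queries} that $\inter(S) = \sum_{T : S \subseteq T} X_T$, so that $\sum_S \beta_S \inter(S) = \sum_T \left( \sum_{S : S \subseteq T} \beta_S \right) X_T$; we therefore need $\sum_{S \subseteq T} \beta_S = 1$ for every $T \subseteq [n]$ (including $T = \emptyset$, which forces $\beta_\emptyset = 1$, consistent with $\inter(\emptyset)$ being the constant $\sum_T X_T$ — or one can simply drop that term and instead require the condition only for nonempty $T$). The natural move is inclusion–exclusion / Möbius inversion on the Boolean lattice: the system $\sum_{S \subseteq T} \beta_S = g(T)$ with $g \equiv 1$ has the unique solution $\beta_S = \sum_{S' \subseteq S} (-1)^{|S| - |S'|} g(S') = \sum_{S' \subseteq S} (-1)^{|S|-|S'|}$. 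This evaluates to $\beta_\emptyset = 1$ and $\beta_S = 0$ for all $|S| \ge 1$, which is degenerate; so instead I would reconstruct each $X_T$ individually. Concretely, fix the Möbius formula the other way: since $\inter(S) = \sum_{T \supseteq S} X_T$, Möbius inversion over the (upward) Boolean lattice gives $X_T = \sum_{S \supseteq T} (-1)^{|S| - |T|} \inter(S)$. Summing over all $T$ yields $Y = \sum_T X_T = \sum_T \sum_{S \supseteq T} (-1)^{|S|-|T|} \inter(S) = \sum_S \left( \sum_{T \subseteq S} (-1)^{|S|-|T|} \right) \inter(S)$, i.e. $\beta_S = \sum_{k=0}^{|S|} \binom{|S|}{k}(-1)^{|S|-k} = 0$ for $|S| \ge 1$ and $\beta_\emptyset = 1$. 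This confirms the all-information estimator $A^*$ is literally just $\inter(\emptyset)$ — but $\inter(\emptyset)$ is not a legitimate query (it is not elicitable, since no agent observes all signals), which is exactly the point of the construction: we realize $A^*$ only as a telescoping sum of genuine queries $\inter(S)$ with $1 \le |S| \le n$ whose coefficients do not collapse because we never contract the sum. So the correct statement of the construction is: set $\beta_T = \sum_{S : T \subseteq S, S \ne \emptyset} (-1)^{|S| - |T|} [\text{coefficient bookkeeping}]$, more cleanly, write $A^* = \sum_{\emptyset \ne S \subseteq [n]} \gamma_S \inter(S)$ where $\gamma_S$ is chosen so that reconstructing each $X_T$ via $X_T = \sum_{S \supseteq T} (-1)^{|S|-|T|}\inter(S)$ and summing gives $\gamma_S = \sum_{T \subseteq S}(-1)^{|S|-|T|} = \ones_{S = \emptyset}$ — no. Let me restate: the clean route is to verify $\sum_{\emptyset \ne S}(-1)^{|S|+1}\binom{n}{|S|}\cdots$; rather than chase signs here, the proof will fix coefficients $\beta_S$ by the requirement $\sum_{S \subseteq T}\beta_S = \ones_{T \ne \emptyset}$ for all $T \subseteq [n]$, solve it by Möbius inversion to get $\beta_S = \sum_{S' \subseteq S}(-1)^{|S|-|S'|}\ones_{S' \ne \emptyset} = \ones_{S = \emptyset}\cdot 0 + (\text{something nonzero})$, and then check directly that $f(\Q) = \sum_S \beta_S \inter(S) = \sum_{T \ne \emptyset} X_T = Y$ since $X_\emptyset = 0$.

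The main obstacle — and it is really a bookkeeping obstacle rather than a conceptual one — is getting the Möbius inversion in the right direction and confirming that the resulting coefficients $\beta_S$ are well-defined real numbers for every nonempty $S$ with $|S| \le n$, so that the rule genuinely lies in the intersection-aggregation-rule class (it does: $\Q_{\inter}(n)$ contains $\inter(S)$ for all $|S| \le n$, and $f$ is by definition an arbitrary linear combination $\sum_Q \beta_Q Q$). Once the coefficients are pinned down, I would invoke Proposition~\ref{prop:linear-approximators} (or just the explicit form of $\inter(S)$ from Proposition~\ref{prop:inter-queries}) to conclude $f(\Q) = \sum_T c_T X_T$ with $c_T = \sum_{S \subseteq T}\beta_S = 1$ for all $T$, hence $f(\Q) = \sum_T X_T = A^*$ and $L(A^*) = 0$. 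Finally, the complexity claims follow verbatim from Proposition~\ref{prop:inter-query-order} with $d = n$: $\queryc(R) = 2^n$, $\orderc(R) \le n$, $\agentc(R) \le n$. I would also note that the rule is deterministic, as required, since a single fixed query set and fixed coefficient vector are used.
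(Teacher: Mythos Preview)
Your final plan---solve $\sum_{S \subseteq T}\beta_S = \ones\{T \neq \emptyset\}$ by M\"obius inversion on the Boolean lattice and then invoke Proposition~\ref{prop:inter-query-order} for the complexity bounds---is correct and is exactly the paper's approach. The bookkeeping you leave open resolves cleanly once you restrict the reconstruction to nonempty $T$: the paper sets $\beta_S = \sum_{\emptyset \neq T \subseteq S}(-1)^{|S\setminus T|} = (-1)^{|S|+1}$ (so nothing collapses), verifies $\sum_{S \supseteq T}(-1)^{|S\setminus T|}\inter(S) = X_T$ for each nonempty $T$, and sums to obtain $f(\Q_{\inter}(n)) = \sum_{T}X_T = Y$.
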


\subsection{Difference queries}
We define another class of linear aggregation rules called \emph{difference aggregation rules}.
Again, we consider some ordered list $L = (i_1, i_2, \ldots, i_k)$ of agents. 
Then we inductively define \emph{difference queries} as follows: $\diff(i) \coloneqq Y_i$ and $\diff(i_1,\dots,i_k) \coloneqq\E[Y - \sum_{1 \leq j < k} \diff(i_1, \dots, i_{k-j}) \mid \cS_k]$.
In the partial information model, we can show that $\diff(i_1,\dots,i_k)$ exactly expresses the expected value of $Y$ conditioned on the \emph{set difference} between $\cS_{i_k}$ and $\{\cS_{i_\ell}\}_{\ell=1}^{k-1}$:

\begin{proposition} \label{prop:diff-queries}
    Let $L \coloneqq (i_1,\dots,i_k)$.
    Then $\diff(L) = \E[Y \mid \cS_{i_k} \backslash \cup_{\ell < k} \cS_{i_{\ell}} ]$. 
\end{proposition}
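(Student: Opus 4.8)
The plan is to prove the stronger, fully explicit identity
\[
\diff(i_1,\dots,i_k) \;=\; \sum_{T \subseteq [n]\,:\ i_k \in T,\ i_\ell \notin T \text{ for all } \ell < k} X_T,
\]
by strong induction on the length $k = |L|$, and then observe that the right-hand side is precisely $\E[Y \mid \cS_{i_k}\setminus\bigcup_{\ell<k}\cS_{i_\ell}]$ (reading the ``$\cS_k$'' in the definition of $\diff$ as $\cS_{i_k}$). The workhorse throughout is the elementary fact that, since the $X_T$ are mutually independent and mean-zero, for any family $\A$ of subsets of $[n]$ and any sub-collection of signals $W\subseteq\{X_T : T\subseteq[n]\}$ we have $\E[\sum_{T\in\A} X_T \mid W] = \sum_{T\in\A:\ X_T\in W} X_T$: signals in $W$ are determined and pass through the conditional expectation unchanged, while all other $X_T$ contribute their mean $0$ by independence.

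The base case $k=1$ is immediate: $\diff(i_1)=Y_{i_1}=\sum_{T\ni i_1}X_T$, which matches the claimed sum (there are no ``$\ell<1$'' constraints), and $\cS_{i_1}\setminus\bigcup_{\ell<1}\cS_{i_\ell}=\cS_{i_1}$, so the final identity holds there as well. For the inductive step, assume the explicit identity for every prefix $(i_1,\dots,i_m)$ with $m\le k-1$. Reindexing $m=k-j$, the correction term in the definition of $\diff$ is $\sum_{1\le j<k}\diff(i_1,\dots,i_{k-j})=\sum_{m=1}^{k-1}\diff(i_1,\dots,i_m)$, which by the inductive hypothesis equals $\sum_{m=1}^{k-1}\sum_{T:\ i_m\in T,\ i_\ell\notin T \text{ for } \ell<m}X_T$. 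The key combinatorial observation is that in this double sum each $X_T$ appears at most once: if $T\cap\{i_1,\dots,i_{k-1}\}=\emptyset$ it never appears, and otherwise it appears exactly in the term indexed by $m=\min\{m\le k-1 : i_m\in T\}$. Hence $\sum_{m=1}^{k-1}\diff(i_1,\dots,i_m)=\sum_{T:\ T\cap\{i_1,\dots,i_{k-1}\}\neq\emptyset}X_T$, and therefore $Y-\sum_{m=1}^{k-1}\diff(i_1,\dots,i_m)=\sum_{T:\ i_\ell\notin T \text{ for all } \ell<k}X_T$. Applying the workhorse fact with $W=\cS_{i_k}$ gives $\diff(i_1,\dots,i_k)=\sum_{T:\ i_k\in T,\ i_\ell\notin T \text{ for all } \ell<k}X_T$, closing the induction.

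To finish, note that by definition of the information sets, $\cS_{i_k}\setminus\bigcup_{\ell<k}\cS_{i_\ell}=\{X_T : i_k\in T,\ i_\ell\notin T \text{ for all }\ell<k\}$, so one more application of the workhorse fact (with $\A$ ranging over all $T\subseteq[n]$ and $W$ this set) shows that $\E[Y\mid\cS_{i_k}\setminus\bigcup_{\ell<k}\cS_{i_\ell}]$ equals the same sum, which is $\diff(L)$. I expect the only mildly delicate point to be the ``appears at most once'' cancellation in the inductive step; it requires a little care when $L$ has repeated agents, but the $\min$ argument is robust to repeats since it refers only to the set $\{i_1,\dots,i_{k-1}\}$. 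Everything else is bookkeeping built on the single independence fact.
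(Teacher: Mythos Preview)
Your proof is correct and follows essentially the same route as the paper's: induction on $k$, with the key step being that each $X_T$ with $T\cap\{i_1,\dots,i_{k-1}\}\neq\emptyset$ is picked up exactly once in the sum $\sum_{m<k}\diff(i_1,\dots,i_m)$, namely by the term whose index $m$ is the smallest with $i_m\in T$. The only cosmetic difference is that you first simplify $Y-\sum_{m<k}\diff(i_1,\dots,i_m)$ and then condition on $\cS_{i_k}$, whereas the paper conditions each summand separately before collapsing the telescoping sum; both rely on the same ``earliest index'' combinatorics and the same independence fact.
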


\begin{proof}
    We prove by induction over $k$ that for $|L| = k$,  $\diff(L) =  \E[Y \mid \cS_{i_k} \backslash \cup_{\ell < k} \cS_{i_{\ell}} ]$. 
    Let $k = 1$. 
    Then $\diff(L) = \E[Y \mid \cS_{i_1}] = \E[Y \mid \cS_{i_1} \backslash \emptyset]$.
    Now, assume the statement holds for all ordered lists $L$ with $1 \leq |L| \leq k$.
    Then consider an ordered list $L' = (i_1, \dots, i_{k+1})$ with $|L| = k + 1$. 
    We have
    \begin{align*}
        \diff(L') &= \E [Y \mid \cS_{i_{k+1}}] - \E[ \textstyle\sum_{1 \leq j \leq k} \diff(i_1, \dots, i_k) \mid  \cS_{i_{k+1}}] \\
        &= \E\left[Y \mid \cS_{i_{k+1}}\right] - \sum_{1 \leq j \leq k} \E\left[ \E \left[Y \mid \cS_{i_j} \backslash \cup_{\ell < j} S_{\ell_j}\right] \mid \cS_{i_{k+1}}\right] \\
        &= \sum_{T \ni i_{k+1}} X_T - \sum_{1 \leq j \leq k} \quad \sum_{\substack{T:  \{i_{k+1}, i_j\} \subseteq T, \\ i_{\ell} \notin T \; \forall \ell < j}} X_T \\
        &= \sum_{\substack{T: i_{k+1} \in T, \\i_{\ell} \notin T \; \forall \ell < k+1}} X_T \\
        &= \E[Y \mid \cS_{i_{k+1}} \backslash \cup_{\ell < k+1} S_{i_{\ell}}].
    \end{align*}
    The fourth equality holds because each subset $T$ satisfying $i_{\ell}, i_{k+1} \in T$ for some $\ell < k+1$ must be covered exactly once in the right-hand sum. 
    That is, $T$ it uniquely identified by the earliest index $\ell$ such that $i_{\ell} \in T$. 
\end{proof}

We define a \emph{difference query set} over an ordered list $L \coloneqq (i_1, \dots, i_d)$ of agents as $\Q_{\diff}(L) \coloneqq \{ \diff(i_1), \diff(i_1, i_2), \dots, \diff(i_1, \dots, i_d) \}.$
We can then show that $\Q_{\diff}(L)$ is DAG-elicited by an outdegree-1 DAG $G$, so that each of its complexity measures is bounded by $|L|$ (see Appendix~\ref{app:optimal} for formal details).

\begin{proposition} \label{prop:diff-query-order}
    $\Q_{\diff}(L)$ is DAG-elicitable, with $\orderc(\Q_{\diff}(L)) = |L|$, $\orderc(\Q_{\diff}(L)) \leq |L|$, $\agentc(\Q_{\diff}(L)) \leq |L|$.
\end{proposition}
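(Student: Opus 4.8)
The plan is to build one directed acyclic graph $G$ that DAG-elicits $\Q_{\diff}(L)$ and to read off all three complexity bounds from it. Write $L = (i_1, \dots, i_d)$ and, for $1 \le k \le d$, abbreviate $Q_k \coloneqq \diff(i_1, \dots, i_k)$, so that $\Q_{\diff}(L) = \{Q_1, \dots, Q_d\}$ has exactly $d = |L|$ elements, one per prefix of $L$; this already gives $\queryc(\Q_{\diff}(L)) = |L|$ (the first displayed clause, which should read $\queryc$ rather than $\orderc$, paralleling Proposition~\ref{prop:inter-query-order}). For the graph, take $V \coloneqq \Q_{\diff}(L) \cup \{Y\}$, put an edge from $Q_1$ to $Y$, and for each $k \ge 2$ put edges from $Q_k$ to $Y$ and to every $Q_\ell$ with $\ell < k$.

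Next I would check that $G$ is a valid elicitation DAG. Acyclicity holds because every edge goes from some $Q_k$ to a vertex of strictly smaller index (ranking $Y$ below all $Q_k$), and $Y$ is the unique sink since it has no out-edge while every $Q_k$ has at least the edge $Q_k \to Y$. For elicitability of each query against its out-neighborhood: $Q_1 = Y_{i_1} = \E[Y \mid \cS_{i_1}]$ is elicited with respect to $N_G(Q_1) = \{Y\}$ by the quadratic score; and for $k \ge 2$ the definition of $\diff$ rearranges to $Q_k = \E\big[\,Y - \sum_{\ell=1}^{k-1} Q_\ell \mid \cS_{i_k}\,\big]$, a linear query against $N_G(Q_k) = \{Y, Q_1, \dots, Q_{k-1}\}$, hence elicited by $\pi(r, N_G(Q_k)) = 1 - \big(r - (Y - \sum_{\ell < k} Q_\ell)\big)^2$, exactly the payment recorded after the definition of linear aggregation rules. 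Thus $G \in \gset(\Q_{\diff}(L))$, which establishes DAG-elicitability (and incidentally shows $\Q_{\diff}(L)$ paired with a linear $f$ is a linear aggregation rule).

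The remaining two bounds come from parallel inductions on $k$ along $G$. For order complexity, I claim $\ell_G(Q_k) = k$: the base case is the edge $Q_1 \to Y$; for $k \ge 2$, any path from $Q_k$ to $Y$ is either the direct edge (length $1$) or begins $Q_k \to Q_\ell$ with $\ell < k$ followed by a path of length at most $\ell_G(Q_\ell) = \ell$, so $\ell_G(Q_k) = 1 + \max_{\ell < k}\ell = k$. Hence $\max_{Q} \ell_G(Q) = \ell_G(Q_d) = d$ and $\orderc(\Q_{\diff}(L)) \le |L|$. For agent complexity, the same induction gives $N_G^*(Q_k) = \{Q_1, \dots, Q_k\}$ (the reachable queries, $Q_k$ included), so $I_G^*(Q_k) = \{\agent(Q_\ell) : \ell \le k\} = \{i_1, \dots, i_k\}$ and $|I_G^*(Q_k)| \le k \le d$; therefore $\agentc(\Q_{\diff}(L)) \le |L|$.

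I do not expect a serious obstacle; the only care needed is bookkeeping. The points to get right are: (i) $I_G^*(Q)$ collects only agents attached to queries, so reachability of the sink $Y$ does not inflate the count, which is what keeps the agent bound at $k$ rather than $k+1$; (ii) the payment used for $Q_k$ is a legitimate function of $(r,\R)$ with $\R = N_G(Q_k)$, i.e. $Y$ and all of $Q_1, \dots, Q_{k-1}$ are genuinely out-neighbors of $Q_k$, which holds by construction of the edge set; and (iii) matching the statement, noting the $\queryc$/$\orderc$ typo in its first clause. I would not lean on the prose remark that $G$ can be taken out-degree $1$: the out-degree-$k$ graph above already certifies all three bounds, and that is all the proposition requires.
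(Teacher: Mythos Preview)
Your proof is correct and follows essentially the same approach as the paper: build the single DAG with $Q_k$ pointing to $Y$ and to all $Q_\ell$ with $\ell<k$, exhibit the quadratic payments coming straight from the definition of $\diff$, and read off the three complexity bounds. You are in fact a bit more careful than the paper's version---you explicitly include the $Q_k \to Y$ edges for $k\ge 2$ (needed since the payment for $Q_k$ depends on $Y$), verify acyclicity and the unique-sink condition, and flag the $\queryc$/$\orderc$ typo---but the underlying argument is identical.
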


\begin{definition}
    A (determistic) difference aggregation rule $R = (\Q, f)$ is an DAG-elicitable rule with $\Q = \Q_{\diff}(L)$ for some ordered list of agents $L$, and $f(\Q) = \sum_{Q \in \Q} \beta_Q Q$ for some set of real-valued constants $\{\beta_Q\}_{Q \in \Q}$.
    We define $P$ as a \emph{randomized difference aggregation rule} if $(\Q, f)$ is a difference aggregation rule for each $(\Q, f) \in \supp(P_{Q,f})$.
\end{definition}

\subsection{Optimal aggregation with difference queries}
In Theorem~\ref{thm:opt-intersection}, we showed that optimal aggregation is achievable with query complexity $2^n$. 
While it seems plausible that the principal would need all $2^n$ queries to fully aggregate $Y$, given there are $2^n$ random variables embedded in its sum, we find that one can actually decrease query complexity of optimal aggregation significantly. 
Intuitively, difference queries allow us to ask each expert what information they would \emph{add} to the current estimation. 
By induction over difference queries up to expert $n$, this approach allows us to fully re-construct $Y$ with only $n$ queries. 

\begin{theorem} \label{thm:opt-difference}
    There exists a deterministic difference aggregation rule $R$ that implements $A^*$ with $\agentc(R) \leq n$, $\orderc(R) \leq n$, and $\queryc(R) = n$.
\end{theorem}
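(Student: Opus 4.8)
The plan is to instantiate a difference aggregation rule on the list $L = (1,2,\dots,n)$ of all agents in their natural order, and to show that the simplest possible linear aggregation function, $f(\Q) = \sum_{Q \in \Q} Q$ (i.e. $\beta_Q = 1$ for every query), already outputs $A^*$ exactly. With this choice $\Q = \Q_{\diff}(L)$ with $|L| = n$, so $R = (\Q, f)$ is a (deterministic) difference aggregation rule, and Proposition~\ref{prop:diff-query-order} immediately gives $\queryc(R) = n$, $\orderc(R) \le n$, and $\agentc(R) \le n$. So the whole content is the correctness claim $f(\Q) = A^*$.

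For correctness, the first step is to read off the closed form of each query in the partial information model. By Proposition~\ref{prop:diff-queries}, $\diff(1,\dots,k) = \E[Y \mid \cS_k \backslash \cup_{\ell < k}\cS_\ell]$; unwinding the definition of $\cS_i$ in the partial information model (equivalently, reading off the penultimate line of the proof of Proposition~\ref{prop:diff-queries} with $i_j = j$) this equals $\sum_{T : k \in T,\ \ell \notin T\ \forall \ell < k} X_T = \sum_{T : \min T = k} X_T$. The second step is the elementary observation that the sets $\{T \subseteq [n] : \min T = k\}$, for $k = 1,\dots,n$, partition the nonempty subsets of $[n]$, since every nonempty $T$ has a unique minimum element.

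Combining these two steps and using $X_\emptyset = 0$ almost surely,
\[
f(\Q) = \sum_{k=1}^n \diff(1,\dots,k) = \sum_{k=1}^n \ \sum_{T : \min T = k} X_T = \sum_{\emptyset \neq T \subseteq [n]} X_T = \sum_{T \subseteq [n]} X_T = Y = A^* ,
\]
where the last equalities are the definition of $Y$ and of $A^*$ in the partial information model. Hence $L(R) = \E[(A^* - Y)^2] = 0 = L(A^*)$, so $R$ implements $A^*$ (in fact with $\error(R) = 0$).

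I do not expect a genuine obstacle: everything reduces to the already-established closed form for difference queries (Proposition~\ref{prop:diff-queries}) together with the partition-by-minimum observation, while DAG-elicitability and the complexity bookkeeping are handled wholesale by Proposition~\ref{prop:diff-query-order}. The only point meriting a sentence of care is that it is precisely the telescoping structure of difference queries — each query $\diff(1,\dots,k)$ contributes exactly the signals "new" to agent $k$ relative to agents $1,\dots,k-1$ — that makes a fixed order of $L$ (any order works; I take $(1,\dots,n)$ for transparency) assemble into $Y$ with all coefficients equal to one. One may additionally invoke Proposition~\ref{prop:linear-approximators} to note a priori that $f(\Q)$ has the form $\sum_T c_T X_T$, so that the computation above is just the identification $c_T \equiv 1$.
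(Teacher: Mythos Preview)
Your proposal is correct and essentially identical to the paper's own proof: take $L=(1,\dots,n)$, sum the difference queries with unit coefficients, use Proposition~\ref{prop:diff-queries} to identify $\diff(1,\dots,k)=\sum_{T:\min T=k}X_T$, observe that these partition the nonempty subsets, and invoke Proposition~\ref{prop:diff-query-order} for the complexity bounds. The only cosmetic difference is that you phrase the partition via ``$\min T = k$'' while the paper says ``smallest index in $T$ is $i$''.
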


\begin{proof}
    Consider the DAG-elicitable query set $\Q_{\diff}(L)$ for $L = (1,2, \dots, n)$, which by Proposition~\ref{prop:diff-query-order} is of agent and order complexity at most $n$, and query complexity $n$.
    Let $Q_i \coloneqq \diff(1,2,\dots,i)$, so that $\Q_{\diff}(L) = (Q_1, Q_2,\dots, Q_n)$.
    By Proposition~\ref{prop:diff-queries}, 
    \[Q_i = \E[ Y \mid \cS_i \backslash \cup_{j < i} \cS_j] =  \sum_{\substack{T: i \in T, \\j \notin T \; \forall j < i}} X_T.\]
    Since each $Q_i$ collects exactly those $X_T$ whose smallest index in $T$ is $i$, and every subset $T \subseteq [n]$ corresponds to a some unique smallest element $i$, the family of queries $\{Q_i\}_{i=1}^n$ partition and cover the $X_T$'s, and we have
    $A = \sum_{i\in[n]} Q_i = Y$.
    It follows immediately that $\error(A) = 0$ with probability one.

\end{proof}

\section{Approximate aggregation} \label{sec:approximate}

Since we have shown that optimal aggregation is possible with agent, order, and query complexity $n$, a logical next step is to ask if good error rates are still achievable with lower complexity values.
In this section, we restrict approximators to those achievable by aggregation rules of (1) limited query complexity or (2) limited agent and order complexity.
We then give upper and lower bounds on error in both regimes. 

\subsection{Bounded query complexity} \label{subsec:query}
One can imagine a principal is constrained to only sample a small number $d < n$ of queries from the experts when attempting to pool information.

Formally, let $\Aquery$ denote the set of approximators $A$ that are implented by DAG-elicitable, randomized aggregation rules restricted to query complexity at most $d$.
While it seems plausible that the principal has no hope in achieving good approximations under limited query complexity, we find in Theorem~\ref{thm:limited-query-error} a clean linear relationship between $d$ and achievable error.

\begin{theorem} \label{thm:limited-query-error}
    Let $d < n$. Then
    $\min_{A \in \Aquery} \error(A) = 1 - \frac{d}{n}$.
\end{theorem}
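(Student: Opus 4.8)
The plan is to prove the lower and upper bounds on $\min_{A \in \Aquery} \error(A)$ separately. For the \emph{upper bound} $\min_{A \in \Aquery}\error(A) \le 1 - \tfrac{d}{n}$, I would exhibit an explicit randomized aggregation rule of query complexity $d$. The natural candidate is a symmetrization of the difference-query construction from Theorem~\ref{thm:opt-difference}: draw a uniformly random subset (or ordered list) of $d$ agents out of $n$, and on that list run the difference aggregation rule $\Q_{\diff}(L)$ with all coefficients $\beta_Q = 1$. For a fixed list $L=(i_1,\dots,i_d)$, Theorem~\ref{thm:opt-difference}'s argument shows $\sum_{j} \diff(i_1,\dots,i_j) = \sum_{T:\, T\cap\{i_1,\dots,i_d\}\neq\emptyset} X_T$, i.e. the rule exactly recovers every $X_T$ that touches the sampled set and misses every $X_T$ disjoint from it. Hence the squared error for a fixed $L$ is $\sum_{T:\, T\cap L = \emptyset}\Var(X_T)$, and averaging over a uniformly random $d$-subset $L$, the probability that a given nonempty $T$ is missed is $\binom{n-|T|}{d}/\binom{n}{d} \le \binom{n-1}{d}/\binom{n}{d} = \tfrac{n-d}{n} = 1-\tfrac{d}{n}$. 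So $L(A) \le (1-\tfrac dn)\sum_T \Var(X_T) = (1-\tfrac dn)\Var(Y)$ for every $\D$, giving $\error(A)\le 1-\tfrac dn$.

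For the \emph{lower bound} $\min_{A\in\Aquery}\error(A) \ge 1-\tfrac dn$, I would construct a hard instance. Take the $n$ singleton signals $X_{\{1\}},\dots,X_{\{n\}}$ each $\mathrm{Normal}(0,1)$ (or any i.i.d. mean-zero variance-1 signals) and all other $X_T = 0$; then $Y=\sum_i X_{\{i\}}$, $\Var(Y)=n$, and agent $i$'s only information is $X_{\{i\}}$. The key structural fact is that in this instance every DAG-elicitable query asked of agent $i$ is measurable with respect to $\cS_i = \{X_{\{i\}}\}$ — indeed, since the signals are independent and mean-zero, any iterated expectation querying agent $i$ collapses to a (linear, by the Gaussian/independence structure) function of $X_{\{i\}}$ alone; more carefully, I would argue by induction up the DAG that each query $Q$ with $\agent(Q)=i$ is a function of $X_{\{i\}}$. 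Therefore an aggregation rule using a query set $\Q$ of size $d$ produces an approximator $A$ measurable with respect to $\{X_{\{i\}} : i \in I\}$ where $I$ is the set of agents appearing in $\Q$, $|I|\le d$. The best such approximator is $\E[Y \mid X_{\{i\}}: i\in I] = \sum_{i\in I} X_{\{i\}}$, with residual error $\sum_{i\notin I}\Var(X_{\{i\}}) = n-|I| \ge n-d$. Randomizing over rules only takes a convex combination, so $L(A)\ge n-d$ for this $\D$, hence $\error(A)\ge \tfrac{n-d}{n}=1-\tfrac dn$.

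\textbf{The main obstacle} I anticipate is making the lower-bound measurability claim fully rigorous: one must show that \emph{no} DAG-elicitable query — not just iterated-expectation or linear ones — asked of agent $i$ on this instance can extract information beyond $X_{\{i\}}$, and that the upstream queries it is scored against, though answered by other agents, also reduce to functions of their own singleton signals so the induction bottoms out correctly. The cleanest route is: (i) show by induction on the topological order of $G$ (from the sink $Y$ upward) that every query $Q' \in \Q$ is $\sigma(X_{\{\agent(Q')\}})$-measurable on this instance, using that $Q' = \E[\sum \alpha\, Q'' \mid \cS_{\agent(Q')}]$ for elicitable-with-quadratic-score queries and that conditioning independent mean-zero variables kills all terms not observed by $\agent(Q')$; (ii) conclude $f(\Q)$ is measurable w.r.t. the $\le d$ singleton signals of agents in $\Q$; (iii) apply the standard fact that conditional expectation is the $L^2$-projection. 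A secondary check is confirming the binomial inequality $\binom{n-|T|}{d}\le\binom{n-1}{d}$ holds for all nonempty $T$ (it does, since $|T|\ge1$) and that the worst case $1-\tfrac dn$ is attained in the limit by putting all variance on singletons — which matches the lower bound, closing the gap.
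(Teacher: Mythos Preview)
Your proof is correct and essentially identical to the paper's: the upper bound via a uniformly random $d$-subset of agents combined with difference queries, and the lower bound via the all-singleton-variance instance, match the paper's Lemmas~\ref{lem:limited-query-upper-bound} and~\ref{lem:limited-query-lower-bound} almost line for line. Your anticipated ``main obstacle'' is not actually an obstacle: by the paper's definition every query $Q\in\Q_i$ is a function $Q:(\cS_i,\D)\to\reals$, so on the singleton instance it is automatically $\sigma(X_{\{i\}})$-measurable with no induction required---the paper simply notes that $\queryc(\Q)\le d$ forces $\agentc(\Q)\le d$ and concludes the approximator depends on at most $d$ agents' signals.
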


That is, each additional query reduces the best achievable error by $1/n$, and thus carries the same informational value. 
This result is therefore optimistic, in that the principal is guaranteed steady, predictable gains over additional queries. 
On the other hand, if the principal is constrained to a small number of queries, there are no early exponential improvements in adding one or two more queries to the set.

We leave the proof to Appendix~\ref{app:query} and give a sketch here. 
To prove Theorem~\ref{thm:limited-query-error}, we first show that $1 - \frac{d}{n}$ is a lower bound on error for approximators in $\Aquery$. 
Specifically, consider an adversarial distribution $\D$ which puts all variance on the singleton variables $\{X_{\{i\}}\}_{i=1}^n$.
Applying the minimax inequality, we allow the principal to see $\D$ first and use a deterministic aggregation rule to implement an approximator. 
Since these aggregation rules must be of query complexity at most $d$, they must also be of agent complexity at most $d$.
Even if the principal can see $\D$ first, the best they can do is take the expected sum of singleton set variables $\{X_{\{i\}}\}_{i \in S}$ for some set $S$ of agents with size at most $d$.
It follows that the variance of the $n-d$ remaining singleton variables under $\D$ induce an error of $1-\frac{d}{n}$.

We then show there exists an approximator $A \in \Aquery$ that achieves the lower bound.
This approximator is implemented by a randomized difference aggregation rule: specifically, we sample a subset $S$ of agents with size $d$ uniformly at random, and then sum up the corresponding difference queries over $S$ (mapped to an arbitrary ordered list) to recover the estimator $\sum_{T: T \cap S \neq \emptyset} X_T$.
An adversary's best response is then to concentrate all variance on singleton variables $\{X_{\{i\}}\}_{i=1}^n$, since those variables are the most likely not covered by the set of queried experts $S$. 
The approximator's error then corresponds to $\Pr[\ones_{\{i\} \cap S = \emptyset} = 1] = {n-1 \choose d} / {n \choose d} = 1 - \frac{d}{n}$.

\subsection{Bounded agent and order complexity} \label{subsec:agent-order}
Since agent and order complexity put more burden on the experts to calculate complicated query responses, we also aim to understand how limiting these complexity measures to realistic magnitudes affect worst-case error. 
Because query complexity limits both agent and order complexity, an immediate corollary of Theorem~\ref{thm:limited-query-error} is that approximators of limited agent and order complexity inherit $1 - \frac{d}{n}$ as an upper bound on achievable error.
We can then ask whether the principal is able to do any better when they are not constrained by query numbers. 

Throughout the section, we restrict attention to randomized \emph{linear} aggregation rules. 
We believe it is reasonable to study this class under the partial information model, since we previously showed in Proposition~\ref{prop:linear-approximators} that any approximator implemented by linear aggregation rules is a linear function $\sum_T \beta_T X_T$, as is the random variable $Y$ the principal aims to estimate.
We denote $\Lorder$ as the set of approximators implementable using randomized linear aggregation rules restricted to agent and order complexity at most $d$.
We also let $P_d$ be the set of polynomials of degree at most $d$. 

Our main result indeed improves on the error bounds of Theorem~\ref{thm:limited-query-error} in the limited agent and order complexity case.
We characterizes bounds on the best achievable error under approximators in $\Lorder$, where there are three distinct asymptotic regimes.

\begin{theorem} \label{thm:agent-order-error}
    We obtain the following bounds across different regimes of $d$:
    \begin{itemize}
        \item If $d = o(\sqrt{n})$, $\min_{A \in \Lorder} \error(A) = 1 - \Theta\left(\frac{d^2}{n}\right)$.
        \item If $d = \Theta(\sqrt{n})$, $\lim_{n \to \infty} \min_{A \in \Lorder} \error(A) \in (0,1)$.
        \item If $d = \omega(\sqrt{n})$, $\min_{A \in \Lorder} \error(A) = O\left(e^{\frac{-4d}{\sqrt{n}}}\right)$.\footnote{\cite{linial1990approximate} also characterize an improved upper bound of $2^{-\Omega(n)}$ on error when $d = n-1$ by showing that \emph{Krawtchuk polynomials} satisfy equioscillation constraints.}
    \end{itemize}
\end{theorem}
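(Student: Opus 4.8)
The plan is to reduce the whole question to a classical approximation-theory problem about how well the constant function $1$ can be uniformly approximated on $\{0,1,\dots,n\}$ by polynomials of degree $d$ that vanish at $0$. The reduction runs as follows. By Proposition~\ref{prop:linear-approximators}, any $A \in \Lorder$ has the form $A = \sum_T c_T X_T$, and by symmetry of the worst-case distribution (which concentrates variance on subsets of a fixed size) the only thing that matters is how the coefficient $c_T$ depends on $|T|$. Moreover, the agent/order-complexity-$d$ constraint forces, via the intersection-query decomposition, each $c_T$ to be expressible through a degree-$d$ polynomial evaluated at $|T|$ — intuitively, an order-$d$ linear rule can only ``see'' $d$-wise intersections, so $c_T = p(|T|)$ for some $p \in P_d$, and the contribution of $Y$ as the sink forces $p(0)$ to play the role of the constant term. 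Writing $q(k) \coloneqq 1 - p(k)$, the error becomes
\begin{equation*}
  \error(A) = \max_{0 \le k \le n} q(k)^2, \qquad q \in P_d,\ q(0) = 1,
\end{equation*}
so that $\min_{A\in\Lorder}\error(A) = \big(E_d\big)^2$ where $E_d \coloneqq \min\{ \|q\|_{L^\infty(\{0,\dots,n\})} : q\in P_d,\ q(0)=1\}$. (I would need to be careful: either $k$ ranges over $\{0,\dots,n\}$ or one allows $k=0$ to be included since $X_\emptyset = 0$; the constraint $q(0)=1$ comes precisely from the ``no-free-constant'' remark flagged in the margin note before the section.)

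The second step is to pin down $E_d$ in each regime. This is exactly the extremal problem solved by a shifted/scaled Chebyshev polynomial: the minimizer of $\|q\|_{L^\infty[0,n]}$ over $q \in P_d$ with $q(0)=1$ is $q^*(x) = T_d(1 - 2x/n)/T_d(1)$, giving $E_d \le 1/T_d(1 + 0) $ — more precisely, evaluating the Chebyshev polynomial just outside $[-1,1]$, $E_d = 1/\lvert T_d(1+2/n\cdot 0)\rvert$; the relevant quantity is $T_d$ evaluated at the point corresponding to $x=0$, which after the affine change of variables $x \mapsto 1-2x/n$ is the point $1$, so one actually wants the value of $T_d$ slightly outside the interval. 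Carrying this out, $E_d = 1/T_d(1+\tfrac{2}{n}\cdot\text{(something)})$; using $T_d(1+\epsilon) = \cosh(d\,\arcosh(1+\epsilon)) \approx \cosh(d\sqrt{2\epsilon})$ for small $\epsilon$, one gets $T_d(1+2/n) \approx \cosh(2d/\sqrt n)$. Then: if $d = o(\sqrt n)$, $\cosh(2d/\sqrt n) = 1 + 2d^2/n + O(d^4/n^2)$, so $E_d = 1 - \Theta(d^2/n)$ and $E_d^2 = 1 - \Theta(d^2/n)$; if $d = \Theta(\sqrt n)$, the argument is $\Theta(1)$ so $E_d \in (0,1)$ bounded away from both endpoints in the limit; if $d = \omega(\sqrt n)$, $\cosh(2d/\sqrt n) \sim \tfrac12 e^{2d/\sqrt n}$, giving $E_d = O(e^{-2d/\sqrt n})$ and $E_d^2 = O(e^{-4d/\sqrt n})$, matching the stated bound. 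The discreteness of the grid $\{0,\dots,n\}$ versus the continuum $[0,n]$ only helps the upper bound and costs at most a constant in the lower bound, which I would handle by a standard sampling/Markov-brothers argument or by invoking the Krawtchouk-polynomial fact cited in the footnote.

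The third step is the matching lower bound on $\error$, i.e.\ that no linear rule of agent/order complexity $d$ can beat $E_d^2$. Here I would run the minimax argument as in the proof sketch of Theorem~\ref{thm:limited-query-error}: fix an adversarial distribution placing variance only on the size-$k$ subsets for a cleverly chosen $k$ (or a mixture over $k$), let the principal see $\D$ and pick the best deterministic linear rule, reduce its coefficients to a polynomial $p(|T|) \in P_d$ with $p(0)=1$, and observe that its normalized error is $\sum_k w_k q(k)^2$ for weights $w_k$ encoding the adversary's variance allocation; choosing the $w_k$ dual-optimally (supported on the equioscillation points of $q^*$) forces the error down to exactly $E_d^2$ for every $q$, by LP/minimax duality for the Chebyshev problem. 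The main obstacle I anticipate is the first step — rigorously showing that the agent-and-order-complexity-$d$ restriction on a \emph{randomized} linear aggregation rule is \emph{equivalent} (not just implied by) the polynomial constraint ``$c_T = p(|T|)$ with $\deg p \le d$ and $p(0)=1$''. The ``$\le d$ implies polynomial'' direction should follow by tracking the inductive coefficient recursion in the proof of Proposition~\ref{prop:linear-approximators} together with the intersection-query structure (Proposition~\ref{prop:inter-queries}); the converse — that every such polynomial is realizable within complexity $d$ — should follow by exhibiting it as a linear combination of intersection queries $\inter(S)$ with $|S|\le d$ and invoking Proposition~\ref{prop:inter-query-order}. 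Randomization then only convexifies the achievable error, which does not help against the (concave-in-the-rule) worst case, so ex-ante randomization can be dispensed with in the lower bound via Jensen/minimax. Once that equivalence is nailed down, the rest is the classical Chebyshev computation above.
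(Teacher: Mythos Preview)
Your approach is essentially the same as the paper's: reduce $\min_{A\in\Lorder}\error(A)$ to the polynomial extremal problem $\min\{\max_t q(t)^2 : q\in P_d,\ q(0)=1\}$ via the intersection-query decomposition and a symmetry/derandomization argument, solve the continuous relaxation by a shifted Chebyshev polynomial, and handle the discrete grid by the Markov brothers' inequality. The asymptotic computations you outline match the paper's Lemmas on the minimax value and the lower bound almost line for line.

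There is one point you must get right, and your parenthetical flags it without resolving it: the domain of the maximum is $\{1,\dots,n\}$, \emph{not} $\{0,\dots,n\}$. If $k=0$ were included you would have $\max_k q(k)^2 \ge q(0)^2 = 1$, and the whole bound would collapse to $\error\ge 1$. The point is that $X_\emptyset = 0$ contributes nothing to $L(A)$ or $\Var(Y)$, so the coefficient mismatch at $|T|=0$ is irrelevant; what survives of that term is only the \emph{constraint} $q(0)=1$ (equivalently $p(0)=0$), coming from the fact that intersection queries have no $|S|=0$ term. Once the domain is $[1,n]$, the correct affine map sends $[1,n]\to[-1,1]$ and places $x=0$ at $-\tfrac{n+1}{n-1}$, strictly outside $[-1,1]$; this is what makes $E_d = 1/T_d\bigl(\tfrac{n+1}{n-1}\bigr)$ and hence $E_d\approx 1/\cosh(2d/\sqrt{n})$. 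Your map $x\mapsto 1-2x/n$ sends $0\mapsto 1$, on the boundary, which is why you had to hand-wave your way to ``$T_d$ slightly outside the interval.'' With the domain fixed, the rest of your computation and the three-regime analysis are correct.
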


\begin{figure}[t!]
\centering
\begin{subfigure}{0.49\linewidth}
    \centering
    \includegraphics[width=\textwidth]{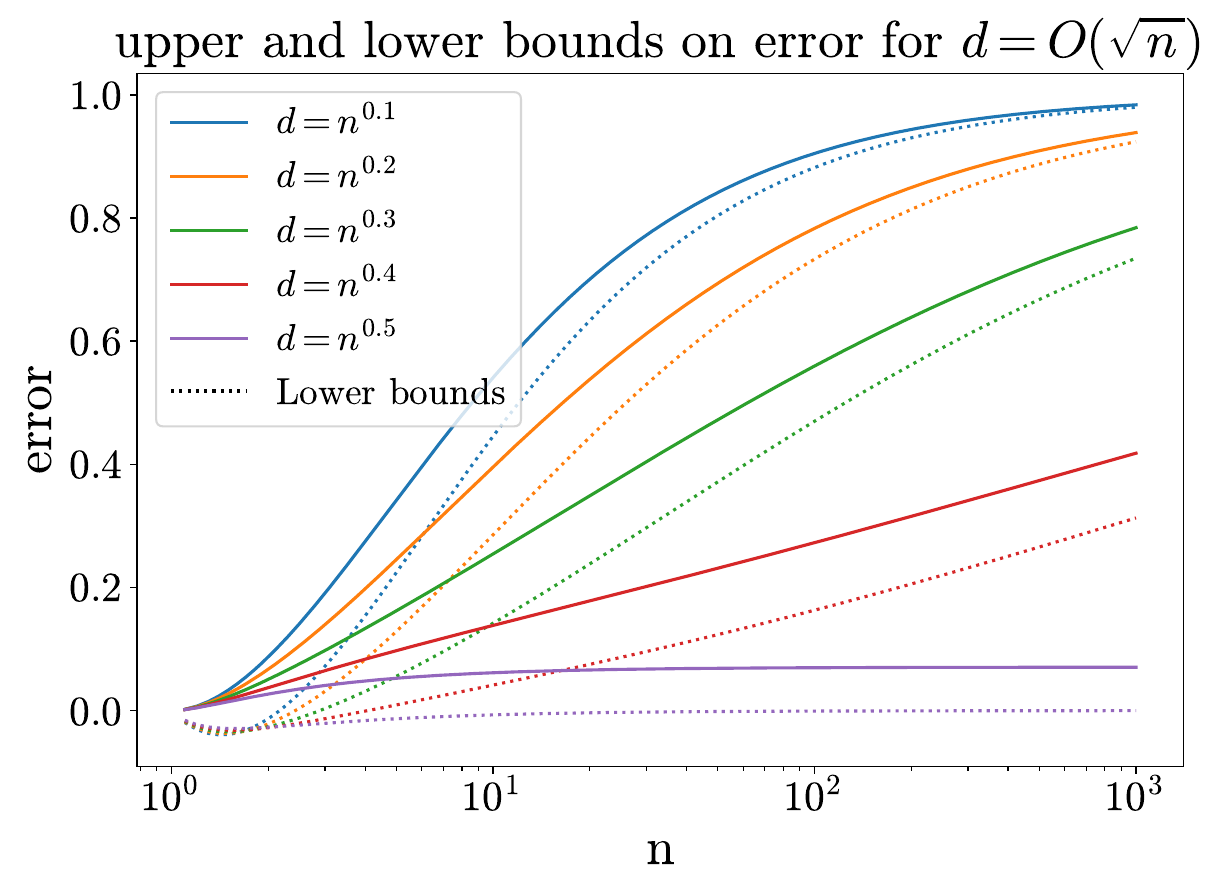}
\end{subfigure}
\begin{subfigure}{0.49\linewidth}
    \centering
    \includegraphics[width=\textwidth]{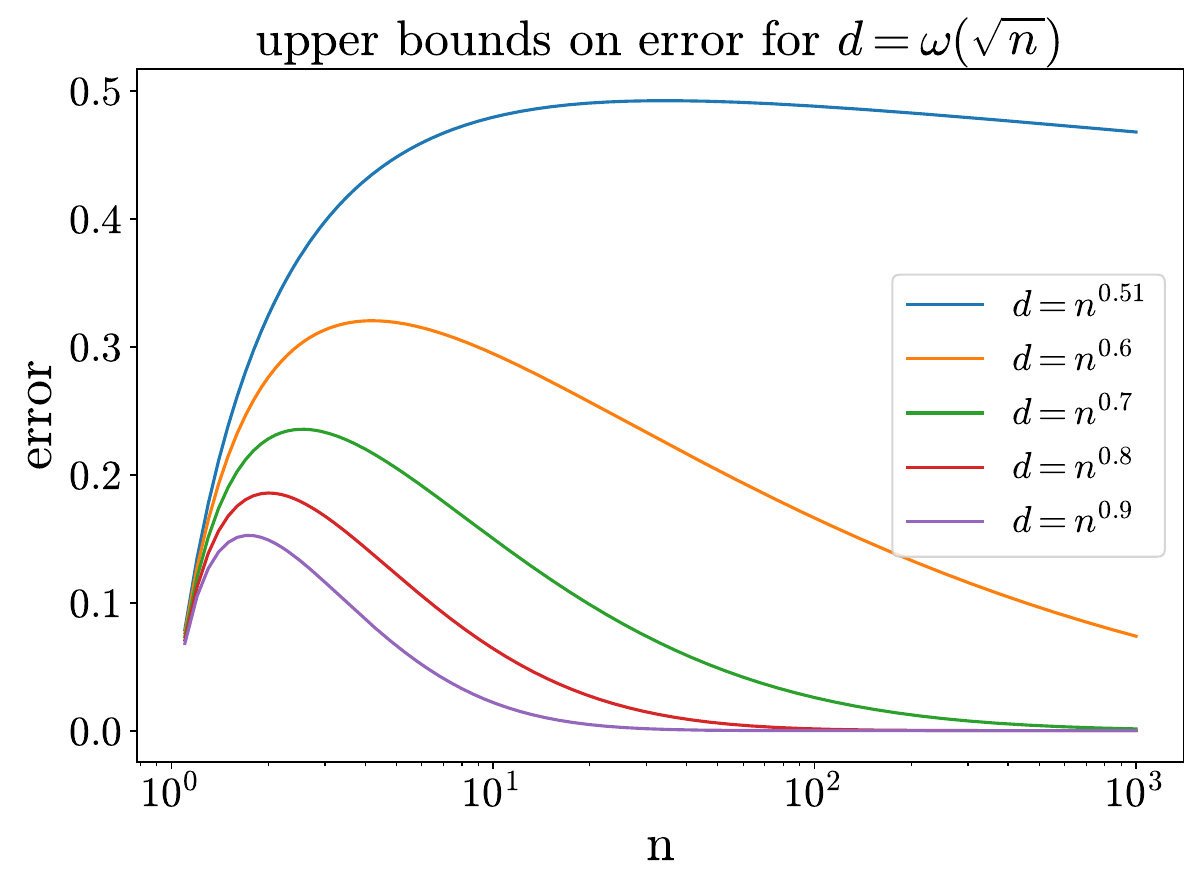}
\end{subfigure}
\caption{
    An illustration of the results stated in Theorem~\ref{thm:agent-order-error}. 
    On the left, we plot upper bounds (filled-in lines) and lower bounds (dotted lines) on error for several settings where $d = O(\sqrt{n})$.
    On the right, we plot upper bounds on error for several settings where $d = \omega(\sqrt{n})$.
    Both graphs are generated using the exact bounds derived in Inequality~\ref{eq:chebyshev-bounds}.
}
\label{fig:error-bounds}
\end{figure}

These results give an almost complete picture of achievable error under limited agent and order complexity.
For $d$ growing smaller than $\sqrt{n}$, error exactly tracks the function $1 - \Theta(\frac{d^2}{n})$, meaning the worst-case error approaches $1$ for a large amount of experts.
Moreover, the quadratic relationship means that there are increasing marginal gains in allowing higher agent and order complexity.
That is, each additional layer of reasoning about other agents or queries increasingly improves error rates (this phenomenon is illustrated in Figure~\ref{fig:error-bounds}).

At $d = \Theta(\sqrt{n})$, there is a phase transition.
In particular, error approaches zero as long as $d$ grows at a rate larger than $\sqrt{n}$; we illustrate this behavior in Figure~\ref{fig:error-bounds}.
We also observe in this regime that there are diminishing marginal gains over $d$ in our upper bounds on error, as the exponential saturates early.
(Note, however, that we are only able to characterize tight upper bounds when $d = \omega(\sqrt{n})$, so that it is feasible better error rates are achievable.)
Thus for reasonable values of $d = \omega(\sqrt{n})$, the tradeoff between error and complexity is quantifiably low, and we can expect high-quality aggregation to be possible \emph{even when restricted by agent and order complexity}.

We note that similar asymptotic trends are observed in the results of~\cite{linial1990approximate}.
Here, the authors study how to approximate the size of the union of $n$ sets, when only intersection sizes for sets of cardinality at most $d$ are available (i.e., the inclusion-exclusion principle cannot be used).
They derive bounds on the worst-case multiplicative error of the estimation, and are able to do so by reducing the error to the same minimax problem that we do for our question. 
The mathematical tools they use to solve the minimax problem are therefore the same as ours.
However, the mapping back to their original problem is quite different, as are the applications of their results. 

While Theorem~\ref{thm:agent-order-error} only characterizes upper bounds on error when $d = \omega(\sqrt{n})$, we are able to prove a weak lower bound on error for all $d$.
Specifically, we can show that it is impossible to achieve optimal error when $d < n$.
\begin{proposition} \label{prop:weak-lower-bound}
    Let $d < n$. 
    Then $\min_{A \in \Lorder} \error(A) > 0$. 
\end{proposition}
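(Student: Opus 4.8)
The plan is a proof by contradiction: assume some $A \in \Lorder$ has $\error(A) = 0$ and contradict $d < n$. First I would reduce to a deterministic statement. By Proposition~\ref{prop:linear-approximators}, a randomized linear rule outputs $A = \sum_T c_T(Z)\,X_T$, where for each realization of the random context $Z$ the coefficients $c_T(Z)$ are determined by the constants $\alpha_{Q,Q'}$, $\beta_Q$ and the DAG, and do not depend on $\D$. Since the $X_T$ are independent and mean-zero, taking $\D$ to place all variance on a single $X_{T_0}$ gives $\error(A) \ge \E_Z[(c_{T_0}(Z)-1)^2]$; hence $\error(A)=0$ forces $c_{T_0}(Z)=1$ almost surely, for every nonempty $T_0$. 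As there are only finitely many subsets, these probability-one events hold simultaneously for some realization $Z$ lying in the support of $P$, which therefore yields a \emph{deterministic} linear aggregation rule of agent complexity $\le d$ with $c_T = 1$ for all $T\neq\emptyset$ (and $c_\emptyset = 0$ automatically, since the coefficient of $X_\emptyset$ is always $0$); i.e.\ it implements $A^*$ exactly. So it suffices to show: no deterministic linear aggregation rule of agent complexity $\le d < n$ implements $A^*$.

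For this deterministic statement I would unroll the coefficient recursion from the proof of Proposition~\ref{prop:linear-approximators}. Fix a DAG $G$ that DAG-elicits the query set and set $c_{Y,T}\equiv 1$; then $c_{Q,T} = \ones_{\agent(Q)\in T}\sum_{Q'\in N_G(Q)}\alpha_{Q,Q'}\,c_{Q',T}$, and expanding all the way down to $Y$ gives $c_{Q,T} = \sum_{P} w_P\,\ones_{J(P)\subseteq T}$, where $P$ ranges over directed $Q\!\to\!Y$ paths in $G$, $w_P$ is the product of the $\alpha$'s along $P$, and $J(P)$ is the set of agents answering the queries on $P$. Every query-node on such a path is reachable from $Q$, so $J(P)\subseteq I_G^*(Q)$ and hence $|J(P)|\le \agentc(\Q)\le d$. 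Summing over all queries weighted by the $\beta_Q$, the aggregate takes the form $c_T = \sum_\nu \lambda_\nu\,\ones_{J_\nu\subseteq T}$ for finitely many scalars $\lambda_\nu$ and agent-sets $J_\nu$ with $|J_\nu|\le d$.

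To finish, observe that the functions $T\mapsto\ones_{S\subseteq T}$, $S\subseteq[n]$, form a basis of the space of real functions on $2^{[n]}$ (inverting the subset-lattice zeta transform via Möbius), so every such function has a unique expansion in this basis. A rule implementing $A^*$ has $c_T = \ones_{T\neq\emptyset}$, and by Möbius inversion the coefficient of $\ones_{[n]\subseteq T}$ in $T\mapsto\ones_{T\neq\emptyset}$ is $\sum_{R\subseteq[n]}(-1)^{\,n-|R|}\ones_{R\neq\emptyset} = (-1)^{n+1}\neq 0$. But the expansion of $c_T$ from the previous paragraph only involves sets $J_\nu$ of size $\le d<n$, so its $\ones_{[n]\subseteq T}$ coefficient is $0$ — a contradiction. (A variant that ties into the polynomial machinery behind Theorem~\ref{thm:agent-order-error}: averaging $c_T=\sum_\nu\lambda_\nu\ones_{J_\nu\subseteq T}$ over all $T$ with $|T|=k$ produces $\sum_\nu\lambda_\nu\, k^{\underline{|J_\nu|}}/n^{\underline{|J_\nu|}}$, a polynomial in $k$ of degree $\le d$; if it equals $1$ at $k=1,\dots,n$ then, since $d<n$, it is identically $1$, contradicting its value $0$ at $k=0$.)

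The main obstacle is the middle step: setting up the path-expansion of the coefficients cleanly and checking that each path's agent-set lies in $I_G^*(Q)$, so that its size is capped by $\agentc$; once $c_T$ is written as a combination of up-set indicators $\ones_{J_\nu\subseteq T}$ with $|J_\nu|\le d$, the rest is elementary. Two incidental remarks worth including: the argument uses only the agent-complexity bound ($\orderc$ is not needed), and it is sharp — at $d=n$ the set $J_\nu=[n]$ becomes available, consistent with Theorems~\ref{thm:opt-intersection} and~\ref{thm:opt-difference}.
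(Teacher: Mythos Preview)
Your argument is correct and takes a genuinely different route from the paper. The paper's proof strings together the reductions already built for Theorem~\ref{thm:agent-order-error}: Lemma~\ref{lem:linear-intersection} (to intersection rules), Lemmas~\ref{lem:deterministic-order}--\ref{lem:symmetric-functions} (determinize and symmetrize), Lemma~\ref{lem:intersection-minimax-problem} (reduce to $\min_{p\in P_d,\,p(0)=1}\max_{t\in[n]}p(t)^2$), and then invokes the discrete equioscillation Lemma~\ref{lem:discrete-equioscillate} to rule out a degree-$d$ polynomial with $p(0)=1$ vanishing on $\{1,\dots,n\}$. Your proof bypasses all of that machinery: you determinize directly via the single-$T_0$ instances, write $c_T$ as a finite combination of up-set indicators $\ones_{S\subseteq T}$ with $|S|\le d$, and then kill it with one line of M\"obius inversion (the $[n]$-coefficient of $\ones_{T\neq\emptyset}$ is $(-1)^{n+1}\neq 0$). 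This is more elementary --- no Chebyshev, no equioscillation --- and, as you note, uses only the agent-complexity bound. Your parenthetical variant (averaging over $|T|=k$ to get a degree-$d$ polynomial that is $1$ on $\{1,\dots,n\}$ but $0$ at $0$) is essentially the paper's argument stripped of the equioscillation detour.

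One small point worth tightening in the path-expansion step: you write $|J(P)|\le|I_G^*(Q)|\le\agentc(\Q)$, but $\agentc(\Q)=\min_{G'}\max_Q|I_{G'}^*(Q)|$ is a minimum over \emph{all} DAG-eliciting graphs, so the second inequality need not hold for the particular $G$ that carries the linear structure. The cleanest fix is simply to invoke Lemma~\ref{lem:linear-intersection} in place of your hand-rolled path expansion: it hands you an intersection rule with $\agentc\le d$, from which $c_T=\sum_{|S|\le d}\beta_S\,\ones_{S\subseteq T}$ is immediate, and your M\"obius finish applies verbatim.
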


\paragraph{Small $d$ regimes.}
We note in particular that we are able to \emph{exactly} characterize the best achievable error when $d=1$, recovering the worst-case results of~\cite{neymanthesis}.
Moreover, we find the exact best achievable error in small $d$ settings as follows.

\begin{proposition} \label{prop:exact-small-d}
    Let $q(n) = \frac{\sqrt{n} + 1}{\sqrt{n} - 1}$. Then
    \[
    \min_{A \in \Lorder} \error(A) = \frac{4}{\left(q(n)^d + q(n)^{-d}\right)^2}
    \]
    if and only if $d = 1$, $d = 2$ and $n$ is odd, or $d = 3$ and $n \equiv 1 \pmod 4$. 
\end{proposition}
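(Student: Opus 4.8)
The plan is to reduce the optimization over $\Lorder$ to a classical Chebyshev extremal problem on the integer grid $\{1,\dots,n\}$, and then to pin down exactly when that discrete problem attains the value of its continuous relaxation.

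First I would recall the reduction already used to prove Theorem~\ref{thm:agent-order-error}: a symmetric linear aggregation rule of agent and order complexity at most $d$ realizes exactly the coefficient profiles $c_T = p(\lvert T\rvert)$ with $p\in P_d$ and $p(0)=0$, and asymmetric rules do no better, so (writing $g=1-p$)
\[
  \min_{A\in\Lorder}\error(A)\;=\;\Bigl(\,\min_{g\in P_d,\ g(0)=1}\ \max_{1\le m\le n}\lvert g(m)\rvert\Bigr)^{2}\;=:\;M_{\mathrm{disc}}^{2}.
\]
Relaxing the grid $\{1,\dots,n\}$ to the interval $[1,n]$ yields the standard problem solved by the shifted Chebyshev polynomial $g^{*}(x)=T_d(u(x))/T_d(u(0))$, where $T_d$ is the degree-$d$ Chebyshev polynomial of the first kind and $u(x)=(2x-n-1)/(n-1)$ maps $[1,n]$ affinely onto $[-1,1]$. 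Since $u(0)=-\tfrac{n+1}{n-1}$ and $\cosh(\ln q(n))=\tfrac{n+1}{n-1}$, we get $\lvert T_d(u(0))\rvert=\cosh(d\ln q(n))=\tfrac12\bigl(q(n)^d+q(n)^{-d}\bigr)$, so $g^*$ is feasible, $\max_{x\in[1,n]}\lvert g^*(x)\rvert=\tfrac{2}{q(n)^d+q(n)^{-d}}=:M_{\mathrm{cont}}$, and $g^*$ equioscillates between $\pm M_{\mathrm{cont}}$ at exactly the $d+1$ points $x_k=\tfrac12\bigl((n-1)\cos(k\pi/d)+n+1\bigr)$, $k=0,\dots,d$, with $x_0=n$ and $x_d=1$. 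In particular $M_{\mathrm{disc}}\le M_{\mathrm{cont}}$ always, which also reproves the upper bound behind Theorem~\ref{thm:agent-order-error}.

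The heart of the argument is the claim that $M_{\mathrm{disc}}=M_{\mathrm{cont}}$ if and only if every $x_k$ is an integer (all $x_k$ already lie in $[1,n]$). For the ``if'' direction, $g^*$ is then feasible for the discrete problem and equioscillates at $d+1$ grid points; given any competitor $h\in P_d$ with $h(0)=1$ and $\max_{1\le m\le n}\lvert h(m)\rvert\le M_{\mathrm{cont}}$, the polynomial $\phi:=g^*-h$ satisfies $\phi(0)=0$ and has $(-1)^k\phi(x_k)$ all of one fixed sign (or zero), so $\psi(x):=\phi(x)/x\in P_{d-1}$ changes sign at least $d$ times on $[1,n]$, which is impossible for a nonzero element of $P_{d-1}$; hence $h=g^*$ and $M_{\mathrm{disc}}=M_{\mathrm{cont}}$. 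For the ``only if'' direction, if $M_{\mathrm{disc}}=M_{\mathrm{cont}}$ then $\max_{1\le m\le n}\lvert g^*(m)\rvert$ is squeezed to equal $M_{\mathrm{cont}}$, so $g^*$ is a discrete optimizer; since $d<n$, the discrete problem also has a \emph{unique} optimizer, and it equioscillates at $d+1$ grid nodes (discrete Chebyshev alternation for the Haar system $\{1,x,\dots,x^{d-1}\}$, via the substitution $g(x)=1+xr(x)$ and weighted approximation of $-1/x$), so that optimizer is $g^*$; but the only points of $[1,n]$ at which $\lvert g^*\rvert=M_{\mathrm{cont}}$ are $x_0,\dots,x_d$, so all $x_k$ are integers.

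It then remains to classify when all $x_k\in\integers$. The endpoints $x_0=n$ and $x_d=1$ are always integers, so $d=1$ always works; for $d=2$, $x_1=(n+1)/2\in\integers$ iff $n$ is odd; for $d=3$, $x_1=(3n+1)/4$ and $x_2=(n+3)/4$ are both integers iff $n\equiv1\pmod 4$; and for $d\ge4$, $x_1=\tfrac12\bigl((n-1)\cos(\pi/d)+n+1\bigr)$ is irrational because $n-1>0$ and $\cos(\pi/d)$ is irrational for $d\ge4$ by Niven's theorem. Combining this with the previous paragraph and substituting $M_{\mathrm{disc}}=M_{\mathrm{cont}}$ into the reduction yields the stated value $\tfrac{4}{(q(n)^d+q(n)^{-d})^2}$ in exactly the listed cases. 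The main obstacle is the ``only if'' direction: it hinges on knowing that the discrete minimax problem has a unique optimizer equioscillating at \emph{precisely} $d+1$ grid nodes, so that tightness forces that optimizer to be $g^*$ and hence all $d+1$ of its continuous extrema to be integers; making this rigorous requires the discrete Chebyshev alternation theorem over $n\ge d+1$ nodes, together with the observation that $\{x\in[1,n]:\lvert g^*(x)\rvert=M_{\mathrm{cont}}\}$ is exactly $\{x_0,\dots,x_d\}$, which follows because $u$ is an affine bijection and $\lvert T_d\rvert=1$ on $[-1,1]$ only at the $d+1$ Chebyshev nodes.
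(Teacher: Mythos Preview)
Your proposal is correct and follows essentially the same route as the paper: reduce to the polynomial minimax problem (the paper's Lemma~\ref{lem:intersection-minimax-problem}), identify the continuous optimum as the shifted Chebyshev polynomial with value $2/(q(n)^d+q(n)^{-d})$ (Lemma~\ref{lem:minmax-chebyshev-value}), argue via discrete equioscillation that the discrete and continuous values coincide exactly when all Chebyshev extrema $x_k=\tfrac12((n-1)\cos(k\pi/d)+n+1)$ are integers (Lemma~\ref{lem:discrete-equioscillate} and Lemma~\ref{lem:exact-d-bounds}), and classify those cases with Niven's theorem. The only differences are presentational: you re-derive the sufficiency of discrete alternation inline via the root-count on $\phi/x$ rather than citing a lemma, and you are somewhat more explicit than the paper in the ``only if'' step (observing that $M_{\mathrm{disc}}=M_{\mathrm{cont}}$ forces $g^*$ itself to be a discrete optimizer, whose alternation nodes must then coincide with $\{x_0,\dots,x_d\}$); the paper leaves that squeeze argument implicit.
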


\subsection{Proof sketch of Theorem~\ref{thm:agent-order-error}}
We now outline the proof of Theorem~\ref{thm:agent-order-error}.
Tools from approximation theory, specifically \emph{equioscillation theorems} and \emph{Chebyshev polynomials}, play a key role.

\subsubsection{Reducing the class of approximators}
We begin by showing that any approximator $A$ implemented by a randomized linear aggregation rule of limited agent and order complexity $d$ can also be implemented with a randomized \emph{intersection} aggregation rule under the same agent and order complexity bounds.
The proof proceeds by induction over the order of queries, transforming each linear query of order $k$ to an intersection query of order and agent complexity at most $k$.\footnote{We note that Lemma~\ref{lem:linear-intersection}'s proof holds for the general model, not just the partial information model, since one could simply replace the intersection queries with more general iterated expectation queries. }

\mary{Note: I think this also implies that WLOG, $\agentc(R) \leq \orderc(R)$.}
\begin{lemma} \label{lem:linear-intersection}
    Let $A \in \Lorder$.
    Then $A$ is implemented by a randomized intersection aggregation rule $P$ with $\agentc(P) \leq d$ and $\orderc(P) \leq d$.
\end{lemma}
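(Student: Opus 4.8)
The plan is to prove Lemma~\ref{lem:linear-intersection} by showing that every linear query appearing in a linear aggregation rule can be rewritten as a linear combination of intersection queries of no greater order or agent complexity, and that this rewriting is compatible with the DAG structure. I would first reduce to the deterministic case: since a randomized linear aggregation rule is a distribution over deterministic linear aggregation rules, it suffices to show that each deterministic linear aggregation rule $R = (\Q, f)$ of agent and order complexity $\le d$ can be reimplemented by a deterministic intersection aggregation rule of agent and order complexity $\le d$; applying this pointwise to the support of $P$ and keeping the same mixing weights gives the randomized statement. (One subtlety: the intersection query set $\Q_{\inter}(d)$ is a \emph{fixed} set of $2^d$ queries, so after the rewrite the new rule formally uses $\Q_{\inter}(d')$ for $d' = \orderc(R) \le d$; we simply set $\beta_Q = 0$ for unused queries, which does not change the implemented approximator.)

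The core is an induction on the order $k$ of a query in a fixed DAG $G$ that DAG-elicits $\Q$. The inductive claim I would carry is: for each query $Q \in \Q$ with $\agent(Q) = i$, of order $k$ in $G$, the random variable $Q$ equals $\sum_{S} \gamma_{Q,S}\, \inter(S)$ where the sum is over sets $S$ with $i \in S$, $|S| \le k$, and moreover $S \subseteq I^*_G(Q)$ — i.e. every index appearing is among the agents already "relevant" to $Q$ under $G$. The base case $k=1$ is immediate: $Q = \E[\alpha_{Q,Y} Y \mid \cS_i] = \alpha_{Q,Y} Y_i = \alpha_{Q,Y}\,\inter(\{i\})$. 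For the inductive step, write $Q = \E[\sum_{Q' \in N_G(Q)} \alpha_{Q,Q'} Q' \mid \cS_i]$; by the inductive hypothesis each $Q'$ (of order $\le k-1$) equals $\sum_{S'} \gamma_{Q',S'}\inter(S')$ with $S' \subseteq I^*_G(Q')$ and $|S'| \le k-1$. Then I use the key identity that $\E[\inter(S') \mid \cS_i] = \E[\sum_{T : S' \subseteq T} X_T \mid \cS_i] = \sum_{T : S' \cup \{i\} \subseteq T} X_T = \inter(S' \cup \{i\})$ — which is exactly Proposition~\ref{prop:inter-queries} / Proposition~\ref{prop:diff-queries}-style reasoning, since conditioning on $\cS_i$ kills every $X_T$ with $i \notin T$. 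Hence $Q = \sum_{Q' \in N_G(Q)} \alpha_{Q,Q'} \sum_{S'} \gamma_{Q',S'} \inter(S' \cup \{i\})$, a linear combination of intersection queries $\inter(S)$ with $S = S' \cup \{i\}$, so $i \in S$, $|S| \le k$, and $S \subseteq I^*_G(Q') \cup \{i\} \subseteq I^*_G(Q)$. This proves the claim, and applying it to the top-level $f(\Q) = \sum_Q \beta_Q Q$ shows $A = \sum_Q \beta_Q Q$ is a linear combination of intersection queries $\inter(S)$ with $|S| \le \orderc(R) \le d$ and each $S$ contained in some $I^*_G(Q)$, hence $|S| \le \agentc(R) \le d$.

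Finally I need to check that the new rule is a genuine \emph{intersection aggregation rule} with the claimed complexities: its query set is $\Q_{\inter}(d')$ (truncated to the orders/sizes that actually appear), which by Proposition~\ref{prop:inter-query-order} is DAG-elicitable with $\orderc \le d' \le d$ and $\agentc \le d' \le d$, and its aggregation function is the linear form $\sum_S \beta_S \inter(S)$ just derived — so it meets the definition of an intersection aggregation rule. I expect the main obstacle to be the bookkeeping that simultaneously tracks \emph{both} an order bound (size-of-$S$ $\le k$) and an agent bound ($S \subseteq I^*_G(Q)$) through the induction; one must be careful that $\agent(Q)$ itself is counted in $I^*_G(Q)$ (it is, since $Q \in N^*_G(Q)$ by definition) so that appending $i$ does not break the containment, and that the order bound really is $\le k$ rather than $\le k+1$ — this works precisely because $\inter(S' \cup \{i\})$ has order at most $|S' \cup \{i\}| \le (k-1) + 1 = k$, matching the DAG order of $Q$. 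The footnote's remark that the argument generalizes beyond the partial information model is then clear: replace $\inter(S)$ by the iterated-expectation query $\iter$ on the corresponding ordered list, and the identity $\E[\iter(L) \mid \cS_i] = \iter(L, i)$ plays the same role as above.
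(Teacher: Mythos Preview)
Your proposal is correct and follows essentially the same approach as the paper: induction on the order $k$ of a query in a DAG $G$ that elicits $\Q$, using the key identity $\E[\inter(S') \mid \cS_i] = \inter(S' \cup \{i\})$ to push each linear query down to a linear combination of intersection queries with $|S| \le k$, and then invoking Proposition~\ref{prop:inter-query-order} on $\Q_{\inter}(d)$ for the complexity bounds. Your extra bookkeeping that $S \subseteq I^*_G(Q)$ is sound but not actually needed for the lemma as stated, since the order bound $|S| \le d$ together with Proposition~\ref{prop:inter-query-order} already yields both $\orderc \le d$ and $\agentc \le d$; the paper's proof omits this and tracks only the size bound.
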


Lemma~\ref{lem:linear-intersection} thus allows us to reduce $\Lorder$ to a simpler class of approximators: namely those implemented by randomized intersection rules $P$ of agent and order complexity at most $d$.
That is, for a fixed $z \sim P$, there exists a vector $\beta \in \reals^{2^d}$ \maryt{technically should be a function of Z...} (WLOG indexed by each subset $S \subseteq [n]$) such that $A = f_z(\Q_{\inter}(d)) = \sum_{S: |S| \leq d} \beta_S \inter(S) = \sum_{S: |S| \leq d} \beta_S \sum_{T \supseteq S} X_T = \sum_T (\sum_{S: |S| \leq d, S \subseteq T} \beta_S) X_T$.
It follows the error of $A$ implemented by $P$ (for a fixed distribution $\D$ and $z \sim P$) can be expressed as
\begin{align}
    \frac{\E(Y - A)^2}{\sum_T \Var_{\D} (X_T)} &= \frac{\E [(\sum_T X_T - \sum_T (\sum_{S: |S| \leq d, S \subseteq T} \beta_S) X_T)^2]}{\sum_T \Var_{\D}(X_T)} \nonumber \\
    &= \frac{\E[(\sum_T g_T(\beta) X_T)^2]}{\sum_T \Var_{\D}(X_T)} \nonumber \\
    &= \frac{\sum_T g_T(\beta)^2 \Var_{\D}(X_T)}{\sum_T \Var_{\D}(X_T)} & \text{(by independence of $X_T$'s),} \label{eq:linear-error}
\end{align}
where $g_T(\beta) \coloneqq 1 - \sum_{S: |S| \leq d, S \subseteq T} \beta_S$.
We can then reduce the class of approximators in $\Lorder$ further when considering bounds on error by using convexity tools.
First, in Lemma~\ref{lem:deterministic-order}, we show that we need only consider \emph{deterministic} intersection aggregation rules that implement approximators in $\Lorder$; and second, in Lemma~\ref{lem:symmetric-functions}, we show we need only consider approximators implemented by \emph{symmetric} intersection aggregation rules, or equivalently coefficients $\beta$ that are symmetric over set sizes. 
We leave proofs to Appendix~\ref{app:agent-order}.

\begin{lemma} \label{lem:deterministic-order}
    For any $A \in \Lorder$, there exists an approximator $A' \in \Lorder$ that is implemented by a deterministic linear aggregation rule such that $\error(A') \leq \error(A)$. 
\end{lemma}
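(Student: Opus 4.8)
The plan is to invoke Lemma~\ref{lem:linear-intersection} to replace the given randomized linear rule by a randomized intersection rule of the same complexity, express the worst-case error of a randomized intersection rule purely in terms of the squared residual coefficients $g_T$, and then average the (random) coefficient vectors and apply Jensen's inequality. Concretely, let $A \in \Lorder$. By Lemma~\ref{lem:linear-intersection}, $A$ is implemented by a randomized intersection aggregation rule $P$ with $\agentc(P) \le d$ and $\orderc(P) \le d$; write $A = f_Z(\Q_{\inter}(d))$ for $Z \sim P$, where for each realization $z$ the aggregation function is $f_z(\Q_{\inter}(d)) = \sum_{S:\,|S|\le d}\beta^{(z)}_S \inter(S)$ for a coefficient vector $\beta^{(z)}\in\reals^{2^d}$. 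The computation leading to \eqref{eq:linear-error} shows that for each fixed $z$ and each distribution $\D$, $\E_\D[(f_z(\Q_{\inter}(d)) - Y)^2] = \sum_T g_T(\beta^{(z)})^2\,\Var_\D(X_T)$, with $g_T(\beta) = 1 - \sum_{S:\,|S|\le d,\,S\subseteq T}\beta_S$. Taking the expectation over $Z$ and dividing by $\Var(Y) = \sum_T \Var_\D(X_T)$, and noting that the resulting expression is a weighted average over $T$ of the numbers $\E_Z[g_T(\beta^{(Z)})^2]$ with nonnegative weights $\Var_\D(X_T)$ (so the adversary does best to place all variance on a single $X_{T^*}$), we get
\[
  \error(A) \;=\; \max_\D \frac{\sum_T \E_Z[g_T(\beta^{(Z)})^2]\,\Var_\D(X_T)}{\sum_T \Var_\D(X_T)} \;=\; \max_{T\subseteq[n]} \E_Z\!\left[g_T(\beta^{(Z)})^2\right].
\]

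If $\error(A) = \infty$ there is nothing to prove (e.g. the deterministic rule with all $\beta_S = 0$ implements $\mu$ with error $1$), so assume $\error(A) < \infty$, whence $\E_Z[g_T(\beta^{(Z)})^2] < \infty$ for every $T$. Define the averaged coefficient vector $\bar\beta \coloneqq \E_Z[\beta^{(Z)}]$; this is well-defined and finite, since for each $S$ with $|S|\le d$, Möbius inversion over the subset lattice gives $\beta^{(Z)}_S = \sum_{T\subseteq S}(-1)^{|S|-|T|}\bigl(1 - g_T(\beta^{(Z)})\bigr)$, a finite linear combination of terms that are integrable by Cauchy--Schwarz. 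Because $g_T$ is an affine function of $\beta$, linearity of expectation yields $g_T(\bar\beta) = \E_Z[g_T(\beta^{(Z)})]$, and Jensen's inequality for the convex map $x \mapsto x^2$ gives
\[
  g_T(\bar\beta)^2 \;=\; \bigl(\E_Z[g_T(\beta^{(Z)})]\bigr)^2 \;\le\; \E_Z\!\left[g_T(\beta^{(Z)})^2\right] \qquad \text{for every } T\subseteq[n].
\]

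Finally, let $A'$ be the approximator implemented by the \emph{deterministic} intersection aggregation rule with query set $\Q_{\inter}(d)$ and aggregation function $f(\Q_{\inter}(d)) = \sum_{S:\,|S|\le d}\bar\beta_S\,\inter(S)$. An intersection aggregation rule is in particular a linear aggregation rule, and by Proposition~\ref{prop:inter-query-order} the set $\Q_{\inter}(d)$ has order and agent complexity at most $d$, so $A' \in \Lorder$ and is implemented by a deterministic linear aggregation rule. Applying the error formula above to this deterministic rule gives $\error(A') = \max_T g_T(\bar\beta)^2 \le \max_T \E_Z[g_T(\beta^{(Z)})^2] = \error(A)$, which is the claim. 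The only real subtlety — the step I expect to take the most care — is the first display: recognizing that the worst-case error of a randomized intersection rule depends on the random coefficients only through the coordinatewise second moments $\E_Z[g_T(\beta^{(Z)})^2]$, so that derandomization reduces to a single Jensen inequality once $\bar\beta$ is shown to be a legitimate (integrable) object; everything after that is immediate from affineness of $g_T$ in $\beta$.
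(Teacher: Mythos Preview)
Your proof is correct and follows essentially the same route as the paper's: reduce via Lemma~\ref{lem:linear-intersection} to a randomized intersection rule, then apply Jensen's inequality to the convex map $\beta \mapsto g_T(\beta)^2$ at the averaged coefficient vector $\bar\beta$. The only quibble is that your displayed maxima should range over nonempty $T$ (since $X_\emptyset = 0$ forces $\Var_\D(X_\emptyset)=0$, while $g_\emptyset\equiv 1$); aside from that, your treatment is if anything more careful than the paper's, which glosses over the integrability of $\bar\beta$.
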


\begin{lemma} \label{lem:symmetric-functions}
    For any approximator $A \in \Lorder$ implemented by an order-$d$ intersection aggregation rule $R$, there is an approximator $A' \in \Lorder$ implemented by an order-$d$ symmetric intersection aggregation rule $\hat R$ such that $\error(A') \leq \error(A)$.
\end{lemma}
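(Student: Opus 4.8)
The plan is to prove Lemma~\ref{lem:symmetric-functions} by a symmetrization argument over the permutation group $S_n$, combined with convexity of $t \mapsto t^2$. By Lemma~\ref{lem:deterministic-order} we may assume the order-$d$ intersection rule $R$ implementing $A$ is deterministic, say $R = (\Q_{\inter}(d), f)$ with $f(\Q) = \sum_{S : |S| \le d} \beta_S \inter(S)$ for some coefficient vector $\beta$. First I would record the closed form of the worst-case error: as derived in Equation~\eqref{eq:linear-error}, $Y - A = \sum_T g_T(\beta) X_T$ with $g_T(\beta) = 1 - \sum_{S \subseteq T,\, |S| \le d} \beta_S$, so
\[
\error(A) = \max_{\D} \frac{\sum_T g_T(\beta)^2 \Var_{\D}(X_T)}{\sum_T \Var_{\D}(X_T)} = \max_{\emptyset \neq T \subseteq [n]} g_T(\beta)^2 ,
\]
the second equality because the adversary may concentrate all of the variance on whichever $X_{T^\ast}$ maximizes $g_T(\beta)^2$ (recall $\Var(X_\emptyset) = 0$ always, so $T = \emptyset$ is irrelevant). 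Justifying this reduction carefully — exhibiting a valid prior $\D$ with $X_{T^\ast}$ mean-zero of positive finite variance and every other $X_T$ degenerate at $0$ — is the one place requiring a little care; everything after it is bookkeeping.

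Next I would symmetrize the coefficients. For $\sigma \in S_n$ and $S \subseteq [n]$, write $\sigma(S) = \{\sigma(i) : i \in S\}$ and set $\bar\beta_S \coloneqq \frac{1}{n!} \sum_{\sigma \in S_n} \beta_{\sigma(S)}$. By construction $\bar\beta_S$ depends only on $|S|$, so the intersection rule $\hat R = (\Q_{\inter}(d), \hat f)$ with $\hat f(\Q) = \sum_{S : |S| \le d} \bar\beta_S \inter(S)$ is a symmetric order-$d$ intersection aggregation rule. Since $\hat R$ uses the same query set $\Q_{\inter}(d)$ as $R$, Proposition~\ref{prop:inter-query-order} gives $\orderc(\hat R) \le d$ and $\agentc(\hat R) \le d$, so the approximator $A'$ it implements lies in $\Lorder$.

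Finally I would establish the key identity: for each fixed $T$, as $S$ ranges over subsets of $T$ of size at most $d$, the image $\sigma(S)$ ranges bijectively over subsets of $\sigma(T)$ of size at most $d$, hence
\[
g_T(\bar\beta) = 1 - \sum_{S \subseteq T,\, |S| \le d} \frac{1}{n!} \sum_{\sigma} \beta_{\sigma(S)} = \frac{1}{n!} \sum_{\sigma} \Bigl( 1 - \sum_{S' \subseteq \sigma(T),\, |S'| \le d} \beta_{S'} \Bigr) = \frac{1}{n!} \sum_{\sigma} g_{\sigma(T)}(\beta) .
\]
Thus $g_T(\bar\beta)$ is an average of the values $g_{T'}(\beta)$ over sets $T'$ of the same cardinality as $T$. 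Applying Jensen's inequality (convexity of $t \mapsto t^2$), $g_T(\bar\beta)^2 \le \frac{1}{n!} \sum_{\sigma} g_{\sigma(T)}(\beta)^2 \le \max_{T'} g_{T'}(\beta)^2 = \error(A)$, and taking the maximum over $T$ yields $\error(A') = \max_T g_T(\bar\beta)^2 \le \error(A)$. The argument is essentially routine; the only real content is the observation (Step 1) that the worst case over $\D$ collapses to $\max_T g_T(\beta)^2$, after which symmetrization-plus-convexity does the rest, and the complexity claims come for free because the query set is unchanged.
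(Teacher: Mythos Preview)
Your proof is correct and follows essentially the same symmetrization-plus-Jensen strategy as the paper: average the coefficient vector over all permutations in $S_n$ and use convexity to show the error cannot increase. The only cosmetic difference is that the paper applies Jensen directly to the convex map $\beta \mapsto F(\beta) = \max_T g_T(\beta)^2$ together with the permutation invariance $F(\beta) = F(\beta^\sigma)$, whereas you first collapse $\error(A)$ to $\max_T g_T(\beta)^2$, establish the linear identity $g_T(\bar\beta) = \tfrac{1}{n!}\sum_\sigma g_{\sigma(T)}(\beta)$, and then apply Jensen to $t\mapsto t^2$ pointwise; these are equivalent unpackings of the same argument.
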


\subsubsection{Minimax error and Chebyshev polynomials}
Now that we have pared down the space of aggregation rules that implement approximators in $\Lorder$ significantly, we can directly express the value of the lowest achievable error.
Surprisingly, in Lemma~\ref{lem:intersection-minimax-problem}, we show that the best achievable error corresponds exactly to a simple polynomial approximation problem.
Specifically, the error is the minimum achievable infinity norm on a finite set of integer points, taken over all polynomials of degree at most $d$ that satisfy one linear constraint.

\begin{lemma} \label{lem:intersection-minimax-problem}
    \begin{equation} \label{eq:scaled-discrete-chebyshev}
        \min_{A \in \Lorder} \error(A) = \min_{\substack{p \in P_d \\ p(0) = 1}} \max_{t \in [n]} p(t)^2. 
    \end{equation} 
\end{lemma}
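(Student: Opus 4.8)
\emph{Proof proposal.} The plan is to use the three reduction lemmas to collapse the left-hand side to a minimization over a very concrete family of rules, and then to evaluate the worst-case error of such a rule in closed form, recognizing the result as the stated polynomial problem.

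\textbf{Step 1 (reduce to deterministic symmetric intersection rules).} Combining Lemmas~\ref{lem:linear-intersection}, \ref{lem:deterministic-order}, and \ref{lem:symmetric-functions}, the quantity $\min_{A\in\Lorder}\error(A)$ is unchanged if we restrict to approximators $A_R$ implemented by deterministic \emph{symmetric} intersection aggregation rules $R=(\Q_{\inter}(d),f)$ of order at most $d$, where ``symmetric'' means the coefficients $\beta_S$ depend only on $|S|$; write $\beta_S=b_{|S|}$ for $1\le|S|\le d$. Note there is no coefficient $b_0$: every query $\inter(S)$ must be answered by an agent in $S$, so $S=\emptyset$ is not an admissible query index.

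\textbf{Step 2 (the error of such a rule is $\max_{t\in[n]}p(t)^2$).} Starting from $g_T(\beta)=1-\sum_{S:\,|S|\le d,\,S\subseteq T}\beta_S$ and using that $T$ has exactly $\binom{|T|}{k}$ subsets of size $k$, symmetry gives $g_T(\beta)=1-\sum_{k=1}^d b_k\binom{|T|}{k}$, which depends on $T$ only through $t=|T|$; call this value $p(t)$. Since $\binom{t}{k}$ is a polynomial of degree exactly $k$ in $t$, the family $\{\binom{t}{k}\}_{k=1}^d$ is a basis for the degree-$\le d$ polynomials vanishing at $0$; hence $p$ ranges over exactly $\{p\in P_d:\ p(0)=1\}$ as $(b_1,\dots,b_d)$ ranges over $\reals^d$ (indeed $p(0)=1-\sum_{k\ge1}b_k\binom{0}{k}=1$ automatically). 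Plugging into \eqref{eq:linear-error} and using $X_\emptyset\equiv0$ to drop the $T=\emptyset$ term, $\error(A_R)=\max_{\D}\frac{\sum_{T\ne\emptyset}p(|T|)^2\Var_\D(X_T)}{\sum_{T\ne\emptyset}\Var_\D(X_T)}$ is the maximum of a convex combination of the numbers $\{p(|T|)^2:\ \emptyset\ne T\subseteq[n]\}$; it is therefore attained by an adversary placing all variance on a single $X_{T^*}$, and since every size $t\in\{1,\dots,n\}$ (and no larger) is realized by some nonempty $T\subseteq[n]$, this equals $\max_{t\in[n]}p(t)^2$. Combining Steps 1 and 2 gives $\min_{A\in\Lorder}\error(A)=\min_{(b_1,\dots,b_d)}\max_{t\in[n]}p(t)^2=\min_{p\in P_d,\ p(0)=1}\max_{t\in[n]}p(t)^2$.

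\textbf{Expected obstacle.} Once Lemmas~\ref{lem:linear-intersection}--\ref{lem:symmetric-functions} are in hand, there is no deep obstacle; the only step that genuinely needs care is the provenance of the constraint $p(0)=1$, which comes precisely from the unavailability of the ``all-subsets'' query $\inter(\emptyset)=Y$ (it has no answering agent), so there is no $b_0$ degree of freedom. The remaining ingredients --- the binomial subset count turning symmetry into a polynomial in $t$, the triangular change of basis onto $\{p\in P_d: p(0)=1\}$, and the reduction of the distributional maximum to a point mass --- are routine, so I expect the write-up to be short.
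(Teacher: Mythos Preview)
Your proposal is correct and follows essentially the same route as the paper: reduce via Lemmas~\ref{lem:linear-intersection}--\ref{lem:symmetric-functions} to symmetric deterministic intersection rules, expand the coefficient on $X_T$ as $\sum_{k\le d}b_k\binom{|T|}{k}$ to obtain a degree-$d$ polynomial in $|T|$ with $p(0)=1$, and then observe the adversary places all variance at the size $t$ maximizing $p(t)^2$. Your remark that the constraint $p(0)=1$ stems precisely from the absence of an $\inter(\emptyset)$ query is exactly the point the paper is making with its parenthetical about a nonzero constant term only increasing error.
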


\begin{proof}
By Lemmas~\ref{lem:deterministic-order} and~\ref{lem:symmetric-functions}, we only need to consider \emph{symmetric}, deterministic intersection aggregation rules of order $d$ that implement approximators $A$, or equivalently vectors $\beta \in \reals^d$ with each $\beta_t$ corresponding to subsets of size $t \in [d]$. 
That is, $f_{\beta}(\Q_{\inter}(d)) = \sum_{S: |S| \leq d} \beta_{|S|} \inter(S) = \sum_{S: |S| \leq d} \beta_{|S|} \sum_{T \supseteq S} X_T = \sum_T \left( \sum_{S \subseteq T, |S| \leq d} \beta_{|S|} \right) X_T.$
(We note that adding a nonzero constant to $f_{\beta}(\Q_{\inter}(d))$ only increases the worst-case error.)
Denote $p(T) = \sum_{S \subseteq T, |S| \leq d} \beta_{|S|}$ as the new coefficient for $X_T$.
Then
\[
p(T) =  \sum_{S \subseteq T, |S| = 1} \beta_1 + \sum_{S \subseteq T, |S| = 2} \beta_2 + \ldots + \sum_{S \subseteq T, |S| = d} \beta_d,
\]
or, if we denote $p(|T|)$ instead as a function over the \emph{size} of $T$,
\[
p(|T|) =  \sum_{1 \leq k \leq d} \beta_k {|T| \choose k}.
\]
In other words, $p$ is a polynomial of degree $d$ over the size of $T$ (with no degree 0 term), so that
\[f_{\beta}(\Q_{\inter}(d)) = \sum_T p(|T|) X_T.\]
It follows the error of an approximator $A = f_{\beta}(\Q_{\inter}(d))$ for any symmetric, deterministic intersection aggregation rule $(\Q_{\inter}(d), f_{\beta})$, given a fixed distribution $\D$, is
\begin{align*}
    \frac{\E(Y - A)^2}{\sum_T \Var_{\D}(X_T)} &= \frac{\E (\sum_T X_T - \sum_T p(|T|) X_T)^2}{\sum_T \Var_{\D}(X_T)} \\
    &= \frac{\sum_T \Var_{\D}(X_T)(1 - p(|T|))^2}{\sum_T \Var_{\D}(X_T)}.
\end{align*}

In the worst case over $\D$, an adversary will concentrate all variance on variables $X_T$ such that $t = |T|$ maximizes the value $(1 - f(t))^2$; and WLOG, this variance is the same across each such $X_T$. 
Formally, note for any valid distribution $\D$ that 
    \begin{align*}
    \frac{\sum_T(1 - p(|T|))^2 \Var_{\D}(X_T)}{\sum_T \Var_{\D}(X_T)} &\leq \frac{\max_{t \in [n]} (1 - p(t))^2 \sum_T \Var_{\D}(X_T)}{\sum_T \Var_{\D}(X_T)} = \max_{t \in [n]} (1 - p(t))^2,
    \end{align*}
    with the inequality achieved when $\Var_{\D}(X_T) > 0$ only if $|T| = \arg\max_{t \in [n]}  (1 - p(t))^2$.
    It follows that
\begin{align*}
\min_{A \in \Lorder} \error(A) &= \min_{(\Q_{\inter}(d), f_{\beta})} \error(f_{\beta}(\Q_{\inter(d)})) \\
&= \min_{\substack{p \in P_d \\ p(0) = 0}} \max_{t \in [n]} (1 - p(t))^2 \\
&= \min_{\substack{p \in P_d \\ p(0) = 1}} \max_{t \in [n]} p(t)^2.
\end{align*}
\end{proof}

To achieve tight bounds on $\min_{A \in \Lorder} \error(A)$, then, we just need to characterize a much simpler minimax value over polynomials. 
Luckily, there is a vast literature that studies approximating arbitrary functions with bounded-degree polynomials. 
Specifically, \emph{Chebyshev polynomials} minimize the following problem, which we call the \emph{classic Chebyshev minimax problem}:
\begin{equation} \label{eq:classic-chebyshev}
    \min_{p \in \cM_d} \max_{x \in [-1,1]} |p(x)|,
\end{equation}
where $\cM_d$ is the set of monic polynomials (with a leading coefficient of one) of degree exactly $d$.
The proof makes use of an \emph{equioscillation theorem}: a monic polynomial of degree $d$ achieves the minimax error if and only if the function has $d+1$ points achieving the maximum value $\max_{x \in [-1,1]} |p(x)|$; moreover, these points are alternating in sign.
Intuitively, the polynomial should wiggle above and below zero at the same magnitude as many times as possible under its constraints, to minimize absolute distance to zero.
In fact, the \emph{unique} polynomial that achieves this equioscillation characterization is the $d$-th order Chebyshev polynomial $T_d(x)$, whose extrema occur at points $\cos(\frac{k \pi}{d}), 0 \leq k \leq d$.

There are two glaring differences between our problem (Equation~\ref{eq:scaled-discrete-chebyshev}) and the classic Chebyshev minimax problem (Equation~\ref{eq:classic-chebyshev}).
First, we are not constrained by monic polynomials of degree exactly $d$, but instead polynomials of degree at most $d$ with a $y$-intercept at zero.
Second, the domain is the finite set $\{1,2,\dots,n\}$ and not the continuous interval $[-1,1]$.
Fortunately, a similar equioscillation theorem (Lemma~\ref{lem:continuous-equioscillate}) holds under our $y$-intercept constraint if we take an affine transformation of the domain from $[-1,1]$ to $[1,n]$, intuitively because the degree of freedom in constructing polynomials is the same. 
The proof structure largely follows previous equioscillation results. 

We can then apply this affine transformation to Chebyshev polynomials, which exactly characterize the minimax value under our constraints over the continuous interval $[1,n]$ (Lemma~\ref{lem:minmax-chebyshev-value}).
These transformed Chebyshev polynomials $\hat T_d$ immediately give us \emph{upper bounds} on our discrete version of the minimax problem.
We are able to derive lower bounds by noting that, as $n \to \infty$, the grid points in $\{1,2,\dots,n\}$ approach the continuous interval $[1,n]$, and thus the discrete minimax value converges to the continuous value achieved in Lemma~\ref{lem:minmax-chebyshev-value}.
Moreover, we can characterize this convergence rate by bounding a degree-$d$ polynomial's worst-case deviation between grid points according to its derivative (Lemma~\ref{lem:chebyshev-lower-bound}).
Specifically, the upper and lower bounds in Lemmas~\ref{lem:minmax-chebyshev-value} and~\ref{lem:chebyshev-lower-bound} are then
\begin{equation} \label{eq:chebyshev-bounds}
    \frac{4(1 - \frac{d^2}{n-1})^2}{(q(n)^d + q(n)^{-d})^2} \leq \min_{A \in \Lorder} \error(A) \leq \frac{4}{(q(n)^d + q(n)^{-d})^2}, 
\end{equation}
where $q(n) = \frac{\sqrt{n} + 1}{\sqrt{n} - 1}$.\footnote{We again note that Lemmas~\ref{lem:minmax-chebyshev-value} and~\ref{lem:chebyshev-lower-bound} are similarly posed and utilized in~\cite{linial1990approximate} to prove bounds for their problem on estimating union sizes for sets. }
The proof of Theorem~\ref{thm:agent-order-error} follows as a straightforward asympototic analsyis of Inequality~\ref{eq:chebyshev-bounds}.
We refer the reader to Appendix~\ref{app:agent-order} for formal statements and proofs of the aforementioned Lemmas~\ref{lem:continuous-equioscillate},~\ref{lem:minmax-chebyshev-value}, and~\ref{lem:chebyshev-lower-bound}, as well as the proof of Theorem~\ref{thm:agent-order-error}.

\subsection{Proof sketch of Propositions~\ref{prop:weak-lower-bound} and~\ref{prop:exact-small-d}}
While we are unable to derive tight lower bounds on error for $d = \omega(\sqrt{n})$, we can show that optimal error is not achievable for $d < n$.
To do so, we first apply an identical equioscillation theorem to our minimax problem over the discrete domain $\{1,2,\dots,n\}$: there exist $d+1$ \emph{integers} in $[1,n]$ that achieve the maximum and oscillate in sign (Lemma~\ref{lem:discrete-equioscillate}).
The proof follows the exact same structure as the continuous version. 
It immediately follows that the error of an approximator in the class $\Lorder$ for $d < n$ must be strictly larger than zero:

\begin{proof}[Proof of Proposition~\ref{prop:weak-lower-bound}]
    Assume not for contradiction, so that by Lemma~\ref{lem:intersection-minimax-problem} there exists a polynomial $p \in P_d$ with $p(0) = 1$ such that $\max_{t \in [n]} |p(t)| = 0$. 
    The polynomial $p$ must therefore also achieve the minimax error over the set $\{p \in P_n: p(0) = 1\}$.
    But Lemma~\ref{lem:discrete-equioscillate} then implies that $p$ must have $n+1$ oscillating maximal points, and thus $n$ roots, which is a contradiction.
\end{proof}

We can also use the discrete equioscillation characterization to show when the transformed Chebyshev polynomials exactly achieve the minimax value for our problem: specifically, when its extrema occur at integer points.
By properties of the function $\cos(x)$, the extrema of our transformed Chebyshev polynomial $\hat T_d$ only align with the integers in the following cases: $d = 1$, $d = 2$ and $n$ is odd, or $d = 3$ and $n \equiv 1 \pmod 4$ (Lemma~\ref{lem:exact-d-bounds}).
Proposition~\ref{prop:exact-small-d}, our exact characterization of worst-case error for $d=1,2,3$, then immediately follows. 

\section{Discussion}

We have initiated the study of worst-case aggregation when the principal may ask queries beyond the typical expectation queries.
We believe this is a rich and fruitful avenue of inquiry, and can see many open directions of interest.

\subsection{Beyond partial information}

Perhaps the most pressing direction is to prove results about the complexity of optimal and approximate aggregation beyond the partial information model studied here.
In particular, our results for the partial information seem to rely heavily on independence of the signals.
What can we do if signals are correlated?
A first attempt could be to allow the signals in the partial information model to be correlated with each other rather than independent.
But in this case, we obtain a fully general model, i.e. the information structure $\D$ could be any joint distribution.
As trivial impossibilities exist for generic settings, we must look for some restriction on the information structures.

One promising approach, following \cite{neyman2022smarter}, is to focus on signals that are \textsl{informational substitutes}.
Intuitively, substitutes (introduced by \cite{borgers2013signals,chen2016informational}) are cases where the marginal value of a signal $X_i$ for predicting $Y$ is decreased when other signals $X_j$ are already available.
\cite{neyman2022smarter} showed that under ``weak substitutes'', there is an aggregation rule on $(Y_1,\dots,Y_n)$ achieving error $1-\tfrac{1}{n}$, and this is tight; in other words, one might as well pick a random expert's opinion.
Under a stronger notion of ``projective substitutes'', \cite{neyman2022smarter} showed that better error is possible, but still in the $1-O(\tfrac{1}{n})$ regime.

\subsection{Knowledge of the prior}

Throughout we have assumed that the prior joint distribution $\D$ is unknown to the aggregator.
It is natural to ask if the guarantees we have established can be improved if $\D$ is known.
It turns out the answer is no, at least in the specific questions we study here, as in each case the worst-case instance $\D$ can be determined and known ahead of time.
In other words, evidently minimax duality holds in this setting: $\inf_A \sup_\D \error(A_\D) = \sup_\D \inf_A \error(A_\D)$.

This duality can break beyond the partial information model, however.
For example, if we extend the partial information model to allow non-zero means, i.e.\ $\mu_T := \E[X_T] \neq 0$, then the means become nontrivial to contend with.
\cite[Theorem 3.3]{neyman2022smarter} give an example with $X_{\{i\}} \sim N(\mu,1)$ for all $i$, and $X_T=0$ otherwise, where for agent/order complexity $1$ the optimal aggregation error is $1 - \tfrac 2 n + \tfrac 1 {n^2}$; with knowledge of the prior, one can achieve optimal aggregation (error 0).
Interestingly, all of our constructions for optimal aggregation appear to go through even in this extension, as both the inclusion-exclusion and sum of difference queries automatically cancel out the $\mu_T$ terms.

\subsection{Adapting to easy instances}

\raf{I'm not sure how much detail we want to go into here; probably should skim down considerably and just focus on the example}

We have studied \emph{worst-case} performance guarantees of the form $\error(A) \leq c$, which hold for all instances $\D$ within a model class, like the partial information model.
If this bound is tight, then $\error(A) = c$ for some set of ``hard'' instances.
In some sense, therefore, the worst-case optimal aggregation rules above are designed in response to these hardest instances, and these alone.
With a nod to the highly successful body of work on online learning algorithms for easy data~\citep{vanErven2021metagrad,koolen2015squint,vanErven2016combining}, we would like aggregation rules that retain the same worst-case error, but can improve on easy instances.

To illustrate, consider a generalization of Example~\ref{ex:common} where the private and common signals $X_1,\ldots,X_{n+1}$ are still independent and mean-zero but could be any distribution otherwise.
For the optimal aggregation rule (over posterior expectation queries) proposed in~\cite{neymanthesis}, $A = f(Y_1,\dots,Y_n) = \frac{2}{n+1}\sum_{i=1}^n Y_i$, its worst-case error $\error(A) = 1 - \tfrac{4n}{(n+1)^2}$ occurs for a distribution with $\Var(X_1) = \cdots = \Var(X_{n+1})$.
On this instance, therefore, it would be pointless to ask experts for other statistics like $Q_{i,2} = \Var[Y\mid \cS_i]$.

But such variance information would help significantly \emph{outside} the worst-case.
Specifically, as $Q_{i,2} = \sum_{j\neq i} \Var(X_i)$, we can solve for $\Var(X_1),\ldots,\Var(X_n)$ and thus derive better weights for the $Y_i$.
In particular, the optimal weight $\alpha_i$ for $Y_i$ will now be
\begin{align*}
  \alpha_i =  1 - n \left(\frac{n-1}{n+1}\right) \frac{\tau_i}{\tau}~,
\end{align*}
where $\tau_i = 1/\Var(X_i)$ is the ``precision'' of $X_i$, and $\tau = \sum_{i=1}^n \tau_i$.
These weights are still
$\frac{2}{n+1}$ when the variances are all equal, giving the worst-case performance bound.
Outside of that worst case, however, performance strictly improves.
Specifically, fixing the total variance, the performance is now strictly increasing in the total precision $\tau$.

\bibliographystyle{alpha}
\bibliography{citations,citations2,waggoner,raf-papers,agg}

\break

\appendix

\section{Omitted proofs}

\subsection{Proofs from \S~\ref{sec:optimal}} \label{app:optimal}

We elaborate on the computation of the approximator implemented by a specific linear aggregation rule here:
\begin{examplecon}{\ref{ex:linear}}
    Note that by definition,
    \begin{align*}
        Q_3 &= \E[ - \E[ Y \mid \cS_1] + 2 \E[ Y \mid \cS_2] \mid \cS_3] \\
        &= \E[ - \sum_{T \ni 1} X_T + 2\sum_{T \ni 2} X_T \mid \cS_3] \\
        &= - \sum_{T \ni 1,3} X_T + 2 \sum_{T \ni 2,3} X_T.
    \end{align*}
    It follows that
    \begin{align*}
        f(\Q) &= Q_1 + Q_2 + Q_3 \\
        &= - \sum_{T \ni 1,3} X_T + 2 \sum_{T \ni 2,3} X_T + \sum_{T \ni 1} X_T + \sum_{T \ni 2} X_T \\
        &= \sum_{T \subseteq [n]} X_T,
    \end{align*}
    with values of $\beta_T$ given in the main body for Example~\ref{ex:linear}.
\end{examplecon}

\begin{proof}[Proof of Theorem~\ref{thm:opt-intersection}]
    We define a deterministic intersection aggregation rule ($\Q_{\inter}(n), f)$ that implements an approximator $A$ with $\error(A) = 0$.
    Then by Proposition~\ref{prop:inter-query-order}, $\Q_{\inter}(n)$ is DAG-elicitable with agent and order complexity at most $n$, and query complexity $2^n$.
    We define $f$ as
    \begin{align*}
        f(\Q_{\inter}(n)) &= \sum_{\substack{S \subseteq [n] \\ S \neq \emptyset}} \left( \sum_{\substack{T \subseteq S \\ T \neq \emptyset}} (-1)^{|S \backslash T|} \right) \inter(S) \\
        &= \sum_{\substack{T \subseteq [n] \\ T \neq \emptyset}} \sum_{S \supseteq T} (-1)^{|S \backslash T|} \inter(S) .
    \end{align*}
    Now, note that for each nonempty $T \subseteq [n]$,
    \begin{align*}
    \sum_{S \supseteq T} (-1)^{|S \backslash T|} \inter(S) = X_T;
    \end{align*}
this identity holds by the Möbius inversion formula over the Boolean lattice, or equivalently, a generalization of the inclusion–exclusion principle applied to the function $g(T) \coloneqq X_T$.

Thus we have $f(\Q_{\inter}(n)) = \sum_{T \subseteq [n]} X_T = Y$.
Given the approximator $A$ that $(f,\Q)$ implements, it follows that $\error(A) = 0$ with probability one. 
\end{proof}

\begin{proof}[Proof of Proposition~\ref{prop:diff-query-order}]
    Let $d = |L|$.
    Denote $Q_{i_k} \coloneqq \diff(i_1,i_2,\dots,i_k)$, so that we have $\Q_{\diff}(L) = (Q_{i_1}, Q_{i_2}, \dots, Q_{i_d})$.
    Then by definition, since $Q_{i_k} = \E[Y - \sum_{\ell < k} Q_{i_{\ell}} \mid \cS_k]$, $Q_{i_k}$ is elicited by the payment $\pi(Q, \{Q_{i_{\ell}}\}_{\ell < k} \cup \{ Y \}) = 1 - (Q - (Y - \sum_{\ell < k} Q_{i_{\ell}}))^2$ for $i > 1$; $Q_{i_1} = Y_{i_1}$ is elicited by $\pi(Q, Y) = 1 - (Q - Y)^2$.
    Thus we can define a graph $G = (V,E)$ that DAG-implements $\Q_{\diff}(L)$ with $V \coloneqq \Q_{\diff}(L) \cup \{ Y \}$ as follows: $(Q_{i_1}, Y) \in E$ and $(Q_{i_k}, Q_{i_{\ell}}) \in E$ for all $1 < k \leq n$ and $\ell < k$. 
    Since all other queries are reachable from $Q_{i_d}$ in $G$, and the longest path in $G$ from $Q_{i_d}$ to $Y$ is of length $n$, $\agentc(\Q_{\diff}(L)) \leq d$ and $\orderc(\Q_{\diff}(L)) \leq d$.
    Meanwhile, $\queryc(\Q_{\diff}(L)) = |\Q_{\diff}(L)| = d$.
\end{proof}

\subsection{Proofs from \S~\ref{subsec:query}} \label{app:query}

Here, we prove both the upper and lower bounds which immediately imply the statement of Theorem~\ref{thm:limited-query-error}.

\begin{lemma} \label{lem:limited-query-lower-bound}
    Let $d < n$. Then
    $\min_{A \in \Aquery} \error(A) \geq 1 - \frac{d}{n}$.
\end{lemma}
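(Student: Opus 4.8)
The plan is to prove the lower bound via a minimax (Sion/von Neumann) argument: rather than directly analyzing randomized aggregation rules, I would exhibit a single adversarial distribution $\D^\star$ on which every deterministic aggregation rule of query complexity at most $d$ performs poorly, and then invoke the minimax inequality $\min_P \max_\D \mathrm{error} \geq \max_\D \min_{\text{det }R} \mathrm{error}$ to conclude the bound for randomized rules as well. Concretely, I would take $\D^\star$ to place all the variance on the $n$ singleton signals, say $\Var(X_{\{i\}}) = 1$ for each $i \in [n]$ and $\Var(X_T) = 0$ for all $|T| \neq 1$ (or more carefully, $X_T = 0$ a.s.\ for $|T|\neq 1$). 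Under this distribution $\Var(Y) = n$, so by~\eqref{eqn:partial-performance} the error of an approximator $A$ is $\E[(A - \sum_i X_{\{i\}})^2]/n$.

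The key step is to argue that a deterministic DAG-elicitable aggregation rule $R = (\Q, f)$ with $|\Q| \le d$ can "see" at most $d$ of the singleton signals. Here I would use the agent-complexity structure: since $\queryc(R) \le d$, and each query $Q \in \Q$ is answered by a single agent $\agent(Q)$, the set $\bigcup_{Q \in \Q}\{\agent(Q)\}$ has size at most $d$. More importantly, any query $Q$ answered by agent $i$ is a function only of $\cS_i$ and $\D$, so under $\D^\star$ it is a (measurable) function of $(X_{\{i\}})$ together with the zero signals — hence of $X_{\{i\}}$ alone. Therefore $f(\Q)$ is a function of $\{X_{\{i\}} : i \in S\}$ where $S = \{\agent(Q) : Q \in \Q\}$ has $|S| \le d$. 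The signals $\{X_{\{i\}}\}_{i \in [n]}$ are mutually independent and mean zero, so for any function $A$ of the coordinates in $S$,
\begin{align*}
\E\Bigl[\bigl(A - \textstyle\sum_{i \in [n]} X_{\{i\}}\bigr)^2\Bigr] &= \E\Bigl[\bigl(A - \textstyle\sum_{i \in S} X_{\{i\}} - \textstyle\sum_{i \notin S} X_{\{i\}}\bigr)^2\Bigr] \\
&\ge \E\Bigl[\bigl(\textstyle\sum_{i \notin S} X_{\{i\}}\bigr)^2\Bigr] = \sum_{i \notin S} \Var(X_{\{i\}}) \ge (n-d),
\end{align*}
where the inequality uses that $\sum_{i\notin S}X_{\{i\}}$ is independent of everything in the parenthesized difference and has mean zero, so cross terms vanish and the variance of the omitted part is a lower bound on the squared error. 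Dividing by $n$ gives $\mathrm{error}(A) \ge 1 - d/n$ on $\D^\star$ for every deterministic rule of query complexity $\le d$.

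Finally I would assemble these pieces: for any randomized DAG-elicitable rule $P$ with $\queryc(P) \le d$ implementing approximator $A$, we have $\mathrm{error}(A) = \max_\D \E_{z\sim P}[L(A_z)]/(\ldots) \ge \E_{z \sim P}\bigl[(\text{error of det.\ rule }R_z \text{ on }\D^\star)\bigr] \ge 1 - d/n$, since each realized rule $R_z$ has query complexity at most $\queryc(P) \le d$. The main obstacle I anticipate is being careful about the measure-theoretic claim that a query $Q(\cS_i,\D^\star)$ genuinely collapses to a function of $X_{\{i\}}$ alone under $\D^\star$ — one should note that the degenerate signals $X_T$ ($|T|\neq 1$) are constants, so conditioning on $\cS_i$ is the same as conditioning on $X_{\{i\}}$ — and, relatedly, making sure the minimax exchange is justified (here it is elementary since we only need the trivial direction $\inf\sup \ge \sup\inf$ applied pointwise at $\D^\star$, combined with linearity of expectation over $z \sim P$, so no topological Sion-type hypotheses are actually needed).
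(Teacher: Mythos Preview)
Your proposal is correct and follows essentially the same approach as the paper: pick the adversarial distribution $\D^\star$ that places unit variance on the singletons, reduce to deterministic rules via the trivial minimax direction, and observe that an approximator built from $d$ queries depends on at most $d$ agents' singleton signals, leaving variance $n-d$ unaccounted for. Your justification for why the approximator depends on at most $d$ singletons is in fact slightly cleaner than the paper's---you argue directly from the definition of a query as a map $(\cS_i,\D)\to\reals$ and note that under $\D^\star$ all non-singleton signals in $\cS_i$ are constants, whereas the paper routes the same conclusion through the observation that $\queryc(\Q)\le d$ forces $\agentc(\Q)\le d$---but this is a cosmetic difference in the same argument.
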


\begin{proof}
    By the minimax inequality,
    \begin{align*}
        \min_{A \in \Aquery} \error(A) &= \min_{A \in \Aquery} \max_{\D} \frac{L(A)}{\sum_T \Var_{\D}(X_T)} \\
        &\geq \max_{\D} \min_{A \in \Aquery} \frac{L(A)}{\sum_T \Var_{\D}(X_T)} \\
        &= \max_{\D} \min_{\substack{(f, \Q) \in \DAGset \\ \queryc(\Q) \leq d}} \frac{L(f(\Q))}{\sum_T \Var_{\D}(X_T)}.
    \end{align*}
    The last equality, where we need only consider deterministic aggregation rules that implement $A$, holds since for any function $\ell: \reals \to \reals$, $\E_{Z \sim P} \ell(f_Z(\Q_Z)) \geq \min_{z \in \supp(P)} \ell(f_z(\Q_z))$ (and the inequality is realized by the point mass on the argmin pair $(f_Z, \Q_Z)$).

    Since $\queryc(\Q) \leq d$, by definition $\agentc(\Q) \leq d$ and $\orderc(\Q) \leq d$. 
    (The maximum path length from any query to the sink in any graph $G$ that DAG-implements $\Q$ is upper bounded by $|\Q|$, as is the number of agents reachable from any query.)
    It immediately follows that in the best case, with a deterministic set of queries, the principal can obtain knowledge of $\left(\bigcup_{i \in S} \cS_i\right) \cup \D $ for some subset $S$ of agents with $|S| \leq d$.
    
    Consider the adversarial distribution $\D$ with all variance concentrated on the singleton sets. 
    That is, $\Var_{\D}(X_{\{i\}}) = 1$ for all $i \in [n]$, and $\Var_{\D}(X_T) = 0$ for all $T \subseteq [n]$, $|T| > 1$. 
    It follows that $Y = \sum_{i=1}^n X_{\{i\}}$.
    By properties of squared loss, $\E_{\D} [Y \mid \bigcup_{i \in S} \cS_i ] = \sum_{i \in S} X_{\{i\}}$ minimizes $\E_{\D} [ (Y - A)^2 \mid \bigcup_{i \in S} \cS_i]$.
    Thus
    \begin{align*}
         \min_{\substack{(f, \Q) \in \DAGset \\ \queryc(\Q) \leq d}} \frac{\E(Y - f(\Q))^2}{\sum_T \Var_{\D}(X_T)} &\geq \frac{\E [(\sum_{i=1}^n X_{\{i\}} - \sum_{i \in S} X_{\{i\}})^2]}{n} \\
         &= \frac{\sum_{i \in [n] \backslash S} \Var_{\D} [X_{\{i\}}]}{n} \\
         &= 1 - \frac{d}{n}.
    \end{align*}
\end{proof}

\begin{lemma} \label{lem:limited-query-upper-bound}
    Let $d < n$. Then there exists an approximator $A \in \Aquery$ such that $\error(A) = 1 - \frac{d}{n}$.
\end{lemma}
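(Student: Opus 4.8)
\textbf{Proof proposal for Lemma~\ref{lem:limited-query-upper-bound}.}
The plan is to exhibit an explicit randomized difference aggregation rule and compute its worst-case error directly. First I would define the rule $P$: draw a subset $S \subseteq [n]$ with $|S| = d$ uniformly at random, fix any ordering $L_S = (i_1,\dots,i_d)$ of $S$, take the query set $\Q_{\diff}(L_S)$, and set the aggregation function to be the unweighted sum $f(\Q_{\diff}(L_S)) = \sum_{k=1}^d \diff(i_1,\dots,i_k)$. By Proposition~\ref{prop:diff-query-order}, each $\Q_{\diff}(L_S)$ is DAG-elicitable with query complexity exactly $d$, so $\queryc(P) = d$ and the implemented approximator $A$ lies in $\Aquery$.

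Next I would identify what $A$ computes for a fixed draw $S$. By Proposition~\ref{prop:diff-queries}, $\diff(i_1,\dots,i_k) = \sum_{T:\, i_k \in T,\ i_\ell \notin T\ \forall \ell < k} X_T$, i.e.\ the sum of those $X_T$ whose earliest $S$-index lying in $T$ is $i_k$. Summing over $k = 1,\dots,d$, every $T$ with $T \cap S \neq \emptyset$ is counted exactly once (at its earliest $S$-index) and every $T$ with $T \cap S = \emptyset$ is omitted, so $f(\Q_{\diff}(L_S)) = \sum_{T:\, T \cap S \neq \emptyset} X_T$ — exactly the reconstruction used in the proof sketch and in Theorem~\ref{thm:opt-difference}.

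The core computation is then the error. For a fixed distribution $\D$ and fixed $S$, independence and mean-zero give $\E[(Y - f(\Q_{\diff}(L_S)))^2] = \E[(\sum_{T:\, T\cap S = \emptyset} X_T)^2] = \sum_{T:\, T\cap S = \emptyset} \Var_\D(X_T)$. Averaging over $S \sim U$ yields $L(A) = \sum_T \Var_\D(X_T)\,\Pr_S[T\cap S = \emptyset]$, where $\Pr_S[T \cap S = \emptyset] = \binom{n-|T|}{d}/\binom{n}{d}$ (and $0$ when $|T| > n-d$). Since $\binom{n-k}{d}$ is nonincreasing in $k$, this probability is at most $\binom{n-1}{d}/\binom{n}{d} = 1 - \tfrac{d}{n}$ for every nonempty $T$ (and $X_\emptyset \equiv 0$ contributes nothing), so $L(A) \le (1-\tfrac{d}{n})\sum_T \Var_\D(X_T)$ for all $\D$, giving $\error(A) \le 1 - \tfrac{d}{n}$. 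For the matching lower bound I would either invoke Lemma~\ref{lem:limited-query-lower-bound} directly, or simply plug in the singleton instance $\Var_\D(X_{\{i\}}) = 1$, $\Var_\D(X_T) = 0$ for $|T|>1$, for which $L(A) = n(1-\tfrac{d}{n}) = n-d$ and $\sum_T \Var_\D(X_T) = n$, so $\error(A) = 1-\tfrac{d}{n}$.

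The only mildly delicate points are bookkeeping ones: confirming that the sum of difference queries telescopes to $\sum_{T: T\cap S\neq\emptyset} X_T$ with each $T$ counted exactly once, and the monotonicity of $\binom{n-k}{d}$ in $k$ that pins the maximum of $\Pr_S[T\cap S = \emptyset]$ at $|T|=1$. I expect the second of these — verifying that the worst-case $\D$ for this rule is the all-singletons instance — to be the step most worth spelling out, though it reduces to an elementary binomial inequality.
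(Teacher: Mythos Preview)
Your proposal is correct and matches the paper's own proof essentially step for step: the same randomized difference aggregation rule over a uniformly random size-$d$ subset $S$, the same identification $f(\Q_{\diff}(L_S)) = \sum_{T:\,T\cap S\neq\emptyset} X_T$ via Proposition~\ref{prop:diff-queries}, the same computation $\Pr_S[T\cap S=\emptyset]=\binom{n-|T|}{d}/\binom{n}{d}$, and the same monotonicity argument showing the worst case is the all-singletons instance with value $1-\tfrac{d}{n}$.
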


\begin{proof}
    Let $\T_d \coloneqq \{S \subseteq [n]: 1 \leq |S| \leq d\}$ be the collection of expert subsets of size at most $d$. 
    For a subset $S \subseteq [n]$, let $L(S)$ map $S$ to an arbitrarily ordered list $L(S)$.

    We define the randomized aggregation rule $P \coloneqq \text{unif}(\T_d)$.
    In other words, $P$ uniformly randomly samples a subset $S$ of $d$ agents.
    Then we let $(\Q_S, f_S) \coloneqq (\Q_{\diff}(L(S)), \sum_{Q \in \Q_S} Q)$, i.e. once a random subset of experts $S$ is chosen and mapped to an ordered list $L(S)$, the randomized aggregation rule outputs the linear combination of the corresponding difference queries.
    Let $A_P$ be the approximator that $P$ implements. 
    By Proposition~\ref{prop:diff-query-order}, for any list $L$ of length $d$, $\Q_{\diff}(L)$ has agent and order complexity $d$; moreover, the query complexity is $d$. 
    It follows that $\queryc(P) = d$, $\orderc(P) \leq d$, and $\agentc(P) \leq d$, so that $A_P \in \Aquery$.
    
    For a fixed subset $S$, by Proposition~\ref{prop:diff-queries}, 
    \[
    f_S(\Q_{L(S)}) = \sum_{k=1}^d \sum_{\substack{T: i_k \in T, \\ i_{\ell} \notin T \; \forall \ell < k}} X_T = \sum_{T: T \cap S \neq \emptyset} X_T.
    \]

    Let $I_T \coloneqq \ones_{T \cap S = \emptyset}$. 
    Then the error of $A_P$ under a fixed distribution $\D$ is

    \begin{align*}
        \frac{\E_{S, \D} (\sum_T X_T - f_S(\Q_{L(S)}))^2}{\sum_T \Var (X_T)} &= \frac{\E_{S,\D} [(\sum_T X_T - \sum_{T: T \cap S \neq \emptyset} X_T)^2]}{\sum_T \Var (X_T)} \\
        &= \frac{\E_{S \sim \text{unif}(\T_d)} \E_{\{X_T\} \sim \D}[(\sum_{T: T \cap S = \emptyset} X_T)^2]}{\sum_T \Var(X_T)} \\
        &= \frac{\sum_T \sum_U \E[X_T X_U] \Pr[I_T = 1, I_U = 1]}{\sum_T \Var(X_T)} \\
        &= \frac{\sum_T \beta_T \Var(X_T)}{\sum_T \Var(X_T)} & \text{(by independence of $X_T$'s),} 
    \end{align*}
    where $\beta_T \coloneqq \Pr[I_T = 1] = \frac{{n - |T| \choose d}}{{n \choose d}}$.
    In particular, if $|T| = 1$, $\beta_T = \frac{{n - 1 \choose d}}{{n \choose d}} = 1 - \frac{d}{n}$.
    We note that $\beta_T$ is the largest for $T$ of size $1$, so that in the worst case over $\D$ an adversary will put all variance on the singleton sets.
    Formally, note for any valid distribution $\D$ that 
    \begin{align*}
    \frac{\sum_T \beta_T \Var_{\D}(X_T)}{\sum_T \Var_{\D}(X_T)} &\leq \frac{\beta_1 \sum_T \Var_{\D}(X_T)}{\sum_T \Var_{\D}(X_T)} = \beta_1,
    \end{align*}
    with the inequality achieved when $\Var_{\D}(X_T) > 0$ only if $|T| = 1$.
    Thus,
    \begin{align*}
        \error(A_P) = \max_{\D} \frac{\sum_T \beta_T \Var_{\D}(X_T)}{\sum_T \Var_{\D}(X_T)} \leq \beta_1 &= 1 - \frac{d}{n}.
    \end{align*}
\end{proof}

\subsection{Lemmas and proofs from \S~\ref{subsec:agent-order}} \label{app:agent-order}

\begin{proof}[Proof of Lemma~\ref{lem:linear-intersection}]
    Let $P$ be the randomized linear aggregation rule that implements $A$, with $\agentc(P) \leq d$ and $\orderc(P) \leq d$.
    For any $Z \sim P$, let $R_Z = (\Q_Z, f_Z)$ be a deterministic linear aggregation rule, where $f_Z(\Q_Z) = \sum_{Q \in \Q_Z} \beta_Q Q$, $\agentc(R) \leq d$, and $\orderc(R) \leq d$.
    Note first that the order complexity of each $Q \in \Q_Z$ is at most $d$ by definition under some graph $G$ that DAG-implements $R_Z$.
    
    Our proof uses induction to iteratively transform $\Q_Z$ into a new set of queries $\Q_{Z_k}$, where we let $Z_k \coloneqq g_k(Z)$ be a function of $Z$.
    At step $k$, we will show by induction that for any linear query $Q \in \Q_Z$ of order complexity at most $k$ in $G$, we can re-write the query as \maryt{differentiate between query output and function here...}$Q = \sum_{S : 1 \leq |S| \leq k} c_{S,Q} \inter(S)$ for some set of coefficients $\{ c_{S,Q} \}$.
    We can then replace the query set $\Q_{Z_{k-1}}$ with $\Q_{Z_k} \coloneqq (\Q_{Z_{k-1}} \backslash \{ Q \}_{\orderc(Q) \leq k}) \cap \Q_{\inter}(k)$.
    It follows that $\Q_{Z_d} = \Q_{\inter}(d)$.
    We can thus define an intersection aggregation rule $(\Q_{Z_d}, f_{Z_d})$, where $f_{Z_d}$ is a linear function over $\Q_{Z_d}$ with
    \begin{align*}
        f_{Z_d}(\Q_{Z_d}) &= \sum_{S: |S| \leq d} \left(\sum_{Q \in \Q_Z} \beta_Q c_{S,Q}\right) \inter(S) \\
        &= \sum_{Q \in \Q_Z} \beta_Q \sum_{S : |S| \leq d} c_{S,Q} \inter(S) \\
        &= \sum_{Q \in \Q_Z} \beta_Q Q \\ 
        &= f_Z(\Q_Z) & \text{a.s.}
    \end{align*}

    Let $Z_d \sim P'$, where $P'$ pushes forward $P_Z$ under $g_d(Z)$.
    Since the approach holds for any fixed $Z$, it follows that with probability one, $f_{Z_d}(\Q_{Z_d}) = A$; and since $\Q_{Z_d} \coloneqq \Q_{\inter}(d)$, by Proposition~\ref{prop:inter-query-order} $\agentc(P') \leq d$ and $\orderc(P') \leq d$.
    The statement follows. 

    We proceed by proving the inductive statement: each query $Q \in \Q_Z$ of order at most $k$ can be written as $Q = \sum_{S : 1 \leq |S| \leq k} c_{S,Q} \inter(S)$ for some set of coefficients $\{ c_{S,Q} \}$.
    For the base case, take $k = 1$.
    Then a linear query of order $1$ can only be scored against the sink, $Y$. 
    That is, if $\agent(Q) = i$, $Q = \E\left[ \alpha_{Q, Y} Y \mid \cS_i\right] = \sum_{T \ni i} \alpha_{Q,Y} X_T = c_{\{i\},Q} \inter(\{i\})$ for $ c_{\{i\},Q} \coloneqq \alpha_{Q,Y}$.

    Now, assume each $Q' \in \Q_Z$ of order at most $k$ can be written as $Q' = \sum_{S : 1 \leq |S| \leq \ell} c_{S,Q} \inter(S)$ for some set of coefficients $\{ c_{S,Q} \}$.
    Take a linear query $Q = \sum_{Q' \in N_G(Q)} \alpha_{Q,Q'} Q'$ of order complexity at most $k+1$, being asked of agent $i$.
    Note by construction that each $Q' \in N_G(Q)$ must be of order at most $k$.
    Thus, by the inductive hypothesis, each $Q' = \sum_{S: 1 \leq |S| \leq k} c_{S,Q'} \inter(S)$.

    Since agent $i$ is answering $Q$ and each $X_T$ is independent with mean zero, only $X_T$ with $i \in T$ survive in the final expression. 
    Thus 
    \[
    Q = \sum_{Q'}  \alpha_{Q,Q'} \sum_{S: |S| \leq k} c_{S,Q'} \inter(S \cup \{i\}).
    \]
    Now for $S' = S \cup \{i\}$, $|S'| = |S| + 1 \leq k + 1$.
    Define 
    $c_{S',Q} = \sum_{Q' \in N_G(Q)} \sum_{S: |S| \leq k, S \cup \{i\} = S'} \alpha_{Q'} c_{S,Q'}$.
    (If the sum is over an empty set, then $c_{S,Q'} = 0$.)
    Then we have 
    \[ 
    Q = \sum_{S': |S'| \leq k + 1} c_{S',Q} \inter(S').
    \]

\end{proof}

\begin{proof}[Proof of Lemma~\ref{lem:deterministic-order}]
    Let $A \in \Lorder$, so there exists a randomized linear aggregation rule $P$ with $\orderc(P) \leq d$, $\agentc(P) \leq d$ that implements $A$. 
    By Equation~\ref{eq:linear-error}, a randomized linear aggregation rule $P$ is equivalent to a distribution $\hat P \in \Delta(\reals^{2^d})$, with
    \[ \error(\hat P) = \E_{\beta \sim \hat P} \max_{\substack{v \in \reals^{\geq 0} \\ v \neq 0}} \frac{G(\beta, v)}{\sum_T v_T},\]
    where $G(\beta, v) = \sum_T (1 - \sum_{S: |S| \leq d, S \subseteq T} \beta_S)^2 v_T$. 
    Note that $G$ is convex in the vector $\beta$, and the maximum over convex functions is also convex. 
    Formally, for a subset $T$ let vector $u_T \in \reals^{2^d}$ satisfy $(u_T)_S = 1$ if $S \subseteq T$, and 0 otherwise. 
    Then $\sum_{S: |S| \leq d, S \subseteq T} \beta_S = u_T^{\top} \beta$, and $G(\beta, v) = \sum_T v_T (1 - u_T^{\top} \beta)^2$.
    It follows by Jensen's inequality that
    \[ \error(\hat P) = \E_{\beta \sim \hat P} \max_{\substack{v \in \reals^{\geq 0} \\ v \neq 0}} \frac{G(\beta, v)}{\sum_T v_T} \geq \max_{\substack{v \in \reals^{\geq 0} \\ v \neq 0}} \frac{f(\E_{\hat P} \beta, v)}{\sum_T v_T} = \error(R), \]
    where \maryt{todo: identify $A$ in intersection query roles according to $\beta$}
    $R = (\Q_{\inter}(d), f)$ with $f(\Q_{\inter}(d)) = \sum_{S: |S| \leq d} \beta_S \inter(S)$ and each $\beta(S) \coloneqq \E_{\hat P} \beta$.
    The statement follows by considering the approximator that $R$ implements, since $R$ is an intersection aggregation rule of order $d$.
\end{proof}

\begin{proof}[Proof of Lemma~\ref{lem:symmetric-functions}]
    Let $A$ be implemented by an order-$d$, deterministic intersection aggregation rule $R = (\Q_{\inter}(d), f)$. 
    Take some set of coefficients $\beta \in \reals^{2^d}$, identifying the aggregation function $f$.
    As shown in Lemma~\ref{lem:deterministic-order}, \[\error(A) = \max_{\substack{v \in \reals^{\geq 0} \\ v \neq 0}} \frac{f(\beta, v)}{\sum_T v_T} \eqqcolon F(\beta)\] is a convex function over $\beta$.
    Note also that the function $F(\beta)$ is invariant to permutations over the sets.
    In particular, take some permutation $\sigma$ over $[n]$, which operates on a set $S$ as $\sigma(S) = \{\sigma(i): i \in T\}$ (i.e. it maps the set to a different set of agents with the same cardinality).
    Let $\beta_T^{\sigma} = \beta_{\sigma(T)}$. 
    \maryt{TODO: elaborate here}
    Then note that $F(\beta) = F(\beta^{\sigma})$.

    Let $\text{Sym}(n)$ be the set of permutations on $[n]$.
    Now, if we just take the \emph{average} of the coefficients over all permutations, i.e. $\beta' = \frac{1}{n!} \sum_{\sigma \in \text{Sym}(n)} \beta^{\sigma}$ (with element-wise addition and division), by Jensen's inequality 
    \[ 
    F(\beta') \leq \frac{1}{n!} \sum_{\sigma \in \text{Sym}(n)} F(\beta^{\sigma}) = F(\beta).
    \]

    In other words, for each subset with the same length, we re-assign it the \emph{average} of all coefficients for that length.
    It follows for any non-symmetric assignment $\beta$, there exists a symmetric assignment $\beta'$ of coefficients that does not increase the error. 
    Moreover, $F(\beta') = \error(A')$ for $A'$ the approximator that $R' = (\Q_{\inter}(d), \hat f)$ implements, with $\hat f$ the linear function identified by $\beta'$.
    The statement follows. 
\end{proof}

\begin{lemma} \label{lem:continuous-equioscillate}
    \[p^* \in \arg\min_{\substack{p \in P_d \\ p(0) = 1}} \max_{t \in [1,n]} |p(t)|\] if and only if $p^*$ equioscillates at $d+1$ points in $[1,n]$ (i.e. there exist $d + 1$ values $x_i \in \reals$ such that $1 \leq x_0 \leq \ldots \leq x_d \leq n$, where $|p^*(x_i)| = \sigma(-1)^i \max_{x \in [1,n]} p^*(x)$ with $\sigma$ either $-1$ or $+1$). 
    Moreover, the polynomial $p^*$ is unique.
\end{lemma}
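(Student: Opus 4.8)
The plan is to mimic the classical proof of the Chebyshev equioscillation theorem, adapting it to the affine constraint $p(0)=1$ on the interval $[1,n]$ rather than the monic-degree-$d$ constraint on $[-1,1]$. Let $E \coloneqq \inf_{p} \max_{t\in[1,n]}|p(t)|$ over $p\in P_d$ with $p(0)=1$; the set $\{p\in P_d : p(0)=1\}$ is a $d$-dimensional affine subspace of the finite-dimensional space $P_d$, the functional $p\mapsto\max_{t\in[1,n]}|p(t)|$ is continuous and coercive on this subspace (a polynomial of bounded sup-norm on a nondegenerate interval has bounded coefficients), so a minimizer $p^*$ exists. The first direction to prove is \textbf{sufficiency}: if some $p^*$ in the feasible set equioscillates at $d+1$ points, then it is optimal. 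Here I would argue by contradiction: suppose $q$ is feasible with strictly smaller sup-norm; then $p^*-q$ is a polynomial of degree $\le d$ vanishing at $0$ (since both satisfy $p(0)=1$), and by the standard sign-alternation argument $p^*-q$ changes sign between consecutive equioscillation points of $p^*$, giving it $\ge d$ sign changes, hence $\ge d$ roots in $(1,n)$; together with the root at $0$ this forces $d+1$ roots for a degree-$\le d$ polynomial, so $p^*-q\equiv 0$, a contradiction. This same counting argument simultaneously yields \textbf{uniqueness}: any two optimizers $p^*,q^*$ would differ by a degree-$\le d$ polynomial with $d+1$ roots (the $d$ from alternation plus the one at $0$), hence are equal.

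The second, harder direction is \textbf{necessity}: every optimizer equioscillates at $d+1$ points. Let $p^*$ be optimal with $M \coloneqq \max_{[1,n]}|p^*|$, and let $k+1$ be the maximal number of alternating extreme points (points where $p^*$ takes value $\pm M$ with alternating signs, ordered along $[1,n]$). Suppose for contradiction $k \le d-1$. The standard move is to partition $[1,n]$ at points separating the "$+M$ blocks" from the "$-M$ blocks" and build a correction polynomial $r$ of degree $\le k \le d-1$ that is positive on the $+M$ region and negative on the $-M$ region; then $p^* - \varepsilon r$ strictly reduces the sup-norm for small $\varepsilon>0$. The twist in our setting is the constraint $p(0)=1$: we need the perturbation to preserve it, i.e. $r(0)=0$. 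Since $r$ has degree $\le k \le d-1$, the polynomial $t\cdot \tilde r(t)$ with $\tilde r$ of degree $\le k$ still has degree $\le d$, vanishes at $0$ automatically, and — because $0\notin[1,n]$ — the factor $t$ is strictly positive on $[1,n]$ and so does not disturb the sign pattern of $r$ on $[1,n]$. Thus I would take the correction to be of the form $t\cdot(\text{sign-matching polynomial of degree} \le d-1)$, which lives in the tangent space $\{p : p(0)=0,\ \deg p \le d\}$ (of dimension $d$) and has enough freedom to match the $\le d$ sign blocks. This yields a feasible polynomial with smaller sup-norm, contradicting optimality, so $k\ge d$, i.e. $p^*$ equioscillates at $\ge d+1$ points; and it cannot exceed $d+1$ alternations at value $\pm M$ without $p^*$ being forced to have too high a degree (or one can simply take exactly $d+1$ of them).

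The main obstacle I anticipate is handling the constraint $p(0)=1$ cleanly in the necessity argument — specifically, verifying that restricting the correction polynomials to the subspace $\{r : r(0)=0\}$ still leaves enough degrees of freedom to realize an arbitrary sign pattern on up to $d$ blocks. The key observation that makes this work is that $0$ lies strictly outside the interval $[1,n]$, so multiplication by $t$ is a sign-preserving isomorphism from (degree $\le d-1$ polynomials) into (degree $\le d$ polynomials vanishing at $0$), and the dimension count $d$ matches the at-most-$d$ interpolation/sign conditions. Once this is in place, the rest is a routine transcription of the de la Vallée Poussin / Chebyshev argument. (An alternative, if one wants to minimize bespoke work, is to substitute $t = \phi(x)$ for the affine map $\phi:[-1,1]\to[1,n]$ and note that $p\mapsto p\circ\phi$ carries $\{p\in P_d : p(0)=1\}$ onto $\{q\in P_d : q(x_0)=1\}$ where $x_0=\phi^{-1}(0)<-1$, then invoke a known equioscillation result for polynomials with a prescribed value at an exterior point; but writing the direct proof is about as short.)
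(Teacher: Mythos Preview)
Your proposal is correct and follows essentially the same route as the paper's proof: both handle sufficiency via the root-counting argument on $p^*-q$ (using the extra root at $0$ coming from the constraint $p(0)=1$), and both handle necessity by constructing a correction polynomial of the form $t\cdot\prod_j (t-\tilde x_j)$, which automatically vanishes at $0$ and, since $t>0$ on $[1,n]$, inherits the desired sign pattern on the blocks. The one place you diverge is uniqueness: you fold it into the sufficiency counting argument applied to $p^*-q^*$, which works but requires care with the weak (non-strict) sign alternation at the extremal points; the paper instead averages two putative optimizers, applies the already-proved necessity direction to $(p_1+p_2)/2$, and concludes $p_1=p_2$ at $d+2$ points. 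The averaging route is slightly cleaner because it sidesteps the weak-alternation root count.
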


\begin{proof}
    Our proof tracks the same reasoning as the original equioscillation theorem; however, since our polynomial constraints and domain are different, we provide a proof with adjustments here. 
    We refer the reader to Theorem 8.4 in~\cite{suli2003introduction}, whose structure we largely follow.
    \paragraph{Sufficiency.}
    Assume if a function $p^*$ satisfies the equioscillation property, it does not achieve the minimax value, so that there is some other $p \in P_d$ with $p(0) = 1$ such that $\max_{x \in [1,n]} |p(x)| <  \max_{x \in [1,n]} |p^*(x)|$.
    Consider the polynomial $h = p^* - p$, which is also of degree at most $d$. 
    First note that $p^*(0) = p(0) = 1$, so that $h(0) = 0$.
    Moreover, for every $x_i$ of the $d+1$ points $x_0, \ldots, x_d$, we know by assumption that $|p(x_i)| < |p^*(x_i)|$, meaning $h(x_i) = p^*(x_i) - p(x_i)$ must have the same sign as $p^*$.
    Thus $h$ will alternate in sign on these $d+1$ points, meaning $h$ has $d+2$ roots.
    This is a contradiction, since $h$ is a polynomial of degree at most $d$. 
    \paragraph{Necessity.}
    Assume $p^* \in P_d$ achieves the minimax objective, and let $\X$ be the set of points in $[1,n]$ such that $|p^*(x)| = \max_{x \in [1,n]} |p^*(x)|$.
    We first show that, for all $q \in P_d$ with $q(0) = 0$, $\max_{x \in \X} p^*(x) q(x) \leq 0$. 
    Suppose not, i.e. there exists some $\tilde q \in P_d$, $q(0) = 0$ such that $\max_{x \in \X} p^*(x) q(x) > 2\epsilon$. 
    By continuity of the function $p^*(x) q(x)$, we can find a $\delta > 0$ such that 
   \begin{equation} \label{eq:neighborhood-eps}
   \max_{x \in \tilde \X} p^*(x) q(x) > \epsilon,
   \end{equation}
    where $\tilde X \coloneqq \{ t \in [a, b]: \min_{x \in \X} |t - x| < \delta\}$ (i.e. $\tilde X$ is the set of points in a small neighborhood around those in $\X$).
    Let $E \coloneqq \max_{x \in [1,n]} |p^*(x)|$, and let $\tilde E \coloneqq  \max_{x \in [1,n]} |\tilde q(x)|$.
    We now design a function $\tilde p$ that achieves a better minimax value, with $\tilde p(x) = p^*(x) - \lambda \tilde q(x)$ and $\lambda \in (0, \min\{2\epsilon/M^2, \eta/\tilde E\})$.
    This leads to a contradiction. 
    We note that for all $x \in \tilde X$,
    \begin{align*}
        \tilde p(x)^2 &= p^*(x)^2 - 2 \lambda \tilde q(x) p^*(x) + \lambda^2 \tilde q(x)^2 \\
        &= E^2 - 2 \lambda \tilde q(x) p^*(x) + \lambda^2 \tilde q(x)^2 \\
        &< E^2 + 2\lambda \epsilon + \lambda^2 M^2 & \text{(by Equation~\ref{eq:neighborhood-eps})} \\
        &\leq E^2 - \frac{4\epsilon^2}{M^2} + \frac{4\epsilon^2}{M^2} \\
        &= E^2.
    \end{align*}
    Meanwhile, consider the set $\hat \X \coloneqq [1,n] \backslash \tilde \X$.
    By continuity, there is a value $\eta > 0$ such that $|p^*(x)| \leq E - \eta$ for all $x \in \hat \X$.
    Since $\lambda < \eta / \tilde E$, we have for all $x \in \hat X$ that
    \begin{align*}
        |\tilde p(x)| &= |p^*(x) - \lambda \tilde q(x)| \\
        &\leq |p^*(x)| + \lambda | \tilde q(x)| \\
        &< (E - \eta) +  \eta\\
        &\leq E.
    \end{align*}

    Now we need to show that the set $\X$ contains $d+1$ points, and the error oscillates in sign at these points.
    Suppose this is not the case, so we can only identify $m$ points with oscillating sign in $\X$ with $1 \leq m < d + 1$, labeled $x_0 < \dots < x_{m-1}$.
    We will then construct a function $q \in \P_d$, $q(0) = 0$ that violates the previous statement., i.e. $\max_{x \in \X} p^*(x) q(x) > 0$. 
    Note first that $E \coloneqq \max_{x \in [1,n]} |p^*(x)| > 0$, since $p^*$ is a polynomial of degree at least $1$.
    If $m = 1$, then $p^*$ does not change sign on $\X$. 
    If we then take $q(x) = x$, either $\max_{x \in \X} q(x) p^*(x) > 0$ or $\max_{x \in \X} -q(x) p^*(x) > 0$.
    If $m > 1$, there exist $m-1$ zeroes of $p^*$, $\tilde x_0 < \dots < \tilde x_{m-2}$, with each $\tilde x_i \in (x_{i-1}, x_i)$ (where we define $x_{-1} \coloneqq 1$ and $x_m \coloneqq n$).
    Now we construct the function $q(x) = x \prod_{j=0}^{m-2} (x - \tilde x_j)$.
    First note that $q(0) = 0$, and the degree of $q$ is $m \leq d$.
    Since the sign of $q$ does not change between its roots, it does not change in the intervals $(1, \tilde x_0), (\tilde x_0, \tilde x_1), \dots, (\tilde x_{m-1}, n)$; and it flips between these intervals.
    It follows that the sign of $q(x) p^*(x)$ is the same for all $x \in \X$, and either $\max_{x \in \X} q(x) p^*(x) > 0$ or $\max_{x \in \X} -q(x) p^*(x) > 0$.

    \paragraph{Uniqueness.}
    Suppose by contradiction there are two polynomials $p_1, p_2 \in \P_d$ that achieve the minimax value with $p_1 \neq p_2$, $p_1(0) = 1$, and $p_2(0) = 1$.
    Then take $q(x) = \frac{p_1(x) + p_2(x)}{2}$, and note that $q \in \P_d$ and $q(0) = 1$.
    Moreover, applying the triangle inequality, one can see that $q(x)$ also achieves the minimax value on $[1,n]$.
    By the necessary condition proven above, $q$ has $d+1$ points $(x_0, \dots, x_{d+1})$ in $[1,n]$ that oscillate in sign and achieve the extreme value. 
    We can then show for each $x_i$ that $p_1(x_i) = p_2(x_i)$.
    In particular, note $2 |q(x_i)| = |p_1(x_i) + p_2(x_i)| = 2E$, $|q(x_i)| \leq E$, and $|p_2(x_i)| \leq E$.
    We therefore must have $(p_1(x_i) - p_2(x_i))^2 = 2 p_1(x_i)^2 + 2 p_2(x_i)^2 - (p_1(x_i) + p_2(x_i))^2 \leq 0$, so that $p_1(x_i) = p_2(x_i)$.
    Since $p_1(0) = p_2(0)$ as well, $p_1$ and $p_2$ agree at $d+2$ points; but since they are both polynomials of degree $d$, we must have $p_1 = p_2$.
    Thus we have reached a contradiction.

\end{proof}

\begin{lemma} \label{lem:minmax-chebyshev-value}
    Let $q(n) = \frac{\sqrt{n} + 1}{\sqrt{n} - 1}$.
    Then
    \begin{equation} \label{eq:minmax-chebyshev}
        \min_{\substack{p \in P_d \\ p(0) = 1}} \max_{x \in [1,n]} |p(x)| = \frac{2}{q(n)^d + q(n)^{-d}}. 
    \end{equation}
    Moreover, the polynomial $\hat T_d$ which achieves the minimax value is unique. 
\end{lemma}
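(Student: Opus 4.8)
The plan is to reduce the problem to the classical Chebyshev extremal problem on $[-1,1]$ by an affine change of variables and then conclude with the equioscillation characterization in Lemma~\ref{lem:continuous-equioscillate}. Let $\psi \colon [1,n] \to [-1,1]$ be the affine bijection $\psi(x) = \frac{2x-(n+1)}{n-1}$, so that $\psi(1) = -1$, $\psi(n) = 1$, and $\psi(0) = -\frac{n+1}{n-1}$, a point lying strictly outside $[-1,1]$. Since all roots of the $d$-th Chebyshev polynomial $T_d$ lie in $(-1,1)$, we have $T_d(\psi(0)) \neq 0$, so we may define
\[
\hat T_d(x) \;\coloneqq\; \frac{T_d(\psi(x))}{T_d(\psi(0))},
\]
which is a polynomial of degree exactly $d$ (hence $\hat T_d \in P_d$) with $\hat T_d(0) = 1$.

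First I would evaluate the normalizing constant. Using the closed form $T_d(y) = \tfrac12\big[(y+\sqrt{y^2-1})^d + (y-\sqrt{y^2-1})^d\big]$, valid for $|y| \ge 1$, together with $T_d(-|y|) = (-1)^d T_d(|y|)$, it suffices to evaluate $T_d$ at $y = \tfrac{n+1}{n-1}$. There $y^2 - 1 = \tfrac{4n}{(n-1)^2}$, so $\sqrt{y^2-1} = \tfrac{2\sqrt n}{n-1}$ and hence
\[
y + \sqrt{y^2-1} \;=\; \frac{(\sqrt n + 1)^2}{(\sqrt n - 1)(\sqrt n + 1)} \;=\; q(n), \qquad y - \sqrt{y^2-1} \;=\; q(n)^{-1}.
\]
Therefore $|T_d(\psi(0))| = \tfrac12\big(q(n)^d + q(n)^{-d}\big)$, and since $\max_{y \in [-1,1]} |T_d(y)| = 1$,
\[
\max_{x \in [1,n]} |\hat T_d(x)| \;=\; \frac{1}{|T_d(\psi(0))|} \;=\; \frac{2}{q(n)^d + q(n)^{-d}},
\]
which gives the ``$\le$'' direction and a candidate minimizer.

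For the matching lower bound and for uniqueness I would invoke Lemma~\ref{lem:continuous-equioscillate}: it is enough to verify that $\hat T_d$ equioscillates at $d+1$ points of $[1,n]$. The Chebyshev polynomial $T_d$ attains the value $(-1)^k$ at $y_k = \cos(k\pi/d)$ for $k = 0,\dots,d$; pulling these back through $\psi$ yields $d+1$ points $x_k = \psi^{-1}(y_k) \in [1,n]$ with $T_d(\psi(x_k)) = (-1)^k$. Dividing by the fixed nonzero scalar $T_d(\psi(0))$ rescales all of these values by the same amount, preserving the alternating-sign pattern (with overall sign $\sigma = \sgn T_d(\psi(0)) = (-1)^d$) and the common magnitude computed above. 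Hence $\hat T_d$ equioscillates at $d+1$ points, so Lemma~\ref{lem:continuous-equioscillate} certifies that $\hat T_d$ is the unique minimizer of $\max_{x \in [1,n]} |p(x)|$ over $\{ p \in P_d : p(0) = 1 \}$, and the minimax value is exactly $\tfrac{2}{q(n)^d + q(n)^{-d}}$.

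The argument is essentially a bookkeeping exercise once the right affine map is identified, so I do not expect a genuine obstacle; the only points requiring care are checking $\psi(0) \notin [-1,1]$ (immediate for $n \ge 2$) so the normalization is legitimate, tracking the sign of $T_d(\psi(0))$ when $d$ is odd so the equioscillation pattern is stated correctly, and confirming $\hat T_d$ has degree exactly $d$ so that it genuinely lies in $P_d$. The only ``real'' computation is the identity $y \pm \sqrt{y^2-1} = q(n)^{\pm 1}$ at $y = \tfrac{n+1}{n-1}$, which is a one-line simplification.
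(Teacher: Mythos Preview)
Your proposal is correct and follows essentially the same approach as the paper: both define $\hat T_d(x) = T_d(\psi(x))/T_d(\psi(0))$ via the affine map $\psi(x) = \tfrac{2x-(n+1)}{n-1}$, verify equioscillation to invoke Lemma~\ref{lem:continuous-equioscillate}, and simplify $|T_d(\psi(0))|$ to obtain the stated value. The only cosmetic difference is that the paper cites the hyperbolic identity $T_d(y) = \cosh(d\arcosh y)$ whereas you use the equivalent algebraic form $T_d(y) = \tfrac12\bigl[(y+\sqrt{y^2-1})^d + (y-\sqrt{y^2-1})^d\bigr]$; your version is in fact more explicit about the sign bookkeeping and the verification that $\psi(0) \notin [-1,1]$.
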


\begin{proof}
    Let $T_d(x)$ be the Chebyshev polynomial of order $d$.
    Consider the affine transformation $m(x) = \frac{2x - (n+1)}{n-1}$ which maps from the interval $[1,n]$ to $[-1,1]$, and take the degree $d$ polynomial $\hat T_d(x) = \frac{T_d(m(x))}{T_d(m(0))}$.
    By definition, $\hat T_d(0) = 1$. 
    \maryt{Write down the zeros of the OG and mapped Chebyshev polynomial here:}
    Moreover, $\hat T_d$ satisfies the equioscillation condition of Lemma~\ref{lem:continuous-equioscillate} (it has $d+1$ points $x_i$ in $[1,n]$ where $\hat T_d (x_i) = \max_{x \in [1,n]} T_d(x)$), so it is the unique polynomial that achieves the minimax value.
    This minimax value is $\frac{1}{\hat T_d(m(0))} = \frac{1}{T_d\left( -\frac{n+1}{n-1}\right)}$; simplifying according to the identity $T_d(x) = \cosh (d \arcosh{x})$ for $x \geq 1$, the result follows. 
\end{proof}

\begin{lemma} \label{lem:chebyshev-lower-bound}
    \[ \min_{\substack{p \in P_d \\ p(0) = 1}} \max_{t \in \{1,\dots,n\}} |p(t)| \geq \left( 1 - \frac{d^2}{n-1}\right) \min_{\substack{p \in P_d \\ p(0) = 1}} \max_{x \in [1,n]} |p(x)|. \]
\end{lemma}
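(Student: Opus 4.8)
The plan is to compare the discrete‐optimal polynomial against its own behavior on the whole interval $[1,n]$, bounding the discrepancy between grid points by a derivative estimate. Let $p^*$ attain the left‐hand minimum, so $p^* \in P_d$, $p^*(0)=1$, and $\max_{t \in \{1,\dots,n\}}|p^*(t)| = M_{\mathrm{disc}}$. Since $p^*$ is also feasible for the continuous problem, we have $\max_{x\in[1,n]}|p^*(x)| \geq \min_{p\in P_d,\,p(0)=1}\max_{x\in[1,n]}|p(x)|$, so it suffices to prove $M_{\mathrm{disc}} \geq \bigl(1-\tfrac{d^2}{n-1}\bigr)\max_{x\in[1,n]}|p^*(x)|$. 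If $d^2 \geq n-1$ the claimed bound is vacuous (its right side is nonpositive while the left side is a nonnegative maximum of absolute values), so from now on assume $d^2 < n-1$.

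Next I would reduce to a local estimate. Pick $x^* \in [1,n]$ with $|p^*(x^*)| = M^* \coloneqq \max_{x\in[1,n]}|p^*(x)|$. If $x^*$ happens to be an integer then $M^* \leq M_{\mathrm{disc}}$, and we are done because $1-\tfrac{d^2}{n-1}\leq 1$. Otherwise $x^*$ lies strictly between two consecutive integers $t_0,t_0+1 \in \{1,\dots,n\}$; let $t^*$ be whichever of these is closer to $x^*$, so that $|x^*-t^*|\leq \tfrac12$. By the mean value theorem there is $\xi \in [1,n]$ with $|p^*(x^*)-p^*(t^*)| = |x^*-t^*|\,|(p^*)'(\xi)| \leq \tfrac12 \sup_{\zeta\in[1,n]}|(p^*)'(\zeta)|$.

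Then I would bound the derivative via the classical Markov inequality for derivatives of polynomials. Rescaling it from $[-1,1]$ to the interval $[1,n]$ of length $n-1$ gives, for every $p \in P_d$, the bound $\sup_{\zeta\in[1,n]}|p'(\zeta)| \leq \tfrac{2d^2}{n-1}\max_{x\in[1,n]}|p(x)|$. Applying this to $p^*$ and combining with the previous step and $|p^*(t^*)| \leq M_{\mathrm{disc}}$ yields
\[
M^* \;=\; |p^*(x^*)| \;\leq\; M_{\mathrm{disc}} + \tfrac12\cdot\tfrac{2d^2}{n-1}\,M^* \;=\; M_{\mathrm{disc}} + \tfrac{d^2}{n-1}\,M^*,
\]
hence (using $d^2 < n-1$) $M_{\mathrm{disc}} \geq \bigl(1-\tfrac{d^2}{n-1}\bigr) M^* \geq \bigl(1-\tfrac{d^2}{n-1}\bigr)\min_{p\in P_d,\,p(0)=1}\max_{x\in[1,n]}|p(x)|$, which is the claim.

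The argument is largely mechanical; the points that need care are invoking Markov's inequality with the sharp constant $d^2$ for degree $d$ (a weaker constant would not reproduce exactly the factor $1-\tfrac{d^2}{n-1}$), performing the affine rescaling of the interval correctly, and cleanly dispatching the degenerate regime $d^2\geq n-1$ together with the boundary case in which the continuum maximum is already attained at a grid point.
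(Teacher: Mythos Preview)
Your proof is correct and follows essentially the same approach as the paper: apply the mean value theorem together with the Markov brothers' inequality to bound the gap between the continuum maximum and the value at the nearest grid point, then specialize to the discrete minimizer. The only cosmetic difference is that the paper first maps $[1,n]$ affinely to $[-1,1]$ and applies Markov there, whereas you work directly on $[1,n]$ with the rescaled Markov constant $\tfrac{2d^2}{n-1}$; your handling of the degenerate regime $d^2\geq n-1$ and the grid-point case is in fact slightly more explicit than the paper's.
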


\begin{proof}
    We first map grid points $t \in [n]$ to a finite set $J \subseteq [-1, 1]$ of evenly-spaced points with the mapping $m(t) = \frac{2t - (n+1)}{n-1}$. 
    The distance between any two grid points $t_1, t_2 \in J$ is $\delta = \frac{2}{n-1}$. 
    Pick any $x \in [-1,1]$ and let $t_j$ be the closest grid point to $x$, so that they are at most distance $\delta/2$ from each other. 
    Then for any polynomial $p$ of degree at most $d$, 
    \begin{align*}
        |p(x)| - |p(t_j)| &\leq |p(x) - p(t_j)| \\
        &\leq \frac{\delta}{2} \max_{x \in [-1,1]} |p'(x)| \; &\text{(by the mean value theorem)}\\
        &\leq \frac{\delta}{2} d^2 \max_{x \in [-1,1]} |p(x)| \; &\text{(by the Markov brothers' inequality)} \\
        &= \frac{d^2}{n-1} \max_{x \in [-1,1]} |p(x)|.
    \end{align*}

    Rearranging, we have for any $x \in [1,1]$ that $|p(x)| \leq \frac{d^2}{n-1}  \max_{x \in [-1,1]} |p(t)| + |p(t_j)|$, so
    \begin{align*}
        \max_{x \in [-1,1]} |p(x)| &\leq \left(\frac{d^2}{n-1}\right) \max_{x \in [-1,1]} |p(x)| + \max_{t \in J} |p(t)| \\
        \left(1 - \frac{d^2}{n-1}\right) \max_{x \in [-1,1]} |p(x)| &\leq \max_{t \in J} |p(t)|.
    \end{align*}
    The statement follows by applying the affine mapping $m^{-1}(x)$ to the intervals and noting the inequality holds for the minimizing polynomial in the set $\{p \in P_d: p(0) = 1\}$.
\end{proof}

\begin{proof}[Proof of Theorem~\ref{thm:agent-order-error}]
    Note first that \[\min_{\substack{p \in P_d \\ p(0) = 1}} \max_{t \in [n]} |p(t)| \leq \min_{\substack{p \in P_d \\ p(0) = 1}} \max_{x \in [1,n]} |p(x)|,\] and by increasing monotonicity of the function $f(x) = x^2$ on $[1, \infty)$,
    \[\min_{\substack{p \in P_d \\ p(0) = 1}} \max_{t \in [n]} p(t)^2 = \left(\min_{\substack{p \in P_d \\ p(0) = 1}} \max_{t \in [n]} |p(t)|\right)^2.\]
    Combining Lemmas~\ref{lem:intersection-minimax-problem}, ~\ref{lem:minmax-chebyshev-value}, then,
    \[\min_{A \in \Lorder} \error(A) = \min_{\substack{p \in P_d \\ p(0) = 1}} \max_{t \in [n]} p(t)^2 \leq \left(\min_{\substack{p \in P_d \\ p(0) = 1}} \max_{x \in [1,n]} |p(x)|\right)^2 = \frac{4}{\left(q(n)^d + q(n)^{-d}\right)^2}\]
    for $q(n) = \frac{\sqrt{n} + 1}{\sqrt{n} - 1}$, and similarly by Lemma~\ref{lem:chebyshev-lower-bound},
    \[\min_{A \in \Lorder} \error(A) \geq \frac{4\left(1 - \frac{d^2}{n}\right)^2}{\left(q(n)^d + q(n)^{-d}\right)^2}.\]
    Now, note that $4 \left(q(n)^d + q(n)^{-d}\right)^{-2} \leq 4 q(n)^{2d} = 4 \left(1 - \frac{2}{\sqrt{n}+1}\right)^{2d} \leq 4e^{\frac{-4d}{\sqrt{n}+1}}$. 
    The first statement immediately follows. 

    \maryt{I'm being a little lazy with asymptotics here... clean up if time}
    Meanwhile, let $d = \Theta(\sqrt{n})$, so for some constants $c_1, c_2 > 0$ and $n_0$, if $n \geq n_0$ then $c_1 \sqrt{n} \leq d \leq c_2 \sqrt{n}$.
    Using the standard logarithmic Taylor expansion, we have for $u \coloneqq 1/\sqrt{n}$ that
    $\ln q(n) = \ln(1+u) - \ln(1-u) = 2 \left(u + \frac{u^3}{3} + O(u^5)\right) = \frac{2}{\sqrt{n}} + O(n^{-3/2}).$
    Then for any $d$ in the interval $[c_1 \sqrt{n}, c_2 \sqrt{n}]$, we have
    $q(n)^d + q(n)^{-d} = e^{d \ln q(n)} + e^{-d \ln q(n)} \to e^{2c} + e^{-2c} + o(1)$
    for some $c \in [c_1, c_2]$ as $n \to \infty$. Thus
    \[\lim_{n \to \infty} 4 \left(q(n)^d + q(n)^{-d}\right)^{-2} = 4(e^{2c} + e^{-2c})^{-2} \in (0,1).\]
    Similarly, $1 - \frac{d^2}{n-1} \in [1 - c_2^2, 1 - c_1^2] \subseteq (0,1)$ for all sufficiently large $n$. Multiplying by the previous limit gives
    \[
    \lim_{n \to \infty} \frac{4 (1 - \frac{d^2}{n-1})^2}{(q(n)^d + q(n)^{-d})^2} = C \in (0,1),
    \]
    where $C$ depends on $c_1, c_2$.

    Finally, let $d = o(\sqrt{n})$.
    We will show that $4 \left(q(n)^d + q(n)^{-d}\right)^{-2} = 1 - \frac{4d^2}{n} + o(\frac{d^2}{n}) = 1 - \Theta(\frac{d^2}{n})$, which immediately gives us the upper bound on error. 
    It follows by Lemma~\ref{lem:chebyshev-lower-bound} that $\min_{R \in \Lorder} \error(R) \geq 4 (1 - \frac{d^2}{n-1})^2 \left(q(n)^d + q(n)^{-d}\right)^{-2} = 1 - \Theta(\frac{d^2}{n})$.

    Let $t \coloneqq d \ln q(n)$, and note that $4 \left(q(n)^d + q(n)^{-d}\right)^{-2} = \cosh^{-2}(t).$
    Using the Taylor expansion for the logarithm with $u \coloneqq 1/\sqrt{n}$, we have
    $\ln q(n) = \ln(1+u) - \ln(1-u) = 2 \left(u + \frac{u^3}{3} + O(u^5)\right) = \frac{2}{\sqrt{n}} + O(n^{-3/2}).$
    Thus
    $t = d \ln q(n) = \frac{2d}{\sqrt{n}} + o\Big(\frac{d}{\sqrt{n}}\Big)$ since $d = o(\sqrt{n}).$
    Next, using the Taylor expansion of $\cosh t$, we have
    $\cosh t = 1 + \frac{t^2}{2} + R(t)$ for $R(t) \geq 0$, $R(t) = O(t^4),$ so that
    $4 \left(q(n)^d + q(n)^{-d}\right)^{-2} = \cosh^{-2}(t) = 1 - t^2 + o(t^2).$
    Substituting $t^2 = (d \ln q(n))^2 = \frac{4d^2}{n} + o\Big(\frac{d^2}{n}\Big)$ gives the asymptotic expression
    \[
    \frac{4}{(q(n)^d + q(n)^{-d})^2} = 1 - \frac{4 d^2}{n} + o\Big(\frac{d^2}{n}\Big).
    \]

\end{proof}

\begin{lemma} \label{lem:discrete-equioscillate}
    \[p^* \in \arg\min_{\substack{p \in P_d \\ p(0) = 1}} \max_{t \in [n]} |p(t)|\] if and only if $p^*$ equioscillates at $d+1$ integers in $[1,n]$ (i.e. there exist $d + 1$ integers $1 \leq x_0 \leq \ldots \leq x_d \leq n$ such that $|p^*(x_i)| = \sigma(-1)^i \max_{x \in [1,n]} p^*(x)$, with $\sigma$ either $-1$ or $+1$). 
\end{lemma}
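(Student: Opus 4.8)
The plan is to follow the proof of Lemma~\ref{lem:continuous-equioscillate} almost verbatim, noting that replacing the interval $[1,n]$ by the finite grid $\{1,\dots,n\}$ only makes the argument easier: the extremal set is now a finite set of integers, so the $\delta$-neighborhood and continuity bookkeeping of the continuous proof collapse to an elementary ``finitely many points'' argument. Throughout write $E := \max_{t\in[n]}|p^*(t)|$ and $\X := \{t\in[n] : |p^*(t)| = E\}$.

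\textbf{Sufficiency.} Suppose $p^*$ equioscillates at integers $x_0 < \dots < x_d$ in $[1,n]$ (so $|p^*(x_i)| = E$ with alternating signs), and suppose for contradiction some $p\in P_d$ with $p(0)=1$ has $\max_{t\in[n]}|p(t)| < E$. Then $h := p^* - p \in P_d$ satisfies $h(0)=0$, and since $|p(x_i)| < E = |p^*(x_i)|$ each $h(x_i)$ has the same sign as $p^*(x_i)$; hence $h$ changes sign across each of the $d$ gaps $(x_i,x_{i+1})$, giving $d$ roots there, and together with the root $0\notin[1,n]$ this yields $d+1$ distinct real roots of a degree-$\le d$ polynomial, forcing $h\equiv 0$, i.e.\ $p=p^*$ --- a contradiction.

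\textbf{Necessity.} Let $p^*$ attain the minimax; note $E>0$ (otherwise $p^*$ vanishes at all $n$ grid points, forcing $\deg p^*\ge n$, which does not occur in the regime $d<n$ in which the lemma is applied). Order $\X$ and let $m$ be the number of maximal constant-sign blocks $B_1,\dots,B_m$ of the sequence $(\sgn p^*(t))_{t\in\X}$; it suffices to show $m\ge d+1$, so assume for contradiction $m\le d$. For each $k<m$ take the last element of $B_k$ and the first element of $B_{k+1}$; $p^*$ has opposite signs there, and since $p^*$ is a genuine polynomial on $\reals$ it therefore has a real zero $\tilde x_k$ strictly between them, with $\tilde x_1<\dots<\tilde x_{m-1}$. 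Set $q(x) := x\prod_{k=1}^{m-1}(x-\tilde x_k)$, so $q\in P_d$ and $q(0)=0$; a parity count shows $\sgn\!\big(q(t)p^*(t)\big)$ is the same nonzero value for every $t\in\X$ (each $t\in\X$ is positive and lies strictly between $\tilde x_{\ell-1}$ and $\tilde x_\ell$ for the index $\ell$ of its block, which pins $\sgn q(t)$ to match $\sgn p^*(t)$ up to a global sign). Replacing $q$ by $-q$ if needed, $q(t)p^*(t)>0$ on $\X$. Now put $\tilde p := p^* - \lambda q$; then $\tilde p\in P_d$ and $\tilde p(0)=1$, and for sufficiently small $\lambda>0$ we get $\tilde p(t)^2 = E^2 - 2\lambda q(t)p^*(t) + \lambda^2 q(t)^2 < E^2$ for $t\in\X$, while for $t\in[n]\setminus\X$, $|\tilde p(t)| \le \max_{s\in[n]\setminus\X}|p^*(s)| + \lambda\max_{s\in[n]}|q(s)| < E$ (using that $[n]\setminus\X$ is finite, so its maximum of $|p^*|$ is strictly below $E$). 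Hence $\max_{t\in[n]}|\tilde p(t)| < E$, contradicting optimality; so $m\ge d+1$ and $p^*$ equioscillates at $d+1$ integers.

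\textbf{Main obstacle.} The only place this differs conceptually from Lemma~\ref{lem:continuous-equioscillate} is the zero-finding step: in the continuous case the intermediate value theorem is applied on the domain itself, whereas here the sign change occurs between two grid points, so we must locate the zero \emph{off} the grid. This is handled by the observation that $p^*$ is a polynomial on all of $\reals$, so opposite signs at two integers still force an honest real zero in between --- which is all the construction of $q$ requires. Everything else is a routine simplification of the continuous proof (no $\delta$-neighborhoods are needed), and the uniqueness half of Lemma~\ref{lem:continuous-equioscillate} plays no role here.
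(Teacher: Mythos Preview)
Your proof is correct and follows exactly the approach the paper intends: the paper's own proof is a one-line pointer back to Lemma~\ref{lem:continuous-equioscillate}, and you have carried out precisely that transfer, correctly identifying the one nontrivial adjustment (locating the separating zeros of $p^*$ off the integer grid via the intermediate value theorem on $\reals$) and the simplifications that finiteness affords. Nothing further is needed.
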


\begin{proof}
    The proof of this lemma exactly follows the equioscillation theorem shown in Lemma~\ref{lem:minmax-chebyshev-value}, but with the maximum in the problem taken over a finite set of points. 
\end{proof}

\begin{lemma} \label{lem:exact-d-bounds}
    \[ \min_{\substack{p \in P_d \\ p(0) = 1}} \max_{t \in \{1,\dots,n\}} |p(t)| = \min_{\substack{p \in P_d \\ p(0) = 1}} \max_{x \in [1,n]} |p(x)| = \frac{2}{q(n)^d + q(n)^{-d}}\]
    if and only if one of the cases hold: $d = 1$, $d = 2$ and $n$ is odd, or $d = 3$ and $n \equiv 1 \pmod 4$.
\end{lemma}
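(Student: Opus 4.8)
The plan is to characterize equality through the extrema of the unique continuous minimizer. Recall from Lemma~\ref{lem:minmax-chebyshev-value} that $\hat T_d(x) = T_d(m(x))/T_d(m(0))$ with $m(x) = \frac{2x-(n+1)}{n-1}$ mapping $[1,n]$ onto $[-1,1]$, that $\hat T_d$ is the \emph{unique} minimizer of $\min_{p\in P_d,\,p(0)=1}\max_{x\in[1,n]}|p(x)|$, and that this minimax value equals $V \coloneqq \frac{2}{q(n)^d + q(n)^{-d}}$. Using $T_d(\cos\theta)=\cos(d\theta)$, one sees that $|\hat T_d(x)| = V$ holds exactly at the $d+1$ points $x_k \coloneqq m^{-1}(\cos(k\pi/d))$, $k=0,\dots,d$, and $|\hat T_d(x)| < V$ at every other point of $[1,n]$; moreover $\hat T_d(x_k) = (-1)^k V$, and $x_0 = n$, $x_d = 1$ are always integers. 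So only the interior extrema $x_1,\dots,x_{d-1}$ are in question.

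The first step is to show that $\min_{p\in P_d,\,p(0)=1}\max_{t\in[n]}|p(t)| = V$ if and only if every $x_k$ is an integer. For the ``if'' direction: if all $x_k\in\mathbb{Z}$, then $\hat T_d$ attains the value $V = \max_{t\in[n]}|\hat T_d(t)|$ at the $d+1$ integers $x_0,\dots,x_d$ with alternating signs, so by the discrete equioscillation theorem (Lemma~\ref{lem:discrete-equioscillate}) $\hat T_d$ minimizes the discrete problem, giving discrete value $V$. For the ``only if'' direction: suppose the discrete minimax equals $V$ and let $p^*$ be a discrete minimizer; then $\max_{x\in[1,n]}|p^*(x)| \ge \max_{t\in[n]}|p^*(t)| = V = \min_{p}\max_{x\in[1,n]}|p(x)|$, so $p^*$ also achieves the continuous minimax and hence $p^* = \hat T_d$ by uniqueness. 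Applying Lemma~\ref{lem:discrete-equioscillate} to $p^* = \hat T_d$ produces $d+1$ distinct integers in $[1,n]$ at which $|\hat T_d|$ equals $V$; since $|\hat T_d|$ attains $V$ at exactly the $d+1$ points $x_0,\dots,x_d$, these integers must be precisely $\{x_0,\dots,x_d\}$, so each $x_k$ is an integer.

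The second step is the arithmetic of when $x_k = \frac{(n-1)\cos(k\pi/d) + (n+1)}{2}$ is an integer for all $k$. For $d=1$ there are no interior extrema, so equality always holds. For $d=2$ the lone interior extremum is $x_1 = \frac{n+1}{2}$, an integer iff $n$ is odd. For $d=3$, from $\cos(\pi/3)=\tfrac12$ and $\cos(2\pi/3)=-\tfrac12$ we get $x_1 = \frac{3n+1}{4}$ and $x_2 = \frac{n+3}{4}$, both integers iff $n\equiv 1\pmod 4$. For $d\ge 4$ I will invoke Niven's theorem: the only rational values of $\cos(r\pi)$ with $r$ rational are $0,\pm\tfrac12,\pm 1$; since $\pi/d < \pi/3$ for $d\ge 4$ we have $\cos(\pi/d)\in(\tfrac12,1)$, which is disjoint from $\{0,\pm\tfrac12,\pm 1\}$, so $\cos(\pi/d)$ is irrational, and as $n-1\ge 1$ the point $x_1$ is irrational, hence never an integer. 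Combining with the first step yields exactly the stated cases.

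I expect the main obstacle to be the bookkeeping in the first step: one must be careful to use the fact that $|\hat T_d|$ attains its maximum on $[1,n]$ at \emph{only} the $d+1$ points $x_k$, so that the discrete equioscillation set is forced to coincide with $\{x_0,\dots,x_d\}$ rather than merely intersect it, and to use the uniqueness of the continuous minimizer to identify $p^*$ with $\hat T_d$. The remaining ingredient requiring care is the $d\ge 4$ argument, namely citing Niven's theorem and checking $\cos(\pi/d)>\tfrac12$ for $d\ge 4$; the small-$d$ computations are routine.
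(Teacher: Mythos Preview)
Your approach matches the paper's: identify the continuous minimizer $\hat T_d$, reduce the question to whether its $d+1$ extrema $x_k$ lie in $\mathbb{Z}$, and decide that via Niven's theorem with small-$d$ case analysis. (The paper happens to use $k=d-1$ rather than your $k=1$ for $d\ge 4$; both choices work.)

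There is, however, a genuine gap in your ``only if'' direction. From
\[
\max_{x\in[1,n]}|p^*(x)| \;\ge\; \max_{t\in[n]}|p^*(t)| \;=\; V \;=\; \min_{p}\max_{x\in[1,n]}|p(x)|
\]
you conclude that $p^*$ achieves the continuous minimax. But the displayed inequality holds trivially for \emph{every} feasible polynomial; it gives $\max_{x\in[1,n]}|p^*(x)|\ge V$, not the $\le V$ you would need to force $p^*=\hat T_d$ via uniqueness. As written, nothing prevents $p^*$ from being a discrete minimizer with strictly larger continuous sup-norm than $\hat T_d$.

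The repair is to bypass the arbitrary discrete minimizer $p^*$ and argue directly about $\hat T_d$. Since $x_0=n$ and $x_d=1$ are always integers with $|\hat T_d(x_0)|=|\hat T_d(x_d)|=V$, one always has $\max_{t\in[n]}|\hat T_d(t)|=V$. Hence if the discrete minimax equals $V$, then $\hat T_d$ is itself a discrete minimizer, and Lemma~\ref{lem:discrete-equioscillate} applied to $\hat T_d$ yields $d+1$ integers in $[1,n]$ at which $|\hat T_d|$ attains its discrete maximum $V$ with alternating signs. As $|\hat T_d|=V$ on $[1,n]$ only at $x_0,\dots,x_d$, these integers must be exactly the $x_k$. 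With this change your proof is complete and essentially identical to the paper's.
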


\begin{proof}
    $T_d$ achieves its maxima $\max_{x \in [-1,1]} T_d(x)$ at points $x_k = \cos (\frac{k \pi}{d})$ for $0 \leq k \leq d$, so that the extrema of $\hat T_d(x)$ occur at $\hat x_k = \frac{1}{2} \left((n-1) \cos (\frac{k \pi}{d}) + (n+1)\right)$.
    Moreover, $\hat T_d$ is the unique polynomial that achieves the optimal minimax value for the continuous minimax problem by Lemma~\ref{lem:minmax-chebyshev-value}.
    By Lemma~\ref{lem:discrete-equioscillate}, then, $\hat T_d$ achieves the exact minimax error for our discrete problem if and only if each $\hat x_k$ is an integer (by construction of $\hat T_d$, equioscillation is already achieved).
    By Niven's theorem, the only values $0 \leq \frac{k}{d} \leq 1$ such that $\cos(\frac{k\pi}{d})$ are rational are $\frac{k}{d} = 0, \frac{1}{3}, \frac{1}{2}, \frac{2}{3},$ or $1$.
    We can then break into cases over $d$ and $k$ to determine when $\hat x_k$ is an integer.
\begin{itemize}
  \item $k = 0$: then $\hat x_0 = n$, so the statement is true. 
  \item $d$ is even, and $k = \tfrac{d}{2}$: then $\cos \left(\tfrac{k\pi}{d}\right) = 0$, so $\hat x_k = \frac{n+1}{2},$ which is an integer if and only if $n$ is odd.

  \item $3 \mid d,$ and $k = \tfrac{d}{3}$: $\cos\left(\tfrac{k\pi}{d}\right) = \tfrac{1}{2}$, so
  \[
  \hat x_k = \frac{(n-1)(\tfrac{1}{2}) + (n+1)}{2} = \frac{3n + 1}{4},
  \]
  which is an integer if and only if $n \equiv 1 \pmod{4}$.

  \item $3 \mid d,$ and $k = \tfrac{2d}{3}$: $\cos\left(\tfrac{k\pi}{d}\right) = -\tfrac{1}{2}$, so
  \[
  \hat x_k = \frac{(n-1)(-\tfrac{1}{2}) + (n+1)}{2} = \frac{n + 3}{4},
  \]
  which is an integer if and only if $n \equiv 1 \pmod{4}$.
  \item $k = d$: then $\hat x_d = 1$, so the statement is true. 
\end{itemize}

For all other $k$ and $d$ pairs, $\cos(k\pi/d)$ is irrational, and thus $\hat x_k$ cannot be an integer.
Now, if $d = 1$, then $\hat x_0$ and $\hat x_1$ are both integers, so the statement holds.
If $d = 2$, $\hat x_1$ is an integer if and only if $n$ is odd. 
If $d = 3$, $\hat x_1$ and $\hat x_2$ are integers if and only if $n \equiv 1 \pmod{4}$.
For $d > 3$, $\hat x_{d-1}$ does not fall into any of the cases. 
\end{proof}

\begin{proof}[Proof of Proposition~\ref{prop:exact-small-d}]
    We have
    \begin{align*}
        \min_{A \in \Lorder} \error(A) &= \min_{\substack{p \in P_d \\ p(0) = 1}} \max_{t \in [n]} p(t)^2 & \text{(by Lemma~\ref{lem:intersection-minimax-problem})} \\
        &= \left(\min_{\substack{p \in P_d \\ p(0) = 1}} \max_{t \in [n]} |p(t)|\right)^2 & \text{(by increasing monotonicity of $f(x) = x^2$)}\\
        &= \frac{4}{\left(q(n)^d + q(n)^{-d}\right)^2},
    \end{align*}
    where the last line follows by Lemma~\ref{lem:exact-d-bounds} if and only if $d = 1$, $d = 2$ and $n$ is odd, or $d = 3$ and $n \equiv 1 \pmod 4$.
\end{proof}


\end{document}